\documentclass[11pt]{article}
\usepackage{hyperref}
\usepackage{times}  
\usepackage{mathpazo}
\usepackage{amssymb,amsmath,amsthm}
\usepackage{epsfig}
\usepackage{tcolorbox}
\usepackage{lipsum} 
\usepackage{enumitem}
\usepackage{xspace}
\usepackage{tikz}
\usepackage[T1]{fontenc}

\tcbset{ 
  halign=justify, 
  center, 
  colback=gray!10, 
  colframe=black, 
  boxrule=0.5pt 
}

\newtheorem{theorem}{Theorem}[section]
\newtheorem{proposition}[theorem]{Proposition}
\newtheorem{definition}[theorem]{Definition}

\newtheorem{lemma}[theorem]{Lemma}

\newtheorem{corollary}[theorem]{Corollary}

\newtheorem{remark}[theorem]{Remark}

\def\lsoft{{l\kern-0.035cm\char39\kern-0.03truecm}}

\newcommand{\qedsymb}{\hfill{\rule{2mm}{2mm}}}
\renewenvironment{proof}[1][]{\begin{trivlist}
\item[\hspace{\labelsep}{\bf\noindent Proof#1:\/}] }{\qedsymb\end{trivlist}}

\def\calG{{\cal G}}
\def\calF{{\cal F}}

\def\calP{{\cal P}}

\def\R{\mathbb{R}}

\def\N{\mathbb{N}}

\newcommand{\true}{\mathsf{True}}
\newcommand{\false}{\mathsf{False}}

\newcommand{\NP}{\mathsf{NP}}

\newcommand{\coNPpoly}{\mathsf{coNP/poly}}

\newcommand{\YES}{\mathsf{YES}}
\newcommand{\NO}{\mathsf{NO}}

\newcommand{\eps}{\epsilon}
\renewcommand{\epsilon}{\varepsilon}

\newcommand{\Fset}{\mathbb{F}}         

\newcommand{\URFC}{\textsc{Uniformly Rainbow Free Coloring}\xspace}
\newcommand{\GURFC}{\textsc{Generalized Uniformly Rainbow Free Coloring}\xspace}

\newcommand{\ListCol}{\textsc{List Coloring}\xspace}
\newcommand{\ListHCol}{\textsc{List $H$-Coloring}\xspace}
\newcommand{\CLC}{\textsc{Constrained List Coloring}\xspace}
\newcommand{\CC}{\textsc{Constrained Coloring}\xspace}
\newcommand{\Col}{\textsc{Coloring}\xspace}
\newcommand{\lClique}{\textsc{-DisjCliques}\xspace}
\newcommand{\Clique}{\textsc{DisjCliques}\xspace}
\newcommand{\SAT}{\textsc{SAT}\xspace}
\newcommand{\NAE}{\textsc{NAE}\xspace}
\newcommand{\UR}{\textsc{UR}\xspace}
\newcommand{\NUR}{\textsc{URF}\xspace}

\newcommand{\OR}{\textsc{or}\xspace}

\begin{document}

\title{{\bf Kernelization Bounds for Constrained Coloring}}

\author{
Ishay Haviv\thanks{The Academic College of Tel Aviv-Yaffo, Tel Aviv, Israel.}
}

\date{}

\maketitle

\begin{abstract}
We study the kernel complexity of constraint satisfaction problems over a finite domain, parameterized by the number of variables, whose constraint language consists of two relations: the non-equality relation and an additional permutation-invariant relation $R$. We establish a conditional lower bound on the kernel size in terms of the largest arity of an \OR relation definable from $R$. Building on this, we investigate the kernel complexity of uniformly rainbow free coloring problems. In these problems, for fixed positive integers $d$, $\ell$, and $q \geq d$, we are given a graph $G$ on $n$ vertices and a collection $\calF$ of $\ell$-tuples of $d$-subsets of its vertex set, and the goal is to decide whether there exists a proper coloring of $G$ with $q$ colors such that no $\ell$-tuple in $\calF$ is uniformly rainbow, that is, no tuple has all its sets colored with the same $d$ distinct colors. We determine, for all admissible values of $d$, $\ell$, and $q$, the infimum over all values $\eta$ for which the problem admits a kernel of size $O(n^\eta)$, under the assumption \mbox{$\NP \nsubseteq \coNPpoly$}. As applications, we obtain nearly tight bounds on the kernel complexity of various coloring problems under diverse settings and parameterizations. This includes graph coloring problems parameterized by the vertex-deletion distance to a disjoint union of cliques, resolving a question of Schalken (2020), as well as uniform hypergraph coloring problems parameterized by the number of vertices, extending results of Jansen and Pieterse (2019) and Beukers (2021).
\end{abstract}


\section{Introduction}

Constraint satisfaction problems form a central concept in theoretical computer science, interacting with foundational areas in the field, such as approximation algorithms, hardness of approximation, fine-grained complexity, parameterized complexity, and kernelization. In the {\em constraint satisfaction problem} associated with a relation $R \subseteq D^r$ of arity $r$ over a finite domain $D$, the input consists of a set of variables $V$ and a collection $\calF \subseteq V^r$ of $r$-tuples of variables, called {\em constraints}, and the goal is to decide whether there exists an assignment $c: V \to D$ that satisfies all constraints in $\calF$, meaning that for every $r$-tuple $(x_1, \ldots, x_r) \in \calF$, it holds that \mbox{$(c(x_1), \ldots, c(x_r)) \in R$.} More broadly, a constraint satisfaction problem can be defined with respect to a set of relations over $D$, called the {\em constraint language}, where each constraint is associated with one of them. This framework naturally encompasses central decision problems studied in theoretical computer science, such as bounded-width satisfiability, graph and hypergraph coloring, and many others. A remarkable result in this context is the dichotomy theorem, which asserts that every constraint satisfaction problem is either solvable in polynomial time or $\NP$-hard. This was proved for the Boolean domain in 1978 by Schaefer~\cite{Schaefer78}, and more recently, for all finite domains, by Bulatov~\cite{Bulatov17} and Zhuk~\cite{Zhuk20} independently.

The present paper studies constraint satisfaction problems from the perspective of {\em kernelization} (see, e.g.,~\cite{KernelBook19}). This theme, situated in the area of parameterized complexity, seeks to determine to what extent instances of a computationally hard problem can be compressed with respect to a chosen parameter of the instances. Specifically, a kernelization algorithm for a parameterized problem is given as input a pair $(x,k)$, where $x$ is the main input and $k$ is the associated parameter, and produces in polynomial time an equivalent instance, whose size is bounded by a function that depends solely on $k$. A major challenge in the field is to estimate, under plausible complexity assumptions, the smallest possible growth rate of the kernel size of a given parameterized problem. When parameterized by the number of variables $n$, a constraint satisfaction problem associated with a constraint language of relations of arity at most $r$ admits a trivial kernelization algorithm that produces instances with $O(n^r)$ constraints, simply by removing duplicate constraints. Determining the smallest possible kernel size for such problems has attracted significant attention in recent years, e.g.,~\cite{DellM14,JansenP19sparse,LagerkvistW20,ChenJP20,JansenW24}, and is the driving force behind this work.

The kernel complexity of constraint satisfaction problems over the Boolean domain $D=\{0,1\}$ has been studied extensively in the literature. A remarkable result, proved by Dell and van Melkebeek~\cite{DellM14}, states that for every integer $k \geq 3$ and for any real $\eps >0$, the satisfiability problem on CNF formulas with clauses of size $k$, when parameterized by the number of variables $n$, does not admit a kernel of size $O(n^{k-\eps})$ unless \mbox{$\NP \subseteq \coNPpoly$}. Since this containment is known to imply the collapse of the polynomial-time hierarchy~\cite{Yap83}, it follows that the trivial kernel of size $O(n^k)$ is presumably near-optimal. Yet, for other variants of the satisfiability problem, smaller kernels are known to exist. For example, Jansen and Pieterse~\cite{JansenP17,JansenP19sparse} showed that the Not All Equal satisfiability problem on CNF formulas with clauses of size $k \geq 3$, when parameterized by the number of variables $n$, admits a kernel with $O(n^{k-1})$ clauses, which can be encoded in $O(n^{k-1} \cdot \log n)$ bits. This was achieved via an algebraic approach they introduced in~\cite{JansenP17} based on low-degree polynomials expressing the satisfiability constraint. They further provided a nearly matching lower bound for the problem, ruling out any kernel of size $O(n^{k-1-\eps})$ for a constant $\eps>0$ unless $\NP \subseteq \coNPpoly$ (see also~\cite{ChenJP20}). More generally, Chen, Jansen, and Pieterse~\cite{ChenJP20} provided a criterion for when an intractable Boolean constraint satisfaction problem admits a non-trivial kernel. Letting $r$ denote the maximum arity of the relations in the constraint language at hand, they showed that if an \OR relation of arity $r$ is definable from one of them, in the sense that it has exactly one falsifying assignment, then no kernel of size $O(n^{r-\eps})$ exists for $\eps>0$ unless $\NP \subseteq \coNPpoly$, whereas otherwise, the number of constraints can be efficiently reduced to $O(n^{r-1})$. This result was used in~\cite{ChenJP20} to obtain an exact conditional characterization of the kernel complexity of all Boolean constraint satisfaction problems with relations of arity at most three. The kernel complexity of constraint satisfaction problems over finite domains of size larger than $2$ is currently much less understood. To address these cases, Lagerkvist and Wahlstr{\"o}m~\cite{LagerkvistW20} generalized the approach of~\cite{ChenJP20} using tools from universal algebra. Among their results lies a full conditional characterization of the finite-domain constraint satisfaction problems that admit a kernel whose number of constraints grows linearly with the number of variables.

The study of the kernel complexity of non-Boolean constraint satisfaction problems is strongly motivated by graph coloring problems, which provide a rich source of kernelization challenges. For an integer $q$, the $q$-\Col problem asks whether a given graph admits a {\em proper} coloring with $q$ colors, i.e., an assignment of colors to the vertices such that adjacent vertices receive distinct colors. This problem, which is well known to be $\NP$-hard for every $q \geq 3$, corresponds to the constraint satisfaction problem associated with the {\em non-equality} relation of arity $2$ over the domain $[q]= \{1,\ldots,q\}$. When parameterized by the number of vertices $n$, the problem admits a trivial kernel of size $O(n^2)$, and a result of Jansen and Pieterse~\cite{JansenP17,JansenP19color} asserts that, for every $q \geq 3$ and for any $\eps >0$, it admits no kernel of size $O(n^{2-\eps})$ unless $\NP \subseteq \coNPpoly$. This result has inspired natural extensions and generalizations. For example, such a lower bound was obtained for certain (list) $H$-\Col problems, which ask whether an input graph admits a homomorphism to a fixed target graph $H$, possibly subject to lists of allowed vertices in $H$ for each vertex~\cite{ChenJOPR23,BerkmanH25}. Another extension, due to Beukers~\cite{Beukers21}, concerns the $q$-\Col problem on $3$-uniform hypergraphs with $n$ vertices, corresponding to the relation of arity $3$ that consists of all triples in $[q]^3$ except those with all equal entries. Under the assumption \mbox{$\NP \nsubseteq \coNPpoly$}, he proved that for every $q \geq 3$ and for any $\eps >0$, the problem admits no kernel of size $O(n^{3-\eps})$, essentially matching the trivial upper bound.

It is noteworthy that recent research on the kernel complexity of constraint satisfaction problems has revealed intimate connections to additional structural quantities of interest. A prominent example is the {\em non-redundancy} of such problems, studied, e.g., in~\cite{BessiereCK20,Carbonnel22,BrakensiekG25,BGJLW25}, which measures the maximum number of constraints in a non-redundant instance on $n$ variables, that is, an instance in which no constraint is implied by the others. At first glance, this quantity may appear unrelated to the optimal kernel size, since a kernelization algorithm is not restricted to removing redundant constraints and may modify the instance in arbitrary ways, even introducing new constraints. Furthermore, the kernelization task is algorithmic, so tractable constraint satisfaction problems admit a kernel of constant size, whereas their non-redundancy can grow polynomially. Also, while combining several tractable relations in the constraint language of a constraint satisfaction problem can render the problem intractable, and thus result in non-constant kernel complexity, the non-redundancy of the problem is dominated by the maximal non-redundancy of the individual relations. Nevertheless, recent works indicate that in many cases, the two quantities are closely tied. Known kernelization algorithms typically operate by eliminating redundant constraints, and upper bounds on non-redundancy can often be transformed into efficient kernelization algorithms. Clarifying the precise relationship between these two measures of constraint satisfaction problems is a compelling question in the area.

\subsection{Our Contribution}

This paper addresses the kernel complexity of constraint satisfaction problems over a finite domain of arbitrary size $q$, whose constraint language consists of two relations: the non-equality relation of arity $2$ over $[q]$, corresponding to the $q$-\Col problem, and an additional relation $R \subseteq [q]^r$ of arity $r$. Our first contribution establishes a general conditional lower bound on the kernel complexity of such problems. Building on this result, we investigate a family of problems, which we refer to as uniformly rainbow free coloring, and obtain nearly matching upper and lower bounds on their kernel complexity. We then use these bounds to determine the kernel complexity of various coloring problems under diverse settings and parameterizations. This includes graph coloring problems parameterized by the vertex-deletion distance to a disjoint union of cliques, as well as uniform hypergraph coloring problems parameterized by the number of vertices. A detailed description of our contribution is given below.

\subsubsection{Constrained Coloring}

The constrained coloring problem associated with a relation $R$ is defined as follows.

\begin{definition}\label{def:R-CC}
For positive integers $q$ and $r$, let $R \subseteq [q]^r$ be a relation of arity $r$ over $[q]$.
The input of the $R$-\CC problem consists of a graph $G=(V,E)$ and a collection $\calF \subseteq V^r$ of $r$-tuples of vertices.
The goal is to decide whether there exists a proper coloring $c:V \to [q]$ of $G$, such that for every $r$-tuple $(x_1, \ldots, x_r) \in \calF$, it holds that $(c(x_1), \ldots, c(x_r)) \in R$. The number of constraints in the instance $(G,\calF)$ is $|E|+|\calF|$.
\end{definition}

Our first result establishes a lower bound on the kernel complexity of $R$-\CC problems for permutation-invariant relations $R$ with respect to the number of vertices parameterization. A relation $R \subseteq [q]^r$ is called {\em permutation-invariant} if every permutation of $[q]$ applied to the entries of a tuple does not affect its membership in $R$ (see Definition~\ref{def:perm}). For an integer $k$, we say that {\em an \OR relation of arity $k$ is definable from $R \subseteq [q]^r$} if there exist $r$ subsets $D_1, \ldots, D_r$ of $[q]$, with $k$ of them of size $2$ and the others of size $1$, so that among the $2^k$ tuples in the product $D_1 \times \cdots \times D_r$, exactly one does not belong to $R$ (see Definition~\ref{def:OR definable}). We prove that for every permutation-invariant relation $R$ that defines an \OR relation of arity at least $3$, the $R$-\CC problem parameterized by the number of vertices is unlikely to admit a kernel whose size is bounded by a polynomial of degree smaller than that arity. In fact, here and throughout, the lower bounds apply not only to kernels but also to {\em compressions}, i.e., algorithms that reduce an instance of the problem at hand to an instance of {\em any} problem, not necessarily itself (see Section~\ref{sec:kernel}).

\begin{theorem}\label{thm:Lower-R-CC}
For positive integers $q$, $r$, and $k \geq 3$, let $R \subseteq [q]^r$ be a permutation-invariant relation of arity $r$ over $[q]$ from which an \OR relation of arity $k$ is definable. Then, for any real $\eps >0$, the $R$-\CC problem parameterized by the number of vertices $n$ does not admit a compression of size $O(n^{k-\eps})$ unless $\NP \subseteq \coNPpoly$.
\end{theorem}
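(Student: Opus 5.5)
We will reduce from a problem with a known kernel lower bound via a linear-parameter transformation. The natural source is $k$-\SAT restricted to clauses with exactly $k$ literals (or, more conveniently, the \OR-CSP of arity $k$ over a Boolean domain with all sign patterns). By Dell and van Melkebeek~\cite{DellM14}, for $k \geq 3$ this problem parameterized by the number of variables $N$ admits no compression of size $O(N^{k-\eps})$ unless $\NP \subseteq \coNPpoly$. It therefore suffices to give a polynomial-time reduction mapping a $k$-CNF formula on $N$ variables to an equivalent $R$-\CC instance on $n = O(N)$ vertices; composing with any compression of size $O(n^{k-\eps})$ would then yield a compression for $k$-\SAT of size $O(N^{k-\eps})$.

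\textbf{Construction.} Let $D_1,\ldots,D_r$ be the sets given by Definition~\ref{def:OR definable}, with $D_i=\{a_i,b_i\}$ for $i\in I$, $|I|=k$, and $D_i=\{a_i\}$ otherwise. Let $(t_1,\ldots,t_r)$ be the unique tuple of $D_1\times\cdots\times D_r$ outside $R$. Build a \emph{palette clique} $P=\{p_1,\ldots,p_q\}$ of $q$ vertices, pairwise adjacent, so every proper coloring colors $P$ bijectively. By permutation-invariance of $R$ (and of the non-equality relation), we may assume without loss of generality that $p_j$ receives color $j$. For each Boolean variable $x$ we create a single vertex $v_x$, and for each ``type'' $(i,\text{sign})$ we need $v_x$ to be restricted to two colors $\{a_i,b_i\}$. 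Since the sets $D_i$ differ over $i$, instead create for each variable $x$ and each $i\in I$ a vertex $v_{x,i}$ adjacent to all $p_j$ with $j\notin D_i$, forcing $c(v_{x,i})\in D_i$; this costs $O(kN)=O(N)$ vertices. Encode $x=\true$ as color $a_i$ on $v_{x,i}$. To keep the copies consistent across $i$, we need an equality-type gadget between $v_{x,i}$ and $v_{x,i'}$; this is done with non-equality edges through auxiliary vertices (e.g.\ when the two-element lists are disjoint or overlapping, a constant number of list-restricted vertices on a path implements the bijection $a_i\leftrightarrow a_{i'}$, $b_i\leftrightarrow b_{i'}$; with two-element lists, ``$\neq$'' between list-vertices already realizes a bijection, and compositions of such bijections give any desired one). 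Negated literals are handled by also creating complementary copies $\bar v_{x,i}$ linked by the swapping bijection. For a clause $(\ell_1\vee\cdots\vee\ell_k)$, put the tuple whose $i$-th coordinate ($i\in I$) is the copy of the $i$-th literal arranged so that the falsifying assignment corresponds exactly to color $t_i$, and whose other coordinates are the palette vertices $p_{a_i}$. Then the constraint is violated iff all literals are false.

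\textbf{Correctness and main obstacle.} Equivalence is straightforward: a satisfying assignment induces a coloring (palette identity, copies as dictated), and conversely any proper coloring, after renormalizing the palette to the identity via permutation-invariance, yields consistent truth values satisfying every clause. The vertex count is $q+O(kN)=O(N)$ as $q,k,r$ are constants. The delicate step is the consistency gadget: showing that with only non-equality edges and the palette one can force, for two-element lists $D_i,D_{i'}$, a prescribed bijection between the colors of $v_{x,i}$ and $v_{x,i'}$. I expect to handle it by noting that two vertices with the same two-element list joined by an edge are forced into complementary colors, and that transferring between different lists $\{a,b\}$ and $\{a',b'\}$ can be done via an intermediate vertex with list $\{a,b'\}$ or similar (using cases $|\{a,b\}\cap\{a',b'\}|\in\{0,1,2\}$); this is where care is needed, but it involves only constantly many vertices per variable.
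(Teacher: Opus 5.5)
Your plan is essentially the paper's. You reduce from $k$-\SAT, and every variable gets, for each coordinate $i$ with $|D_i|=2$, a vertex and a complementary vertex with list $D_i$, joined by an edge. Each clause becomes the tuple that picks whichever copy takes the excluded color $t_i$ when its literal is false. Lists are simulated by a palette clique via permutation-invariance. The paper isolates that last step as Lemma~\ref{lemma:non-listCC}. Reusing the palette vertices $p_{a_i}$ for the singleton coordinates, instead of fresh vertices, is fine, since tuples may repeat vertices.

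The problem is the one step you leave open: the consistency gadget between $u=v_{x,i}$ and $w=v_{x,i'}$. The mechanism you sketch for it does not work. An edge between two list-vertices realizes a bijection only when their lists coincide. If $D_i\cap D_{i'}=\emptyset$, the edge imposes nothing. If the lists share one color, the edge forbids a single pair. So composing such edges can never carry you from $D_i$ to a different $D_{i'}$. Likewise, a single intermediate vertex with list $\{a,b'\}$ adjacent to both $u$ and $w$ forbids only the pairs with $\{c(u),c(w)\}=\{a,b'\}$. When $D_i\neq D_{i'}$ that is one pair, not a bijection.

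The step is true and can be completed in your spirit. To force $a_i\leftrightarrow a_{i'}$ and $b_i\leftrightarrow b_{i'}$, you must forbid exactly $(a_i,b_{i'})$ and $(b_i,a_{i'})$. For each such pair $(\gamma,\delta)$:
\begin{itemize}
\item if $\gamma=\delta$, add the edge $uw$;
\item otherwise, add a fresh common neighbor of $u$ and $w$ with list $\{\gamma,\delta\}$, enforced through the palette.
\end{itemize}
The common neighbor kills exactly the pairs with $\{c(u),c(w)\}=\{\gamma,\delta\}$. One checks three things:
\begin{itemize}
\item the reversed pair $(\delta,\gamma)$ is either excluded by the lists or is the other pair to be forbidden;
\item the edge likewise kills nothing beyond these two pairs;
\item $(a_i,a_{i'})$ and $(b_i,b_{i'})$ always extend to the added vertices.
\end{itemize}

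This gadget differs from the paper's Lemma~\ref{lemma:gadget}. That lemma forbids an arbitrary single pair using paths of length $3$ or $4$ through auxiliary colors outside the pair, and so needs $q\geq 3$. Yours uses only colors of $D_i\cup D_{i'}$, adds at most two vertices, and works verbatim for $q=2$. The paper's lemma, as stated, does not cover $q=2$, although there plain edges suffice. As written, however, your proposal stops short of this construction at the only non-routine point of the proof.
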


Several remarks are in order here.
First, for every relation $R \subseteq [q]^r$ of arity $r \geq 3$, the $R$-\CC problem parameterized by the number of vertices $n$ admits a trivial kernel of size $O(n^r)$, obtained by eliminating repeated constraints. Theorem~\ref{thm:Lower-R-CC} shows that if an \OR relation of arity $r$ is definable from $R$, then this trivial kernel is likely near-optimal. It turns out that this condition is essential for such a lower bound, as a result of Carbonnel~\cite{Carbonnel22} implies that if the largest arity of an \OR relation definable from $R$ is smaller than $r$, then the problem admits a smaller kernel with $O(n^{r-\eps})$ constraints for $\eps = 2^{1-r}$ (see Proposition~\ref{prop:Carbonnel}).
Second, if one is interested in a lower bound on the compression size of a {\em list-coloring} variant of the $R$-\CC problem, where the input specifies a set of available colors for each vertex, then the assumption of Theorem~\ref{thm:Lower-R-CC} that the relation $R$ is permutation-invariant can be omitted. In the sequel, we first prove a lower bound for this list-coloring variant for an arbitrary relation $R$ and then apply it to derive Theorem~\ref{thm:Lower-R-CC} (see Theorem~\ref{thm:Lower-R-CLC} and Lemma~\ref{lemma:non-listCC}).
Finally, we stress that the lower bound on the compression size of the $R$-\CC problem, given in  Theorem~\ref{thm:Lower-R-CC}, is governed by the arity of the \OR relations definable from the relation $R$, even when the constraint satisfaction problem involving only $R$ is solvable in polynomial time. This outcome arises from the interplay between the $R$-constraints and the non-equality constraints imposed by the proper coloring requirement. Moreover, the theorem sometimes yields a non-trivial lower bound even when the two individual components of the $R$-\CC problem are polynomial-time tractable, as may occur when $q=2$.

Before turning to our main applications of Theorem~\ref{thm:Lower-R-CC}, let us demonstrate its utility with a quick example concerning the $q$-\Col problem on $\ell$-uniform hypergraphs for fixed integers $\ell \geq 2$ and $q \geq 3$. In this problem, the input is a hypergraph on $n$ vertices with edges of size exactly~$\ell$, and the goal is to decide whether it admits a coloring with $q$ colors so that no edge is monochromatic. As alluded to earlier, for $\ell \in \{2,3\}$, for every $q \geq 3$, and for any $\eps>0$, the problem does not admit a compression of size $O(n^{\ell-\eps})$ unless $\NP \subseteq \coNPpoly$~\cite{JansenP19color,Beukers21}, nearly matching the trivial upper bound of $O(n^{\ell})$ on the kernel complexity. Theorem~\ref{thm:Lower-R-CC} enables us to extend the lower bound to all integers $\ell \geq 4$. To do so, consider the relation $R$ that consists of all $\ell$-tuples in $[q]^\ell$ whose entries are not all equal, and notice that the corresponding $R$-\CC problem coincides with the $q$-\Col problem on hypergraphs with edges of size $2$ and $\ell$. Observe that an \OR relation of arity $\ell$ is definable from $R$, as witnessed by restricting it to the product $\{1,2\} \times \{1,3\}^{\ell-1}$, where all $2^\ell$ tuples belong to $R$ except the all-$1$ tuple. Since $R$ is permutation-invariant, Theorem~\ref{thm:Lower-R-CC} implies that the $R$-\CC problem parameterized by the number of vertices $n$ does not admit a compression of size $O(n^{\ell-\eps})$ unless $\NP \subseteq \coNPpoly$. By a simple reduction to the $q$-\Col problem on $\ell$-uniform hypergraphs, without edges of size $2$, this yields a near-optimal lower bound on its compressibility, as summarized by the following statement.

\begin{theorem}\label{thm:Hyper}
For all integers $\ell \geq 2$ and $q \geq 3$, and for any real $\eps >0$, the $q$-\Col problem on $\ell$-uniform hypergraphs parameterized by the number of vertices $n$ does not admit a compression of size $O(n^{\ell-\eps})$ unless $\NP \subseteq \coNPpoly$.
\end{theorem}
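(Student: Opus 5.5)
The plan is to derive Theorem~\ref{thm:Hyper} from Theorem~\ref{thm:Lower-R-CC}, applied to the relation $R \subseteq [q]^\ell$ of all non-monochromatic $\ell$-tuples. The remaining work is a polynomial-size reduction that turns graph edges into $\ell$-uniform hyperedges while increasing the number of vertices only linearly.

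\textbf{The cases $\ell \in \{2,3\}$.} For $\ell=2$ the statement is the lower bound of~\cite{JansenP19color}; for $\ell=3$ it is the bound of~\cite{Beukers21}. Alternatively, $\ell=3$ also follows from the argument below. So assume $\ell \geq 3$.

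\textbf{The lower bound for $R$-\CC.} As noted before the statement, restricting $R$ to $\{1,2\}\times\{1,3\}^{\ell-1}$ defines an \OR relation of arity $\ell \geq 3$. Since $R$ is permutation-invariant, Theorem~\ref{thm:Lower-R-CC} shows that $R$-\CC parameterized by $n$ has no compression of size $O(n^{\ell-\eps})$ unless $\NP \subseteq \coNPpoly$.

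\textbf{The reduction.} Take an instance $(G,\calF)$ of $R$-\CC on $n$ vertices. Each $\ell$-tuple in $\calF$ is kept as a hyperedge on its underlying set. If that set has fewer than $\ell$ elements, pad it with fresh vertices forced to the same color, or handle the tuple directly: a tuple with all coordinates equal is trivially unsatisfiable, and in all other cases the constraint holds exactly when the vertices appearing in it are not monochromatic.

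Each graph edge $\{u,v\}$ must become a constraint of arity $\ell$ saying that $u$ and $v$ get different colors. Add a fixed gadget of $O(1)$ new vertices: a set $W=\{w_1,\dots,w_{\ell-2}\}$ that every $q$-coloring colors with a single common color, together with a way to make that common color ``unusable'' as the shared color of $u$ and $v$. Then the hyperedge $\{u,v\}\cup W$ forbids $u,v,W$ from being all equal, which is weaker than $u \neq v$. This is the main obstacle.

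I would first try the standard trick instead: make the padding vertices per color. Introduce a global palette of vertices $p_1,\dots,p_q$ that receive pairwise distinct colors in every valid coloring. The palette can be enforced via a gadget of $O(1)$ $\ell$-uniform hyperedges on $O(q\ell)$ vertices, built for example from a small uniquely-colorable hypergraph. For each color $i$ add $\ell-2$ fresh vertices forced to equal $p_i$, again via $O(1)$ hyperedges. Then replace $\{u,v\}$ by the $q$ hyperedges $\{u,v\}\cup W_i$, $i\in[q]$. Together these say exactly that $u,v$ are not both colored $i$ for any $i$, that is, $u\neq v$. Equality forcing such as ``$w=p_i$'' can be obtained by putting hyperedges between $w$ and the other $q-1$ palette classes. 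This uses the observation that $u \notin \{\text{colors of } p_j : j\neq i\}$ is expressible with hyperedges $\{u\}\cup W_j'$, where $W_j'$ is a monochromatic block of size $\ell-1$ colored $j$.

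\textbf{Conclusion.} The resulting hypergraph has $n+O(1)$ vertices and is $q$-colorable exactly when $(G,\calF)$ is a yes-instance. Composing this reduction with any compression of size $O(n^{\ell-\eps})$ for hypergraph $q$-\Col would give such a compression for $R$-\CC, which contradicts the bound above.
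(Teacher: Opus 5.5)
Your top-level plan is the paper's own. With $R$ the not-all-equal relation, $R$-\CC is exactly the $(1,\ell,q)$-\URFC problem. The paper also obtains its lower bound from Theorem~\ref{thm:Lower-R-CC} (via Lemma~\ref{lemma:NURdefinable} with $d=1$). What remains is a constant-size gadget that turns constraints on fewer than $\ell$ vertices into $\ell$-uniform hyperedges.

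\textbf{The gap.} You call this gadget the main obstacle, and then you never build it. You assert that a palette $p_1,\dots,p_q$ with forced distinct colors can be obtained ``from a small uniquely-colorable hypergraph,'' but you do not exhibit one. Your recipe for forcing a vertex $w$ to share the color of $p_i$ already needs monochromatic blocks $W_j'$ of size $\ell-1$ in each of the other $q-1$ colors. That is exactly the structure the palette was meant to supply, so the argument is circular. For $\ell\ge 3$ the single palette vertices $p_j$ do not help: a hyperedge $\{w\}\cup T$ excludes a color for $w$ only if all $\ell-1$ vertices of $T$ are forced to that color. As written, the reduction rests on an unproven existence claim at its only nontrivial step. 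Your padding of $\calF$-tuples with repeated vertices by ``fresh vertices forced to the same color'' has the same problem.

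\textbf{The missing idea} is the pigeonhole observation in the proof of Lemma~\ref{lemma:URFC<qCol}. Let $Z$ be $(\ell-1)q$ new vertices and make every $\ell$-subset of $Z$ a hyperedge. Then every proper $q$-coloring uses each color on exactly $\ell-1$ vertices of $Z$.
\begin{itemize}
\item The paper's route needs no rigid palette. It replaces every edge $e$ with $|e|<\ell$ (graph edges and tuples with repeated vertices alike) by all hyperedges $e\cup S$ with $S\subseteq Z$ and $|S|=\ell-|e|$. If $e$ were monochromatic in some color, the $\ell-1$ vertices of $Z$ with that color would contain such an $S$. Conversely, any coloring of $Z$ with classes of size $\ell-1$ leaves all these hyperedges non-monochromatic whenever $e$ is.
\item The same observation also rescues your route. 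Fix a partition of $Z$ into blocks $B_1,\dots,B_q$ of size $\ell-1$. For each $i$, add a vertex $w_i$ and the hyperedges $\{w_i\}\cup T$ for all $(\ell-1)$-subsets $T\neq B_i$ of $Z$. The color class of $c(w_i)$ inside $Z$ must then be $B_i$, so the blocks are monochromatic with pairwise distinct colors. The hyperedges $\{u,v\}\cup S_i$ with $S_i\subseteq B_i$, $|S_i|=\ell-2$, $i\in[q]$, then encode $u\neq v$ as you intended, and analogous paddings handle the tuples with repeated vertices.
\end{itemize}
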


It should be mentioned that the kernel complexity of the $q$-\Col problem on $\ell$-uniform hypergraphs parameterized by the number of vertices $n$ behaves differently when the number of colors $q$ is $2$. Indeed, it was shown in~\cite{JansenP17} that the $\NP$-hard cases of the $2$-\Col problem on $\ell$-uniform hypergraphs, with $\ell \geq 3$, admit a kernel with $O(n^{\ell-1})$ edges, which can be encoded in $O(n^{\ell-1} \cdot \log n)$ bits. We note that Theorem~\ref{thm:Hyper} can also be derived from our analysis of the coloring problems described next (see Section~\ref{sec:hyper}).

\subsubsection{Uniformly Rainbow Free Coloring}

The primary family of $R$-\CC problems considered in this paper is that of uniformly rainbow free colorings. For a graph $G=(V,E)$, a positive integer $q$, and a coloring $c:V \to [q]$, a set $A \subseteq V$ is called {\em rainbow} with respect to $c$ if $c$ assigns to the vertices of $A$ pairwise distinct colors. For positive integers $d$ and $\ell$, an $\ell$-tuple $(F_1, \ldots, F_\ell)$ of $d$-subsets of $V$ is called {\em uniformly rainbow} with respect to the coloring $c$ if all of its sets are rainbow and share the same set of colors under $c$. We define a family of computational problems as follows.

\begin{definition}\label{def:URFC}
Let $d$, $\ell$, and $q \geq d$ be positive integers.
The input of the $(d,\ell,q)$-\URFC problem consists of a graph $G=(V,E)$ and a collection $\calF$ of $\ell$-tuples of $d$-subsets of $V$.
The goal is to decide whether there exists a proper coloring $c:V \to [q]$ of $G$, such that no tuple in $\calF$ is uniformly rainbow with respect to $c$.
\end{definition}
\noindent
Note that the $(d,\ell,q)$-\URFC problem is defined for $q \geq d$, as otherwise no set of size $d$ can be rainbow, and all constraints in $\calF$ are trivially satisfied.
Note further that for any values of $d$, $\ell$, and $q$, the problem can be formulated as an $R$-\CC problem for a suitably defined permutation-invariant relation $R$ (see Definition~\ref{def:NUR}).

We provide nearly matching upper and lower bounds on the kernel complexity of the $(d,\ell,q)$-\URFC problem, parameterized by the number of vertices, for all admissible values of $d$, $\ell$, and $q$, under the complexity assumption \mbox{$\NP \nsubseteq \coNPpoly$}. We first introduce the following notation.

\begin{definition}\label{def:eta}
For positive integers $d$, $\ell$, and $q \geq d$, set
\[
\eta(d,\ell,q) =
\begin{cases}
d\ell, & \text{if } \ell \geq 2 \text{ and } q \geq d + 2, \\
d\ell - 1, & \text{if } \ell \geq 2,~ q=d+1, \text{ and } (d,\ell) \neq (1,2), \\
(d - 1)\ell, & \text{if } (\ell \geq 2 \text{ and } q = d \geq 2) \text{ or } (\ell = 1 \text{ and } d \geq 3), \\
2, & \text{if } \ell = 1,~d \leq 2, \text{ and } q \geq 3, \\
0, & \text{if } (q=2 \text{ and } d\ell \leq 2) \text{ or } q=1.
\end{cases}
\]
\end{definition}
\noindent
Note that the five cases in Definition~\ref{def:eta} are disjoint and cover all triples $(d,\ell,q)$ with $q \geq d$.
With this definition in place, we proceed to the formal statement.

\begin{theorem}\label{thm:Full-URFC}
For positive integers $d$, $\ell$, and $q \geq d$, set $\eta = \eta(d,\ell,q)$ as in Definition~\ref{def:eta}.
The $(d,\ell,q)$-\URFC problem parameterized by the number of vertices $n$ admits a kernel with $O(n^{\eta})$ constraints, while for any real $\eps>0$, it does not admit a compression of size $O(n^{\eta-\eps})$ unless $\NP \subseteq \coNPpoly$.
\end{theorem}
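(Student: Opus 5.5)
The plan is to prove upper and lower bounds separately for each of the five cases of Definition~\ref{def:eta}. Throughout, I view $(d,\ell,q)$-\URFC as an $R$-\CC problem. Here $R \subseteq [q]^{d\ell}$ is the permutation-invariant relation of $d\ell$-tuples that do \emph{not} consist of $\ell$ blocks of $d$ entries which are each rainbow and share a common color set. A constraint is an $\ell$-tuple of $d$-sets, so it can be written with $d\ell$ vertices. Removing duplicates then already gives a trivial kernel with $O(n^{d\ell})$ constraints.

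\textbf{Lower bounds.} These will come from Theorem~\ref{thm:Lower-R-CC}: for each case I exhibit an \OR relation of arity $\eta$ definable from $R$, with sets $D_1,\dots,D_{d\ell}$.
\begin{itemize}
\item $q\ge d+2$, $\ell\ge2$: put the first block on $\{i,d+i\}$ for $i\in[d]$. In every other block, position $i$ gets $\{i,d+2\}$ or a similar shifted choice. The aim is that exactly one choice, namely the all-$[d]$ coloring, makes all blocks rainbow on the same color set. Every deviation in one coordinate should break either rainbowness or set equality. I expect using the two spare colors $d+1,d+2$ to suffice, since a single deviation introduces a color absent from the other blocks.
\item $q=d+1$: only one spare color is available. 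I fix one coordinate to a singleton and let the remaining $d\ell-1$ coordinates range over $\{i,d+1\}$. Swapping to $d+1$ at one position then changes the color set unless it is compensated, and I choose the fixed singleton so that no compensation is possible.
\item $q=d$: each block must be a permutation of $[d]$, so one coordinate per block is forced by the others. I fix one singleton per block, giving arity $(d-1)\ell$.
\item $\ell=1$, $d\ge3$: the same construction as for $q=d$.
\end{itemize}
These constructions require $\eta\ge3$. The remaining small cases I handle directly. The case $\eta=2$ follows from the $q$-\Col lower bound of \cite{JansenP19color}, since the graph part alone is hard for $q\ge3$. Some cases with $\eta\le2$ and $\ell\ge2$, such as $(d,\ell,q)=(1,2,2)$, are absorbed into the $\eta=0$ case. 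Cases like $(d,\ell)=(2,2)$ with $q=d$, where $\eta=2$, also use the graph lower bound with $q\ge3$; I will need to check that $q=2$ cases with $\eta=2$ still contain a hard problem.

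\textbf{Upper bounds.} For $\eta=d\ell$ the trivial kernel suffices. For $\eta=0$, with $q\le2$ and $d\ell\le2$, I show polynomial-time solvability by reduction to 2-SAT or bipartiteness, and then output a constant-size instance. For $\eta=2$ with $\ell=1$, $d\le2$: a $1$-set is always rainbow, so a single constraint forbids everything and the instance is trivially NO. A $2$-set $\{u,v\}$ is rainbow iff $c(u)\neq c(v)$, so the constraint forces $u,v$ to share a color and can be handled by contracting $u$ and $v$. This leaves a graph with $O(n^2)$ edges.

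The cases $\eta=d\ell-1$ and $\eta=(d-1)\ell$ need the real work. I will follow the algebraic method of Jansen and Pieterse. For each constraint I find a polynomial over a suitable field, with degree $\eta$ in indicator variables $x_{v,a}=[c(v)=a]$, that vanishes exactly on satisfying proper colorings. The degree must be computed modulo the constraints that each vertex gets exactly one color and that the instance is properly colored. For $q=d$, a block is uniformly rainbow with the others once its first $d-1$ colors are determined, which suggests writing the violation indicator as a product of $\ell$ factors of degree $d-1$. For $q=d+1$, I eliminate one variable through the missing-color identity. I then keep a basis of the span of these polynomials over a linear-size ground set of monomials. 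This yields $O(n^{\eta})$ constraints by the standard linear-algebra argument. It can alternatively be obtained through the non-redundancy bounds from Carbonnel's method (Proposition~\ref{prop:Carbonnel}).

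I expect the main obstacle to be the polynomial construction for $q=d+1$. The reason is that ``same color set'' is a global condition, and reducing its degree by exactly one seems to need a clever use of the single missing color.
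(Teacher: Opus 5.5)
Your overall architecture is the paper's: definable \OR relations plus Theorem~\ref{thm:Lower-R-CC} when $\eta\ge 3$, the small cases handled separately, and either the trivial kernel or the Jansen--Pieterse polynomial method for the upper bounds. However, several load-bearing steps are wrong as stated or missing.

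\textbf{Small $q=2$ cases.} The $\eta=2$ cases $(1,3,2)$ and $(2,2,2)$ are not covered. The triple $(2,2,2)$ has $q=2$, so, contrary to your remark, the $q$-\Col bound does not apply, and Theorem~\ref{thm:Lower-R-CC} needs arity at least $3$. The missing idea is a reduction from $3$-\NAE (Theorem~\ref{thm:NAE}). Take the perfect matching $\{x_i,\overline{x_i}\}$ on the $2n$ literals, and map each clause $(y_1\vee y_2\vee y_3)$ to $(\{y_1\},\{y_2\},\{y_3\})$ for $(1,3,2)$, and to $(\{y_1,\overline{y_2}\},\{y_1,\overline{y_3}\})$ for $(2,2,2)$.

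\textbf{The $q=d+1$ \OR.} Your construction is not an \OR if the fixed entry in row $i_0$ is $\{i_0\}$. Switching any other row $i_1$ to $d+1$ in every block gives a second uniformly rainbow tuple. For example, with $d=\ell=2$ and $q=3$, both $(1,2\mid 1,2)$ and $(1,3\mid 1,3)$ lie in the product. It works only if the fixed entry, say $(1,1)$, is the spare color $d+1$ itself. Then block~$1$ is forced to $[d+1]\setminus\{1\}$, and row~$1$ of every other block is forced to $d+1$. After swapping colors $1$ and $d+1$, this is exactly the paper's choice $\beta_{1,j}=d+1$ and $\beta_{i,j}=1$ for $i\ge 2$.

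\textbf{The other \OR constructions.} For $q\ge d+2$, using $\{i,d+i\}$ in block~$1$ needs $q\ge 2d$, and it collides with $d+2$ at $i=2$ (e.g.\ $(1,4\mid 1,4)$ for $d=\ell=2$). Use $\{i,d+1\}$ in block~$1$ and $\{i,d+2\}$ elsewhere. For $q=d$ you also have to specify the deviation colors. Deviating to the fixed entry's color works, since every deviation then creates a repeat; this works for all $q\ge d$ and so also covers $\ell=1$.

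\textbf{Upper bounds.} The cases $\eta=d\ell-1$ and $\eta=(d-1)\ell$ remain a plan, since no polynomial is exhibited. For $q=d$ your product structure is right, but the ingredient that saves a degree per block is missing. The paper encodes color $i$ by $(1,\alpha_i,\ldots,\alpha_i^{m-1})$. It then uses $p_{m,t}(M)$, the determinant of the first $t$ rows of $M\in\Fset^{m\times t}$ with the first row replaced by ones. This polynomial has degree $t-1$, and on colored inputs it is nonzero iff the $t$ columns are distinct. The constant row of ones is exactly your ``modulo exactly one color''. A product of $\ell$ such $d\times d$ factors handles $q=d$, and a single factor handles $\ell=1$ for every $q\ge d$.

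\textbf{The case $q=d+1$.} This is the step you flag as the main obstacle. The point is that the color missing from block~$1$ is the \emph{linear} expression $a-y$, where $a$ is the sum of all $q$ code vectors and $y$ is the sum of the block-$1$ columns. Hence $p_{q,d}(M_1)\prod_{i\ge 2}p_{q,q}(M_i,a-y)$ is nonzero exactly on uniformly rainbow tuples. Its degree is $(d-1)+d(\ell-1)=d\ell-1$.

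\textbf{Carbonnel fallback.} Your fallback via Proposition~\ref{prop:Carbonnel} does not give these bounds. It yields only $O(n^{d\ell-2^{1-d\ell}})$ constraints, which is strictly above $n^{d\ell-1}$.
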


Let us unpack Theorem~\ref{thm:Full-URFC} in light of the cases arising from Definition~\ref{def:eta}. When $\ell \geq 2$ and $q \geq d+2$, the theorem shows that the $(d,\ell,q)$-\URFC problem parameterized by the number of vertices $n$ is unlikely to admit a kernel much smaller than the trivial one, whose number of constraints is bounded by $O(n^{d\ell})$. In contrast, when $q$ is either $d$ or $d+1$, a polynomially smaller kernel is achievable, apart from some exceptional cases, including those with $\eta(d,\ell,q)=0$, for which the problem is solvable in polynomial time. Note that, for $d \geq 2$, any coloring that assigns a fixed color to all vertices trivially satisfies all uniformly rainbow free constraints. Nonetheless, when combined with the proper coloring requirement of the graph, these constraints essentially govern the kernel complexity of the problem. It is also worth noting that the bounds on the number of constraints in the kernelized instances in Theorem~\ref{thm:Full-URFC} reflect their encoding size in bits, up to a logarithmic factor (see Remark~\ref{remark:encoding}).

The lower bounds in Theorem~\ref{thm:Full-URFC} are derived from Theorem~\ref{thm:Lower-R-CC} through an analysis of the \OR relations definable from the relations that characterize the $(d,\ell,q)$-\URFC problems (see Lemma~\ref{lemma:NURdefinable}). To establish the non-trivial upper bounds stated in Theorem~\ref{thm:Full-URFC}, we leverage the linear-algebraic approach of~\cite{JansenP17}. To adapt this machinery to our setting, we construct low-degree polynomials that capture the uniformly rainbow property, extending ideas from~\cite{JansenP19color,HavivR24,HavivR25} (see Definition~\ref{def:URproperty} and Lemma~\ref{lemma:C,p}). We also provide an extension of the upper bounds from Theorem~\ref{thm:Full-URFC}, needed for some of our applications, where the values of $d$ and $\ell$ are not unique but are taken from a prescribed set of possibilities. For this setting, we show that the degree of the polynomial kernel complexity is dominated by the maximum value of $\eta(d,\ell,q)$ over all allowed pairs $(d,\ell)$ (see Definition~\ref{def:GURFC} and Theorem~\ref{thm:GURFC}).

\subsubsection{Coloring Graphs Close to Disjoint Union of Cliques}

Our results on the uniformly rainbow free coloring problems enable us to settle the kernel complexity of a family of graph coloring problems parameterized by the vertex-deletion distance to a disjoint union of cliques. Following a terminology of Cai~\cite{Cai03}, for a positive integer $q$ and for a graph family $\calG$, let $q$-\Col on $\calG+k\mathrm{v}$ graphs be the problem parameterized by $k$, defined as follows. The input comprises a graph $G=(V,E)$ and a set $X \subseteq V$ of $k$ vertices, whose removal from $G$ gives a graph that belongs to $\calG$, and the goal is to decide whether $G$ admits a proper coloring with $q$ colors. We investigate the kernel complexity of the problem in its $\NP$-hard cases, with $q \geq 3$, for graph families $\calG$ of disjoint unions of cliques. Specifically, for a positive integer $t$, let $t \lClique$ denote the family of all graphs that form a disjoint union of cliques, each of size at most $t$, and let $\Clique$ denote the family of all graphs that form a disjoint union of cliques of any size.

The kernel complexity of the $q$-\Col problem on $t \lClique+k\mathrm{v}$ graphs was first studied in 2013 by Jansen and Kratsch~\cite{JansenK13} for the special case of $t=1$, corresponding to the well-studied vertex cover number parameterization. They showed that in this case, for every integer $q \geq 3$, the problem admits a kernel with $O(k^q)$ vertices and bit-size, while for any $\eps > 0$, it admits no kernel of size $O(k^{q-1-\eps})$ unless $\NP \subseteq \coNPpoly$. In 2017, the upper bound was tightened by Jansen and Pieterse~\cite{JansenP19color}, using their linear-algebraic approach~\cite{JansenP17}, to a kernel with $O(k^{q-1})$ vertices and bit-size $O(k^{q-1} \cdot \log k)$, matching the lower bound of~\cite{JansenK13} up to a $k^{o(1)}$ multiplicative term. In 2020, Schalken~\cite{Schalken20} studied the kernel complexity of the $q$-\Col problem with respect to smaller parameters, with particular attention to the $q$-\Col problem on $t \lClique+k\mathrm{v}$ graphs with $t=2$. He proved that the problem admits a kernel with $O(k^{2q-2})$ vertices and bit-size, while for any $\eps > 0$, it admits no kernel of size $O(k^{2q-3-\eps})$ unless $\NP \subseteq \coNPpoly$. For $q \in \{3,4\}$, he further provided a near-optimal kernel with $O(k^{2q-3})$ vertices and bit-size $O(k^{2q-3} \cdot \log k)$, which was then extended in~\cite{HavivR25} to all integers $q \geq 3$. Schalken~\cite{Schalken20} also studied the $q$-\Col problem on $\Clique+k\mathrm{v}$ graphs, which essentially corresponds to the case $t = q$. He showed that the problem admits a kernel with $O(k^r)$ vertices for $r = \max_{\ell \in [q]} (q-\ell+1)\ell = \lfloor (q+1)^2/4\rfloor$, and asked whether a smaller kernel exists and whether matching conditional lower bounds could be established.

Based on our bounds on the kernel complexity of uniformly rainbow free coloring problems, we determine the kernel complexity of the $q$-\Col problem on $t \lClique+k\mathrm{v}$ graphs for all admissible values of $q$ and $t$, up to a $k^{o(1)}$ multiplicative factor and under the assumption $\NP \nsubseteq \coNPpoly$ (see also Theorem~\ref{thm:LowerDisjClique_ell} and Remark~\ref{remark:encoding_disj}).

\begin{theorem}\label{thm:t-Cliques}
For positive integers $q \geq 3$ and $t \leq q$, set $r = r(q,t) = \max_{\ell \in [t]} {\eta(q-\ell+1,\ell,q)}$, with $\eta$ as in Definition~\ref{def:eta}.
The $q$-\Col problem on $t\lClique+k\mathrm{v}$ graphs admits a kernel with $O(k^r)$ vertices, while for any real $\eps >0$, it does not admit a compression of size $O(k^{r-\eps})$ unless $\NP \subseteq \coNPpoly$.
\end{theorem}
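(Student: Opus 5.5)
The plan is to reduce in both directions between $q$-\Col on $t\lClique+k\mathrm{v}$ graphs and the generalized uniformly rainbow free coloring problem, with $d = q-\ell+1$ for $\ell\in[t]$. The kernel size is then inherited from Theorem~\ref{thm:GURFC}, and the lower bound from Theorem~\ref{thm:Full-URFC}.

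\textbf{Key observation.} Let $C$ be a clique of size $s\le t$ in $G-X$, and fix a proper coloring $c$ of $G[X]$. Each vertex $v\in C$ has forbidden color set $c(N(v)\cap X)$. The coloring $c$ extends to $C$ if and only if a system of distinct representatives exists, by Hall's theorem. Hall fails exactly when some subset $S\subseteq C$ with $|S|=\ell$ has $|\bigcup_{v\in S}([q]\setminus c(N(v)\cap X))|<\ell$. Equivalently, there are $q-\ell+1$ colors that are forbidden for every $v\in S$. By a minimality argument, each forbidden color is witnessed by a single neighbor in $X$. So for each $v\in S$ there is a $(q-\ell+1)$-subset $F_v\subseteq N(v)\cap X$ that is rainbow, and all the $F_v$ use the same color set.

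Hence the clique fails to extend exactly when some $\ell$-tuple $(F_v)_{v\in S}$ is uniformly rainbow with $d=q-\ell+1$. The case $\ell=1$ says a single vertex sees all $q$ colors; it corresponds to $d=q$.

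\textbf{Upper bound.}
\begin{enumerate}
\item Build the instance on the vertex set $X$, with graph $G[X]$ and, for each clique $C$, every $\ell\le |C|$, every $S\subseteq C$ of size $\ell$, and every choice of $d$-subsets $F_v\subseteq N(v)\cap X$, the tuple $(F_v)_{v\in S}$. Since $\ell\le t\le q$, this is polynomial for fixed $q$.
\item This is an instance of the generalized problem with pairs $(q-\ell+1,\ell)$, $\ell\in[t]$, on $k$ vertices. Theorem~\ref{thm:GURFC} shrinks it to $O(k^r)$ constraints.
\item Convert back to a $q$-\Col instance on $t\lClique+k\mathrm{v}$ graphs. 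Each surviving tuple becomes a clique of size $\ell$ whose $i$-th vertex is joined to $F_i$.
\end{enumerate}
One must check that these new cliques add no unintended obstructions. A Hall violation inside a gadget clique that uses only a sub-tuple $S'$ would require the members of $S'$ to have $q-|S'|+1$ common forbidden colors. Since each vertex sees only $d=q-\ell+1\le q-|S'|+1$ colors, this forces $|S'|=\ell$, so the only obstruction is the intended one. The result has $O(k^r)$ vertices.

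\textbf{Lower bound.} Reduce from $(q-\ell+1,\ell,q)$-\URFC for the maximizing $\ell$, with $n$ vertices becoming $X$ and $k=n$. Replace each tuple by a gadget clique of size $\ell\le t$ as in step 3. By the same Hall argument, a proper coloring of $X$ extends to the cliques if and only if no tuple is uniformly rainbow. Theorem~\ref{thm:Full-URFC} then rules out a compression of size $O(k^{r-\eps})$.

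The main obstacle is making the Hall-based equivalence exact, both when assembling the kernel and in the gadget correctness, including edge cases where $\eta$ takes its degenerate values. For instance, $\ell=1$ with $d=q\ge 3$ falls in the case $\eta=(d-1)\ell$, and one must confirm that the maximum over $\ell$ still matches the true complexity.
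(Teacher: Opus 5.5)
Your proposal is correct and follows essentially the same route as the paper. The paper also uses the Hall's-theorem characterization of extendability (Lemma~\ref{lemma:cliques_vs_F}) and applies the kernel of Theorem~\ref{thm:GURFC} to the tuples on $X$. It then converts the surviving tuples back into gadget cliques, with the same check that a sub-clique of size $\ell'<\ell$ cannot create an obstruction, since each gadget vertex sees only $q-\ell+1<q-\ell'+1$ vertices of $X$. The lower bound likewise comes from the gadget-clique transformation applied to the $(q-\ell+1,\ell,q)$ problem for the maximizing $\ell$.

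The one hypothesis you leave implicit is that Theorem~\ref{thm:GURFC} requires $\eta(q-\ell+1,\ell,q)\ge 2$ for every $\ell\in[t]$. This holds because $q\ge 3$: the value is $q-1$ for $\ell=1$, $2q-3$ for $\ell=2$, and $(q-\ell+1)\ell$ for $\ell\ge 3$. So there are no degenerate cases of $\eta$ to worry about.
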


As a consequence of Theorem~\ref{thm:t-Cliques}, we obtain nearly tight bounds on the kernel complexity of the $q$-\Col problem on $\Clique+k\mathrm{v}$ graphs. Our findings resolve the open question raised in~\cite{Schalken20}, and show that the kernel size achieved there is essentially optimal for $q \geq 4$, but can be improved by almost a linear factor for $q=3$. For the precise statement, see Corollary~\ref{cor:q-Col-Disj}.

\subsection{Outline}

The remainder of the paper is structured as follows. In Section~\ref{sec:preliminaries}, we collect definitions and results that will be used throughout the paper. In Section~\ref{sec:R-Col}, we establish our general lower bound on the compression size of $R$-\CC problems and prove Theorem~\ref{thm:Lower-R-CC}. In Section~\ref{sec:URFClower}, we obtain our lower bounds on the compression size of the uniformly rainbow free coloring problems, and in Section~\ref{sec:URFCupper}, we obtain the nearly matching upper bounds on their kernel complexity, thereby confirming Theorem~\ref{thm:Full-URFC}. In Section~\ref{sec:applications}, we present several applications of our results, and, in particular, prove Theorems~\ref{thm:Hyper} and~\ref{thm:t-Cliques}. We end the paper in Section~\ref{sec:conclude} with some concluding remarks.

\section{Preliminaries}\label{sec:preliminaries}

For a positive integer $n$, let $[n]$ denote the set of positive integers up to $n$, and for a set $A$, let $\calP(A)$ denote its power set.

\subsection{Graphs and Hypergraphs}

All graphs considered in this paper are finite and simple, having no loops or parallel edges. For a graph $G=(V,E)$ and a vertex $v \in V$, we let $N_G(v)$ denote the set of neighbors of $v$ in $G$. For a set $X \subseteq V$, we denote by $G \setminus X$ the graph obtained from $G$ by removing the vertices of $X$ and by $G[X]$ the subgraph of $G$ induced by $X$. The set $X$ is called a {\em clique} when the graph $G[X]$ is complete. A graph is called a {\em disjoint union of cliques} if its vertex set can be partitioned into cliques with no edges connecting vertices from distinct parts. For a positive integer $t$, let $t \lClique$ denote the family of all graphs that form a disjoint union of cliques, each of size at most $t$, and let $\Clique$ denote the family of all graphs that form a disjoint union of cliques of arbitrary size. It is well known and easy to verify that a graph belongs to $\Clique$ if and only if it contains no induced path on three vertices.

For a positive integer $q$, a {\em coloring} of a graph $G=(V,E)$ with $q$ colors is a function from its vertex set to a set of size $q$. A coloring $c:V \to [q]$ of $G$ is called {\em proper} if $c(u) \neq c(v)$ for every edge $\{u,v\} \in E$. More generally, for a hypergraph $H=(V,E)$, a coloring $c: V \to [q]$ of $H$ is called {\em proper} if no edge in $H$ is monochromatic, namely, for each $e \in E$, it holds that $|\{c(v) \mid v \in e\}| \neq 1$. For a positive integer $\ell$, the hypergraph $H$ is called {\em $\ell$-uniform} if all of its edges are of size exactly~$\ell$. A set $A \subseteq V$ is called {\em rainbow} with respect to a coloring $c$ if $c$ assigns to its vertices pairwise distinct colors, that is, $|\{c(v) \mid v \in A\}|=|A|$. Furthermore, for positive integers $d$ and $\ell$, an $\ell$-tuple $(F_1, \ldots, F_\ell)$ of $d$-subsets of $V$ is called {\em uniformly rainbow} with respect to $c$ if all of its sets are rainbow and receive the same set of colors with respect to $c$, that is, the sets $\{c(v) \mid v \in F_j\}$ for $j \in [\ell]$ are all equal and of size $d$.

\subsection{Relations}

We present here two notions that will be used in our study of constraint satisfaction problems. We begin with the notion of a permutation-invariant relation.
\begin{definition}\label{def:perm}
For positive integers $q$ and $r$, let $R \subseteq [q]^r$ be a relation of arity $r$ over $[q]$.
The relation $R$ is called {\em permutation-invariant} if for every permutation $\pi:[q] \to [q]$ and for every $r$-tuple \mbox{$x=(x_1, \ldots, x_r) \in [q]^r$,} it holds that $x \in R$ if and only if $(\pi(x_1), \ldots, \pi(x_r)) \in R$.
\end{definition}
\noindent
Note that permutation-invariance concerns permutations of the domain elements rather than of the tuple positions. Consequently, permuting the entries of a tuple may not preserve membership in a permutation-invariant relation.

We next formalize when an \OR relation is definable from another relation. Related and more general notions can be found, e.g., in~\cite{Carbonnel22,LagerkvistW22}.

\begin{definition}\label{def:OR definable}
Let $q$, $r$, and $k$ be positive integers, and let $R \subseteq [q]^r$ be a relation of arity $r$ over $[q]$.
We say that {\em an \OR relation of arity $k$ is definable from $R$} if there exist sets $D_1, \ldots, D_r \subseteq [q]$, exactly $k$ of which are of size $2$ and the others of size $1$, such that all $2^k$ tuples in the product $D_1 \times \cdots \times D_r$ except one belong to $R$, equivalently, $| (D_1 \times \cdots \times D_r) \cap R| = 2^k-1$.
\end{definition}

\subsection{Parameterized Complexity}\label{sec:kernel}

We now gather fundamental definitions on kernelization from the field of parameterized complexity. For an in-depth introduction to the area, the reader is referred to~\cite{KernelBook19}.
A {\em parameterized problem} is a set $Q \subseteq \Sigma^* \times \N$ for some finite alphabet $\Sigma$. A {\em compression} (also known as a generalized kernel) for a parameterized problem $Q \subseteq \Sigma^* \times \N$ into a parameterized problem $Q' \subseteq \Sigma^* \times \N$ is an algorithm that given an instance $(x,k) \in \Sigma^* \times \N$ returns in time polynomial in $|x|+k$ an instance $(x',k') \in \Sigma^* \times \N$, such that $(x,k) \in Q$ if and only if $(x',k') \in Q'$, and such that $|x'|+k' \leq h(k)$ for some computable function $h:\N \to \R$. The function $h$ is called the {\em size} of the compression, and when the alphabet $\Sigma$ is $\{0,1\}$, it is called the {\em bit-size} of the compression. A parameterized problem $Q$ is said to admit a compression of size $h$ if there exists a compression of size $h$ for $Q$ into {\em some} parameterized problem. A compression for a parameterized problem $Q$ into itself is called a {\em kernelization}, or simply a {\em kernel}, for $Q$.

A {\em transformation} from a parameterized problem $Q \subseteq \Sigma^* \times \N$ into a parameterized problem $Q' \subseteq \Sigma^* \times \N$ is an algorithm that given an instance $(x,k) \in \Sigma^* \times \N$ returns in time polynomial in $|x|+k$ an instance $(x',k') \in \Sigma^* \times \N$, such that $(x,k) \in Q$ if and only if $(x',k') \in Q'$, and such that $k' \leq h(k)$ for some computable function $h:\N \to \R$. Note that in contrast to the notion of compression, the bound $h(k)$ applies here only to $k'$, not to $|x'|+k'$. If $h$ is linear, the transformation is called {\em linear-parameter}. Such transformations are useful for deriving lower bounds on the compression size of parameterized problems from those of others, as illustrated by the following standard lemma.

\begin{lemma}\label{lemma:composition}
Let $Q$ and $Q'$ be parameterized problems, whose parameters are denoted by $k$, and suppose that there exists a linear-parameter transformation from $Q$ into $Q'$.
Then, for every positive integer $d$, if $Q'$ admits a compression of size $O(k^d)$, then so does $Q$.
\end{lemma}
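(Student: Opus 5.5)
The plan is to compose the two maps directly. Let $A$ be the linear-parameter transformation from $Q$ into $Q'$, so $A$ runs in polynomial time and outputs $(x',k')$ with $k' \leq h(k) \leq c_1 k + c_0$ for some constants $c_0,c_1$. Suppose $Q'$ admits a compression $B$ of size $O(k^d)$ into some parameterized problem $Q''$. Define the algorithm $C$ for $Q$ as follows. On input $(x,k)$, it first applies $A$ to obtain $(x',k')$, and then applies $B$ to $(x',k')$ to obtain $(x'',k'')$. I will show that $C$ is a compression of $Q$ into $Q''$ of size $O(k^d)$.

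\emph{Correctness.} By the definition of a transformation, $(x,k)\in Q$ if and only if $(x',k')\in Q'$. By the definition of a compression, this holds if and only if $(x'',k'')\in Q''$.

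\emph{Running time.} The algorithm $A$ runs in time polynomial in $|x|+k$, so its output has length $|x'|$ polynomial in $|x|+k$. Also $k' \leq c_1k+c_0$. The algorithm $B$ runs in time polynomial in $|x'|+k'$, which is therefore polynomial in $|x|+k$. Hence $C$ runs in polynomial time.

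\emph{Size bound.} We have $|x''|+k'' \leq g(k')$ for some $g$ with $g(m) \leq c\, m^d$ for all $m\geq 1$. (Small values of $m$ can be absorbed into the constant, adjusting $g$ to be nondecreasing if needed, e.g.\ replacing it by $\max_{m'\le m} g(m')$, which is still computable and still $O(m^d)$.) Then
\[
|x''|+k'' \leq g(c_1k+c_0) \leq c\,(c_1k+c_0)^d = O(k^d),
\]
since $d$ is a fixed constant. This bounding function is computable. So $Q$ admits a compression of size $O(k^d)$.

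The only mildly delicate point is the monotonicity of the size function, since $k'$ is only bounded above by $h(k)$ rather than equal to it. Replacing $g$ by its running maximum handles this, and no other obstacle is expected.
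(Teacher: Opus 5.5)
Your proposal is correct and follows the paper's proof exactly: compose the linear-parameter transformation with the compression for $Q'$. Your checks of correctness, polynomial running time, and the $O(k^d)$ size bound spell out what the paper states in one line. The running-maximum fix is harmless but not needed, since the dominating bound $c\,m^d$ is itself nondecreasing, giving $g(k') \le c\,(k')^d \le c\,(c_1k+c_0)^d$ directly for $k'\ge 1$.
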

\begin{proof}
Composing a compression of size $O(k^d)$ for $Q'$ with a linear-parameter transformation from $Q$ into $Q'$ yields a compression of size $O(k^d)$ for $Q$.
\end{proof}

\subsection{Satisfiability Problems}

For a positive integer $k$, a $k$-CNF formula is a Boolean formula in conjunctive normal form, where each clause contains $k$ literals involving $k$ distinct variables. The classic $k$-\SAT problem asks whether a given $k$-CNF formula is satisfiable. The following theorem, proved by Dell and van Melkebeek~\cite{DellM14}, provides a conditional lower bound on the compression size of the $k$-\SAT problem with $k \geq 3$ when parameterized by the number of variables.

\begin{theorem}[\cite{DellM14}]\label{thm:SAT}
For every integer $k \geq 3$ and any real $\eps >0$, the $k$-\SAT problem parameterized by the number of variables $n$ does not admit a compression of size $O(n^{k-\eps})$ unless $\NP \subseteq \coNPpoly$.
\end{theorem}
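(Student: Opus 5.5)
Since Theorem~\ref{thm:SAT} is quoted from~\cite{DellM14}, we would only reconstruct the argument, and we would do so via the OR-framework of Fortnow--Santhanam and Dell--van Melkebeek. The core tool is the \emph{complementary witness lemma}. It states that if an $\NP$-hard language $L$ admits an algorithm mapping any $t$ instances $x_1,\ldots,x_t$, each of size $m$, in polynomial time to a single instance of some language $L'$ that is a $\YES$-instance if and only if some $x_i \in L$, and if the output size is $O(t\log t)$ for some $t=t(m)$ polynomial in $m$, then $\NP \subseteq \coNPpoly$. We would take this lemma as the black box. The task then reduces to building, from a hypothetical compression of size $O(n^{k-\eps})$ for $k$-\SAT, such an OR-compression of sublinear size in $t$.

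The key step is an \emph{OR-composition} of $t$ instances of an $\NP$-hard problem into one $k$-CNF formula with few variables. Concretely, we would compose $t$ instances of \textsc{Clique} or of $k$-\SAT, each on $m$ variables, into a $k$-CNF formula with only $n = t^{1/k+o(1)}\cdot \mathrm{poly}(m)$ variables.
\begin{itemize}[nosep]
\item The main combinatorial ingredient is a \emph{packing lemma}. We would find a $k$-partite $k$-uniform hypergraph on $N = t^{1/k+o(1)}\cdot m$ vertices that contains $t$ copies of the complete $k$-partite hypergraph on $m$ vertices per side, such that every hyperedge lying inside the vertex set of some copy belongs to exactly that copy.
\item We would obtain these copies from Behrend-type constructions of large sets avoiding arithmetic progressions. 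The copies are indexed by points on lines over $[p]^k$ with progression-free slopes, which is exactly where the $t^{o(1)}$ loss enters.
\item We would identify the vertices of the hypergraph with variables, grouped so that each instance is embedded on the vertex set of its own copy.
\end{itemize}

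To realize the OR, we would add a selector block of $O(\log t)$ or $t^{1/k}$ variables that chooses one index $i$. Each clause of instance $x_i$ is then weakened by literals that are falsified exactly when $i$ is selected. The padding has to be arranged so that the clause width stays exactly $k$, which we would handle by encoding the selection through the part structure (choosing one block per part) instead of extra literals. The packing property guarantees that clauses of distinct instances never collide on the same $k$-set. This is what keeps unselected instances from constraining the selected one, and it is the reason the variable count scales as $t^{1/k}$.

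Finally, we would compose this construction with the assumed compression. The result has size $O(n^{k-\eps}) = t^{1-\eps/k+o(1)}\mathrm{poly}(m)$, and for $t$ a sufficiently large polynomial in $m$ this is $o(t)$, so the complementary witness lemma yields $\NP \subseteq \coNPpoly$. The main obstacle is the packing lemma together with a width-$k$ embedding that respects it: getting the exponent exactly $1/k$ up to $t^{o(1)}$ needs the Behrend construction. We would first try the lines-in-$[p]^k$ construction directly and verify edge-disjointness via the progression-free slopes.
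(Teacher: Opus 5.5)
The paper does not prove this statement; it imports it from \cite{DellM14}. Your outer shell matches that proof: the complementary witness lemma, plus an OR-composition into a $k$-CNF on $t^{1/k+o(1)}\mathrm{poly}(s)$ variables. The composition step you describe, however, does not work. Once you drop the selector literals to keep the width at $k$, each clause of $x_i$ is copied verbatim into the output, and a CNF formula is the conjunction of all its clauses. So if a single instance $x_{i'}$ is unsatisfiable (say it contains all $2^k$ sign patterns on some $k$ of its variables), the composed formula is unsatisfiable whatever the other instances are. The construction computes an AND, not an OR. The packing property that no two instances use the same $k$-set does not decouple them, because clauses on distinct but overlapping $k$-sets still constrain one another. ``Choosing one block per part'' is not a mechanism that makes a mixed-sign clause vacuous. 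If you keep explicit selector literals instead, the width becomes $k+s$ with $s\geq 1$, and the lower bound you get has exponent strictly below the clause width.

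The missing idea is to perform the OR on an \emph{anti-monotone} problem. There every constraint only forbids certain objects from being selected simultaneously, so instances whose objects are not selected are automatically harmless. Dell and van Melkebeek compose instances of Clique on $d$-uniform hypergraphs, in the multicolored form with $p$ parts, where $p$ is the target clique size and grows with the instance size. With only $k=d$ parts, as in your packing, the clique question is trivial. Their packing lemma gives a $p$-partite $d$-uniform hypergraph $P$ on $O(p\cdot\max(p,t^{1/d+o(1)}))$ vertices whose hyperedges partition into $t$ cliques $K_1,\ldots,K_t$ on $p$ vertices, and which contains no other $p$-clique. Blow up each vertex of $P$ and place the $i$th instance on the blow-up of $K_i$. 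Then every $p$-clique of the union projects onto a $p$-clique of $P$, hence onto some $K_i$, hence is a clique of the $i$th instance.

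The condition ``no other $p$-clique'' is the hard part, and it is exactly where progression-free (Behrend) slopes are needed. Your property, by contrast, holds trivially for blow-ups of $t$ distinct hyperedges of the complete $k$-partite $k$-uniform hypergraph with $\lceil t^{1/k}\rceil$ vertices per part, with no Behrend set at all. That signals it is not the property doing the work.

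To reach SAT, pass to the complement hypergraph on the same $N$ vertices: a $p$-clique exists iff a vertex cover of size $N-p$ exists. Then write $d$-uniform Vertex Cover as a $d$-CNF: one positive clause per hyperedge, plus the cardinality bound encoded by a linear-size counting circuit with Tseitin variables. This uses $O(N)$ auxiliary variables and $3$-clauses padded to width $d$, which is where $d\geq 3$ enters. With this chain, an $O(n^{d-\eps})$ compression for $d$-\SAT yields an OR-compression of size $t^{1-\eps/d+o(1)}\mathrm{poly}(s)$, and the complementary witness lemma finishes the argument as you intended.
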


A natural variant of $k$-\SAT is the $k$-\NAE (Not All Equal) problem, which asks whether a given $k$-CNF formula is {\em NAE-satisfiable}, that is, whether there exists an assignment to its variables such that each clause contains at least one literal evaluated to $\true$ and at least one literal evaluated to $\false$. The following theorem provides a conditional lower bound on the compression size of the $k$-\NAE problem with $k \geq 3$ when parameterized by the number of variables. The case $k \geq 4$ was shown in~\cite[Theorem~6]{JansenP17}, and the case $k=3$ follows from~\cite[Theorem~3.5]{ChenJP20}.

\begin{theorem}[{\cite{JansenP17,ChenJP20}}]\label{thm:NAE}
For every integer $k \geq 3$ and any real $\eps >0$, the $k$-\NAE problem parameterized by the number of variables $n$ does not admit a compression of size $O(n^{k-1-\eps})$ unless $\NP \subseteq \coNPpoly$.
\end{theorem}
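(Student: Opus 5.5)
The statement is the NAE lower bound (Theorem~\ref{thm:NAE}). It is cited from the literature, so the plan is to recall how it reduces to Theorem~\ref{thm:SAT} and to the framework of~\cite{ChenJP20}. The two cases $k \geq 4$ and $k=3$ go differently.

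\textbf{Case $k \geq 4$.} I would give a linear-parameter transformation from $(k-1)$-\SAT, which requires $k-1 \geq 3$, into $k$-\NAE, and then apply Lemma~\ref{lemma:composition} together with Theorem~\ref{thm:SAT}.
\begin{itemize}
\item Given a $(k-1)$-CNF formula $\varphi$ on $n$ variables, add one fresh variable $z$ and replace each clause $C$ by the $k$-clause $C \vee z$.
\item If $\varphi$ is satisfied by $\sigma$, extend $\sigma$ by $z=\false$. Every new clause then has a true literal (from $C$) and a false literal ($z$), so it is NAE-satisfied.
\item Conversely, NAE-satisfiability is closed under complementing all variables, so we may assume $z=\false$ in an NAE-satisfying assignment. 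Then every clause $C$ contains a true literal, so $\varphi$ is satisfied.
\end{itemize}
The new instance has $n+1$ variables. Hence a compression of size $O(n^{k-1-\eps})$ for $k$-\NAE would give one for $(k-1)$-\SAT, contradicting Theorem~\ref{thm:SAT} unless $\NP\subseteq\coNPpoly$.

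\textbf{Case $k=3$.} This is the main obstacle, because $2$-\SAT is polynomial-time solvable and so the reduction above yields nothing. The target bound is $O(n^{2-\eps})$. I would invoke the criterion of~\cite[Theorem~3.5]{ChenJP20}: if an \OR relation of arity $2$ can be cone-defined from a relation in an $\NP$-hard Boolean constraint language, then no compression of size $O(n^{2-\eps})$ exists.
\begin{itemize}
\item The 3-NAE relation with literals is $\NP$-hard by Schaefer's theorem~\cite{Schaefer78}.
\item Fixing the third literal to $\false$, i.e.\ identifying it with a constant, turns a NAE clause $(x \vee y \vee \false)$ into exactly the binary \OR $x \vee y$, which is falsified only at $x=y=\false$.
\item Constants can be simulated without increasing the number of variables beyond $n+O(1)$, using one auxiliary variable and complementation symmetry, exactly as in the case $k \geq 4$.
\end{itemize}
The content of~\cite{ChenJP20} is that binary \OR definability combined with $\NP$-hardness suffices: they reduce from a problem with an $n^{2-o(1)}$ bound via a cross-composition in the style of \textsc{Vertex Cover}, in the spirit of~\cite{DellM14}. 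I would not reprove this; I would only check that its hypotheses hold for 3-NAE, which the bullets above do.
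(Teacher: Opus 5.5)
Your proposal is correct and follows essentially the same route as the paper, which does not reprove the statement but cites \cite[Theorem~6]{JansenP17} for $k \geq 4$ and \cite[Theorem~3.5]{ChenJP20} for $k=3$. Your fresh-variable transformation from $(k-1)$-\SAT, with $z=\false$ fixed by complementation symmetry, is the standard argument behind the first citation, and your check that $3$-\NAE is $\NP$-hard and defines a binary \OR once one literal is set to a constant is the hypothesis check needed to apply the second.
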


\section{A Lower Bound for Constrained Coloring}\label{sec:R-Col}

In this section, we prove a lower bound on the compression size of $R$-\CC problems in terms of the arity of the \OR relations definable from $R$ (see Definitions~\ref{def:R-CC} and~\ref{def:OR definable}). We first introduce a list-coloring variant of the $R$-\CC problem, then prove a lower bound on its compression size, and finally derive the lower bound for the non-list variant, stated as Theorem~\ref{thm:Lower-R-CC}.

\subsection{Constrained List Coloring}

For a positive integer $q$, the input of the $q$-\ListCol problem consists of a graph $G=(V,E)$ and a function $L:V \to \calP([q])$, specifying a set of available colors for each vertex in $G$. The goal is to decide whether there exists a proper coloring $c:V \to [q]$ of $G$ satisfying $c(v) \in L(v)$ for all $v \in V$. We refer to such a coloring as a {\em proper list-coloring} of $(G,L)$. We define the $R$-\CLC problem associated with a relation $R$ as follows.

\begin{definition}\label{def:R-CLC}
For positive integers $q$ and $r$, let $R \subseteq [q]^r$ be a relation of arity $r$ over $[q]$.
The input of the $R$-\CLC problem consists of a graph $G=(V,E)$, a function \mbox{$L:V \to \calP([q])$}, and a collection $\calF \subseteq V^r$ of $r$-tuples of vertices.
The goal is to decide whether there exists a proper list-coloring $c:V \to [q]$ of $(G,L)$, such that for every $r$-tuple $(x_1, \ldots, x_r) \in \calF$, it holds that $(c(x_1), \ldots, c(x_r)) \in R$. The number of constraints in the instance $(G,L,\calF)$ is $|E|+|\calF|$.
\end{definition}

The following theorem provides a conditional lower bound on the compression size of $R$-\CLC problems parameterized by the number of vertices.

\begin{theorem}\label{thm:Lower-R-CLC}
For positive integers $q$, $r$, and $k \geq 3$, let $R \subseteq [q]^r$ be a relation of arity $r$ over $[q]$ from which an \OR relation of arity $k$ is definable. Then, for any real $\eps >0$, the $R$-\CLC problem parameterized by the number of vertices $n$ does not admit a compression of size $O(n^{k-\eps})$ unless $\NP \subseteq \coNPpoly$.
\end{theorem}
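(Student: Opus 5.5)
We reduce from $k$-\SAT via Theorem~\ref{thm:SAT} and Lemma~\ref{lemma:composition}: we give a linear-parameter transformation from $k$-\SAT on $n$ variables to $R$-\CLC on $O(n)$ vertices.

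Fix sets $D_1,\ldots,D_r$ witnessing Definition~\ref{def:OR definable}. Let $I \subseteq [r]$ with $|I|=k$ be the positions where $|D_i|=2$, and let $(a_1,\ldots,a_r)$ be the unique tuple of $D_1\times\cdots\times D_r$ outside $R$. For each $i\in I$, write $D_i=\{a_i,b_i\}$. For $j\notin I$ we have $D_j=\{a_j\}$.

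\textbf{Construction.} Given a $k$-CNF formula $\varphi$ on variables $x_1,\ldots,x_n$, we build the instance as follows.

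\emph{Constant vertices.} For each color $\gamma\in[q]$, create one vertex $z_\gamma$ with list $\{\gamma\}$. These are the only vertices used to fill positions outside $I$.

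\emph{Literal vertices.} For each variable $x$ and each position $i\in I$, we want vertices that encode the literals $x$ and $\neg x$ in colors from $D_i$. The convention is that color $a_i$ means the literal is false and $b_i$ means it is true. So create a vertex $u_{x,i}$ with list $\{a_i,b_i\}$ encoding $x$, and a vertex $\bar u_{x,i}$ with list $\{a_i,b_i\}$ encoding $\neg x$.

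\emph{Forcing complementarity.} We must make $\bar u_{x,i}$ false exactly when $u_{x,i}$ is true. An edge $u_{x,i}\bar u_{x,i}$ does this, since both lists equal $\{a_i,b_i\}$ and the edge forces distinct colors.

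\emph{Consistency across positions.} We must tie $u_{x,i}$ to $u_{x,i'}$ so that their truth values agree; this is the delicate step. With only the lists $\{a_i,b_i\}$ and $\{a_{i'},b_{i'}\}$ available, non-equality edges give agreement only in special cases, for example when $a_i=b_{i'}$. The fix is to introduce, for each variable, one master vertex $m_x$ with list $\{1,2\}$, and to connect it to each $u_{x,i}$ through a small gadget of $O(1)$ list-coloring vertices with $\neq$ edges that implements "$m_x=1 \iff u_{x,i}=b_i$". Such a gadget exists for any two 2-element lists. To map $\{1,2\}\to\{a_i,b_i\}$ by a bijection, first convert each 2-list into a single fresh 2-list $\{\alpha,\beta\}$ by a path of $\neq$ edges through intermediate vertices with lists chosen from pairs sharing one color. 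The remaining obstacle is to compose bijections between arbitrary pairs using $\neq$ constraints. A bijection between $\{p,p'\}$ and $\{s,s'\}$ with $p\notin\{s,s'\}$ works via an intermediate list. I expect this to be the main bookkeeping step. It needs care when $q=2$, where all lists lie in $\{1,2\}$ and an edge simply negates, so the gadget degenerates to a path and is easy. In all cases only $O(kn)=O(n)$ vertices are used.

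\emph{Clause constraints.} For each clause $C=(\ell_1\vee\cdots\vee\ell_k)$, order its literals along $I$. Add the $R$-tuple whose entry at the $t$-th position of $I$ is the literal vertex $\ell_t$ for that position, and whose entry at each $j\notin I$ is $z_{a_j}$.

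\textbf{Correctness.} Under the lists, the tuple's colors range exactly over $D_1\times\cdots\times D_r$. The tuple is therefore outside $R$ iff every literal vertex takes color $a_i$, i.e., iff all literals are false. So satisfying assignments of $\varphi$ correspond to proper list-colorings satisfying all constraints. In the forward direction, color each gadget consistently with the assignment. In the backward direction, read the assignment off the master vertices.

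\textbf{Conclusion.} The vertex count is $O(n)$ with $q,r,k$ fixed, so the transformation is linear-parameter. A compression of size $O(n^{k-\eps})$ for $R$-\CLC would then give one for $k$-\SAT by Lemma~\ref{lemma:composition} (applied with the polynomial bound, which the same composition argument gives for non-integer exponents). This contradicts Theorem~\ref{thm:SAT} unless $\NP\subseteq\coNPpoly$.
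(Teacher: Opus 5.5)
Your reduction has the same architecture as the paper's: reduce from $k$-\SAT, use literal vertices with lists $D_i$ and a complementarity edge, fill the positions outside $I$ with constant vertices, and add one $R$-tuple per clause. The clause encoding, the correctness argument and the $O(n)$ vertex count are all fine, \emph{provided} a working consistency gadget exists. That gadget is the only non-trivial ingredient, and you assert it exists rather than construct it. The mechanism you sketch also fails.

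\begin{itemize}
\item A $\neq$-edge between two vertices whose $2$-element lists share exactly one color $c$ forbids only the pair $(c,c)$. Three pairs remain allowed, so the edge encodes a one-way implication, not a bijection.
\item Composing such steps along a path never gives a bijection. By induction, the start color that was compatible with both colors of the next vertex stays compatible with both colors at every later vertex.
\item A single intermediate vertex on a path from a list $\{p,p'\}$ to a list $\{s,s'\}$ with $p\notin\{s,s'\}$ cannot enforce a bijection either. For instance, take $p,p',s,s'$ distinct and intermediate list $\{p,s'\}$. Then $u=p$ forces $w=s$, but $u=p'$ is compatible with both $w=s$ and $w=s'$.
\end{itemize}

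So $m_x=1\iff u_{x,i}=b_i$ is not established, and neither direction of correctness is either. Reading off a consistent assignment needs the gadget to be sound. Extending a satisfying assignment needs the gadget to permit every allowed pair.

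The missing idea is a constant-size gadget that forbids exactly one prescribed color pair $(\alpha_1,\alpha_2)$ on two given vertices, while every other pair extends. This is Lemma~\ref{lemma:gadget} in the paper, and for $q\ge 3$ it has two cases:

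\begin{itemize}
\item If $\alpha_1\neq\alpha_2$, attach a path $u_1v_1v_2u_2$ with $L(v_1)=\{\alpha_1,\beta\}$ and $L(v_2)=\{\alpha_2,\beta\}$, for some $\beta\notin\{\alpha_1,\alpha_2\}$.
\item If $\alpha_1=\alpha_2=\alpha$, attach a path $u_1v_1v_2v_3u_2$ with lists $\{\alpha,\beta\}$, $\{\beta,\gamma\}$, $\{\alpha,\gamma\}$.
\end{itemize}

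A bijection between two $2$-lists is then imposed by two such gadgets, one for each of its two wrong pairs. In your setup, forbid $(1,a_i)$ and $(2,b_i)$ on $(m_x,u_{x,i})$. The paper dispenses with the master vertex. It instead forbids $(a_i,b_{i'})$ and $(b_i,a_{i'})$ on $(u_{x,i},u_{x,i'})$ for consecutive positions $i,i'$ of $I$.

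Your separate treatment of $q=2$ is correct: all lists are $\{1,2\}$, and consistency is an edge or a path of length $2$. It is genuinely needed, since the gadget above requires a third color.
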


The proof of Theorem~\ref{thm:Lower-R-CLC} uses a simple graph gadget, designed to forbid a specific color assignment on two given vertices of a graph, as described in the following lemma.

\begin{lemma}\label{lemma:gadget}
For every integer $q \geq 3$, there exists a polynomial-time algorithm that, given a graph \mbox{$G=(V,E)$}, a list function $L:V \to \calP([q])$, two distinct vertices $u_1,u_2 \in V$, and colors $\alpha_1,\alpha_2 \in [q]$, extends the instance $(G,L)$ of $q$-\ListCol to an instance $(G',L')$ by adding a constant number of vertices and edges, so that for every proper list-coloring $c:V \to [q]$ of $(G,L)$, $c$ can be extended to a proper list-coloring of $(G',L')$ if and only if $(c(u_1),c(u_2)) \neq (\alpha_1,\alpha_2)$.
\end{lemma}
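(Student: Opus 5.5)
I would build the gadget directly from list-coloring constraints, using the fact that $q \geq 3$ lets me pick a third color. The construction adds three new vertices $w_1, w_2, z$ (a constant number).

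\textbf{Construction.} Fix a color $\beta \in [q]$ with $\beta \neq \alpha_1$ and $\beta \neq \alpha_2$; such a $\beta$ exists since $q \geq 3$. The new vertices, lists and edges are:
\begin{itemize}
\item $w_1$ with list $L'(w_1) = \{\alpha_1, \beta\}$, adjacent to $u_1$;
\item $w_2$ with list $L'(w_2) = \{\alpha_2, \beta\}$, adjacent to $u_2$;
\item $w_1$ and $w_2$ adjacent to each other.
\end{itemize}
Nothing else changes, so $L'$ agrees with $L$ on $V$.

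If $\alpha_1 = \alpha_2$, the edge $w_1w_2$ with two equal two-element lists behaves differently. So instead of the edge $w_1w_2$, I make $w_1$ and $w_2$ both adjacent to a common new vertex $z$ with list $L'(z) = \{\beta\}$.

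On reflection, the cleanest uniform choice is the following. Let $w_i$ have list $\{\alpha_i, \beta_i\}$, where $\beta_1 \neq \beta_2$ are two distinct colors. The constraints forced on the $w_i$ must be:
\begin{itemize}
\item $c(u_i) = \alpha_i$ forces $w_i = \beta_i$;
\item $c(u_i) \neq \alpha_i$ allows $w_i = \alpha_i$, though this fails if $\beta_i$ also equals $c(u_i)$.
\end{itemize}

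\textbf{A simpler device.} It is easier to pin down one free vertex directly. Let $w$ have list $\{\beta_1, \beta_2\}$ with $\beta_1 \neq \beta_2$. The goal is:
\begin{itemize}
\item $c(u_1) = \alpha_1$ forbids $\beta_1$ at $w$;
\item $c(u_2) = \alpha_2$ forbids $\beta_2$ at $w$.
\end{itemize}
The forbidding is realized via connector vertices. For $i \in \{1,2\}$, add a vertex $x_i$ with list $\{\alpha_i, \beta_i\}$, where $\beta_i \neq \alpha_i$. Add the edges $x_iu_i$ and $x_iw$. I need $\beta_1 \neq \beta_2$, $\beta_1 \neq \alpha_1$, and $\beta_2 \neq \alpha_2$, which is possible for $q \geq 3$ (even $q \geq 2$ suffices).

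\textbf{Correctness check.} First suppose $c(u_1) = \alpha_1$ and $c(u_2) = \alpha_2$.
\begin{itemize}
\item $x_1$ is adjacent to $u_1$, so $x_1 \neq \alpha_1$, hence $x_1 = \beta_1$.
\item Likewise $x_2 = \beta_2$.
\item Then $w$ is adjacent to both, so $w \notin \{\beta_1, \beta_2\}$. Since its list is $\{\beta_1,\beta_2\}$, no extension exists.
\end{itemize}
Now suppose $c(u_1) \neq \alpha_1$; the other case is symmetric. Set $w = \beta_1$ and $x_1 = \alpha_1$. Then:
\begin{itemize}
\item $x_1$ differs from $c(u_1)$ since $c(u_1) \neq \alpha_1$;
\item $x_1$ differs from $w$ since $\alpha_1 \neq \beta_1$.
\end{itemize}
Next, $x_2$ must avoid both $c(u_2)$ and $\beta_1$, using its list $\{\alpha_2, \beta_2\}$.
\begin{itemize}
\item If $\alpha_2 \neq \beta_1$, I need one of $\alpha_2, \beta_2$ to differ from both $c(u_2)$ and $\beta_1$. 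The color $\beta_2$ differs from $\beta_1$, and $\alpha_2$ differs from $\beta_1$. Since $\alpha_2 \neq \beta_2$, at most one of them equals $c(u_2)$, so one color is available.
\item If $\alpha_2 = \beta_1$, then $x_2$ must take $\beta_2$, which could clash with $c(u_2)$.
\end{itemize}

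\textbf{The main obstacle.} The choice of the $\beta$'s must therefore satisfy
\[
\beta_1 \notin \{\alpha_1, \alpha_2\}, \qquad \beta_2 \notin \{\alpha_1, \alpha_2\}, \qquad \beta_1 \neq \beta_2.
\]
When $\alpha_1 \neq \alpha_2$, this needs four colors, which $q = 3$ does not provide. When $\alpha_1 = \alpha_2$, it needs only three colors.

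For $q = 3$ with $\alpha_1 \neq \alpha_2$, I would instead take $w$ with list $\{\alpha_2, \alpha_1\}$, so that $\beta_1 = \alpha_2$ and $\beta_2 = \alpha_1$. I would then redo the case analysis above with this specific choice, and if it fails, insert an intermediate vertex to break the symmetry.

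Handling this small-$q$ case is the step I expect to be the main obstacle. The remaining parts are routine:
\begin{itemize}
\item the gadget adds a constant number of vertices and edges;
\item it is built in polynomial time;
\item the original vertices' lists are unchanged, so any extension restricts to $c$ on $V$.
\end{itemize}
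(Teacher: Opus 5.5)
\textbf{Gap: the case $\alpha_1\neq\alpha_2$ with $q=3$ is not handled.}

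Your final device is the path $u_1x_1wx_2u_2$ with lists $\{\alpha_1,\beta_1\}$, $\{\beta_1,\beta_2\}$, $\{\alpha_2,\beta_2\}$. It is correct whenever $\beta_1,\beta_2\notin\{\alpha_1,\alpha_2\}$ are distinct, and for $\alpha_1=\alpha_2$ it is exactly the paper's second gadget. But you leave $\alpha_1\neq\alpha_2$, $q=3$ unresolved, and that case is part of the statement, so as written this is not a proof.

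The fallback you suggest for that case provably fails. With $\beta_1=\alpha_2$ and $\beta_2=\alpha_1$, all three internal vertices have list $\{\alpha_1,\alpha_2\}$. Along the path they must alternate, so $x_1$ and $x_2$ receive the same color $a\in\{\alpha_1,\alpha_2\}$. An extension then exists iff some such $a$ avoids both $c(u_1)$ and $c(u_2)$, that is, iff $\{c(u_1),c(u_2)\}\neq\{\alpha_1,\alpha_2\}$. So the gadget forbids $(\alpha_2,\alpha_1)$ as well as $(\alpha_1,\alpha_2)$. The alternative, ``insert an intermediate vertex,'' is never specified.

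\textbf{The fix is the construction you wrote first and then abandoned.} For $\alpha_1\neq\alpha_2$, fix a single $\beta\notin\{\alpha_1,\alpha_2\}$ and take the path $u_1w_1w_2u_2$ with $L'(w_1)=\{\alpha_1,\beta\}$ and $L'(w_2)=\{\alpha_2,\beta\}$.
\begin{itemize}
\item If $(c(u_1),c(u_2))=(\alpha_1,\alpha_2)$, both $w_i$ are forced to $\beta$, contradicting the edge $w_1w_2$.
\item If, say, $c(u_1)\neq\alpha_1$, set $w_1=\alpha_1$ and pick $w_2\in\{\alpha_2,\beta\}\setminus\{c(u_2)\}$. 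This differs from $w_1$ because $\alpha_1\notin\{\alpha_2,\beta\}$. The case $c(u_2)\neq\alpha_2$ is symmetric.
\end{itemize}
This needs only three colors. You correctly noticed that it breaks down when $\alpha_1=\alpha_2$, and that is precisely the case your three-internal-vertex path covers for every $q\geq 3$. Splitting into these two cases is exactly the paper's proof.

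Separately, your intermediate variant for $\alpha_1=\alpha_2=\alpha$, using a common neighbor $z$ with list $\{\beta\}$, is wrong. It forces $w_1=w_2=\alpha$, and therefore forbids every coloring with $c(u_1)=\alpha$ or $c(u_2)=\alpha$, not just the pair $(\alpha,\alpha)$.
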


\begin{proof}
Fix an integer $q \geq 3$. Given a graph $G=(V,E)$, a list function $L:V \to \calP([q])$, two distinct vertices $u_1,u_2 \in V$, and colors $\alpha_1,\alpha_2 \in [q]$, we construct an instance $(G',L')$ of $q$-\ListCol as follows.
Suppose first that $\alpha_1 \neq \alpha_2$, and let $\beta$ be an element of $[q] \setminus \{\alpha_1,\alpha_2\}$.
Let $G'$ be the graph obtained from $G$ by connecting $u_1$ and $u_2$ by a path of length $3$, with two new vertices $v_1$ and $v_2$ in this order. Let $L'$ denote the list function of $G'$ that extends $L$ by $L'(v_1) = \{\alpha_1,\beta\}$ and $L'(v_2) = \{\alpha_2,\beta\}$. For an illustration, see the top path in Figure~\ref{fig:paths}.

The instance $(G',L')$ can clearly be constructed in polynomial time. We claim that it satisfies the assertion of the lemma. Let $c:V \to [q]$ be a proper list-coloring of $(G,L)$. If $c$ can be extended to a proper list-coloring of $(G',L')$, then this extension assigns the color $\beta$ to at most one of the adjacent vertices $v_1$ and $v_2$, so for some $i \in [2]$, it assigns the color $\alpha_i$ to the vertex $v_i$. Since $v_i$ is adjacent to $u_i$, this implies that $c(u_i) \neq \alpha_i$, hence $(c(u_1),c(u_2)) \neq (\alpha_1,\alpha_2)$. Conversely, if $(c(u_1),c(u_2)) \neq (\alpha_1,\alpha_2)$, then it may be assumed without loss of generality that $c(u_1) \neq \alpha_1$. We extend the coloring $c$ of $G$ to $G'$ by $c(v_1) = \alpha_1$ and by setting $c(v_2)$ as an arbitrary element of $L'(v_2) \setminus \{c(u_2)\}$. The assumption $\alpha_1 \neq \alpha_2$ implies that $\alpha_1 \notin L'(v_2)$, which yields that the obtained extension forms a proper list-coloring of $(G',L')$. This completes the argument for the case $\alpha_1 \neq \alpha_2$.

\begin{figure}[htb]
\centering
\begin{tikzpicture}[scale=1, every node/.style={circle, draw, minimum size=0.7cm}]
\node[draw=none, anchor=west] (forb1) at (-4,0) {$\alpha_1 \neq \alpha_2$};
\node (u1) at (0,0) {$u_1$};
\node[fill=gray!30] (v1) at (2,0) {$v_1$};
\node[fill=gray!30] (v2) at (4,0) {$v_2$};
\node (u2) at (6,0) {$u_2$};

\draw (u1) -- (v1) -- (v2) -- (u2);

\node[draw=none, fill=none] at (2,0.8) {$\{\alpha_1,\beta\}$};
\node[draw=none, fill=none] at (4,0.8) {$\{\alpha_2,\beta\}$};

\node[draw=none, anchor=west] (forb2) at (-4,-2) {$\alpha = \alpha_1 = \alpha_2$};
\node (U1) at (0,-2) {$u_1$};
\node[fill=gray!30] (V1) at (2,-2) {$v_1$};
\node[fill=gray!30] (V2) at (4,-2) {$v_2$};
\node[fill=gray!30] (V3) at (6,-2) {$v_3$};
\node (U2) at (8,-2) {$u_2$};

\draw (U1) -- (V1) -- (V2) -- (V3) -- (U2);

\node[draw=none, fill=none] at (2,-1.2) {$\{\alpha,\beta\}$};
\node[draw=none, fill=none] at (4,-1.2) {$\{\beta,\gamma\}$};
\node[draw=none, fill=none] at (6,-1.2) {$\{\alpha,\gamma\}$};

\end{tikzpicture}
\caption{The gadgets used in Lemma~\ref{lemma:gadget} to prevent assigning the colors $\alpha_1$ and $\alpha_2$ to vertices $u_1$ and $u_2$, respectively. The top gadget corresponds to $\alpha_1 \neq \alpha_2$ and the bottom to $\alpha_1 = \alpha_2$. Newly added vertices are shaded in gray.}
\label{fig:paths}
\end{figure}
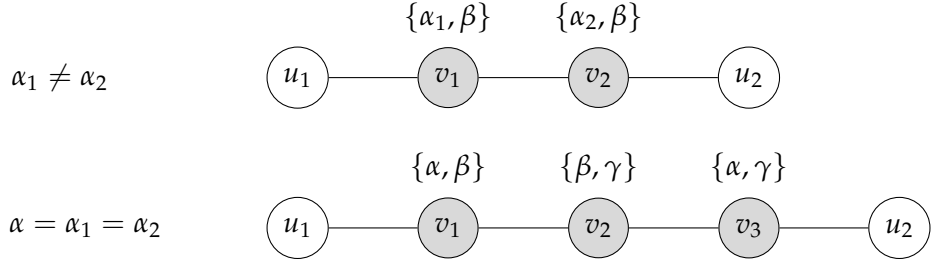

Next, suppose that $\alpha_1 = \alpha_2$, denote $\alpha = \alpha_1 = \alpha_2$, and let $\beta,\gamma$ be distinct elements of $[q] \setminus \{\alpha\}$. In this case, let $G'$ be the graph obtained from $G$ by connecting $u_1$ and $u_2$ by a path of length $4$, with three new vertices $v_1$, $v_2$, and $v_3$ in this order. Further, let $L'$ denote the list function of $G'$ that extends $L$ by $L'(v_1) = \{\alpha,\beta\}$, $L'(v_2) = \{\beta,\gamma\}$, and $L'(v_3) = \{\alpha,\gamma\}$. For an illustration, see the bottom path in Figure~\ref{fig:paths}.

As before, the instance $(G',L')$ can be constructed in polynomial time. Let $c:V \to [q]$ be a proper list-coloring of $(G,L)$. If $c$ can be extended to a proper list-coloring of $(G',L')$, then this extension cannot assign the colors $\beta$ and $\gamma$ to $v_1$ and $v_3$, respectively, because no color remains for $v_2$ in its list. This implies that at least one of $v_1$ and $v_3$ is assigned $\alpha$, hence at least one of $u_1$ and $u_2$ is not, and thus $(c(u_1),c(u_2)) \neq (\alpha,\alpha)$. Conversely, if $(c(u_1),c(u_2)) \neq (\alpha,\alpha)$, then it may be assumed without loss of generality that $c(u_1) \neq \alpha$. We extend the coloring $c$ of $G$ to $G'$ by $c(v_1) = \alpha$, $c(v_2) = \beta$, and by setting $c(v_3)$ as an arbitrary element of $L'(v_3) \setminus \{c(u_2)\}$. Since $\beta \notin L'(v_3)$, the obtained extension forms a proper list-coloring of $(G',L')$, as required.
\end{proof}

We are ready to prove Theorem~\ref{thm:Lower-R-CLC}.

\begin{proof}[ of Theorem~\ref{thm:Lower-R-CLC}]
Fix positive integers $q$, $r$, and $k \geq 3$, and a real $\eps >0 $, and let $R \subseteq [q]^r$ be a relation of arity $r$ over $[q]$ from which an \OR relation of arity $k$ is definable. By Definition~\ref{def:OR definable}, there exist a set $J = \{j_1, \ldots, j_k\} \subseteq [r]$ of $k$ distinct indices, sets $D_j = \{\alpha_j,\beta_j\} \subseteq [q]$ with $\alpha_j \neq \beta_j$ for $j \in J$, and sets $D_j = \{\alpha_j\} \subseteq [q]$ for $j \in [r] \setminus J$, such that all tuples in the product $D_1 \times \cdots \times D_r$ except the tuple $\alpha = (\alpha_1, \ldots, \alpha_r)$ belong to $R$.

Since $k \geq 3$, Theorem~\ref{thm:SAT} implies that the $k$-\SAT problem parameterized by the number of variables $n$ does not admit a compression of size $O(n^{k-\eps})$ unless $\NP \subseteq \coNPpoly$. In what follows, we present a linear-parameter transformation from the $k$-\SAT problem parameterized by the number of variables to the $R$-\CLC problem parameterized by the number of vertices. By Lemma~\ref{lemma:composition}, it follows that if the $R$-\CLC problem parameterized by the number of vertices $n$ admits a compression of size $O(n^{k-\eps})$, then so does the $k$-\SAT problem parameterized by the number of variables $n$, implying that $\NP \subseteq \coNPpoly$.

Consider an instance of the $k$-\SAT problem, namely, a $k$-CNF formula $\varphi$ on $n$ variables, denoted by $x_1, \ldots, x_n$. We transform such an instance into an instance of $R$-\CLC, consisting of a graph $G = (V,E)$, a list function $L:V \to \calP([q])$, and a collection $\calF \subseteq V^r$, as follows.
\begin{enumerate}
  \item\label{itm:t,f} The vertex set of $G$ includes $2nk$ vertices, denoted by $t_{i,j}$ and $f_{i,j}$ for all $i \in [n]$ and $j \in J$. For every pair $(i,j) \in [n] \times J$, the vertices $t_{i,j}$ and $f_{i,j}$ are adjacent in $G$, and their lists are defined by $L(t_{i,j}) = L(f_{i,j}) = D_j$. This ensures that, for every pair $(i,j) \in [n] \times J$, any proper list-coloring of $(G,L)$ assigns color $\alpha_j$ to one of $t_{i,j}$ and $f_{i,j}$ and color $\beta_j$ to the other.
  \item\label{itm:v} The graph $G$ includes $r-k$ isolated vertices, denoted by $v_{j}$ for $j \in [r] \setminus J$, whose lists are defined by $L(v_{j}) = D_j$. This ensures that any proper list-coloring of $(G,L)$ assigns color $\alpha_j$ to $v_j$ for all $j \in [r] \setminus J$.
  \item\label{itm:gadget} For each $i \in [n]$, we extend the current instance $(G,L)$ as follows. For every $s \in [k-1]$, we apply the efficient algorithm from Lemma~\ref{lemma:gadget} to the pair of vertices $(t_{i,j_s}, t_{i,j_{s+1}})$ twice: once to forbid the colors $(\alpha_{j_s}, \beta_{j_{s+1}})$ and once to forbid the colors $(\beta_{j_s}, \alpha_{j_{s+1}})$. This ensures, for each $i \in [n]$, that any proper list-coloring of $(G,L)$ assigns to all vertices $t_{i,j}$ with $j \in J$ either their corresponding colors $\alpha_j$ or their corresponding colors $\beta_j$, while imposing no further constraints on the coloring.
  \item\label{itm:F} We finally define the collection $\calF \subseteq V^r$. For each clause $(y_1 \vee \cdots \vee y_k)$ in $\varphi$, we add to $\calF$ the $r$-tuple $z=(z_1, \ldots, z_r)$, where for each $j \in [r]$, the vertex $z_j$ is defined as follows.
    \begin{itemize}
        \item If $j \in J$, let $s \in [k]$ be the index for which $j = j_s$, and let $i \in [n]$ be the index of the variable $x_i$ appearing in the literal $y_s$.
        \begin{itemize}
            \item If $y_s = x_i$, then set $z_j = t_{i,j}$.
            \item If $y_s = \overline{x_i}$, then set $z_j = f_{i,j}$.
        \end{itemize}
    \item If $j \in [r] \setminus J$, then set $z_j = v_j$.
\end{itemize}
\end{enumerate}
This completes the description of the instance $(G,L,\calF)$. It is straightforward to verify that it can be constructed in polynomial time.
We now analyze the number of vertices in $G$.
The vertex set includes the $2nk$ vertices $t_{i,j}$ and $f_{i,j}$ with $(i,j) \in [n] \times J$ from Item~\ref{itm:t,f} of the construction, as well as the $r-k$ vertices $v_j$ with $j \in [r] \setminus J$ from Item~\ref{itm:v} of the construction. Moreover, for each $i \in [n]$, Item~\ref{itm:gadget} of the construction applies the efficient algorithm from Lemma~\ref{lemma:gadget} a constant number of times to the vertices $t_{i,j}$ with $j \in J$, introducing a constant number of additional vertices.
Therefore, the total number of vertices in $G$ is $O(n)$, and the transformation is linear-parameter.

For correctness, we shall prove that the formula $\varphi$ is satisfiable if and only if $(G,L,\calF)$ is a $\YES$ instance of the $R$-\CLC problem.
Suppose first that $\varphi$ is satisfiable, and consider a satisfying assignment for $\varphi$.
We define a coloring $c:V \to [q]$ of $G$ as follows.
First, for each $j \in [r] \setminus J$, define $c(v_j) = \alpha_j$.
Next, for each $i \in [n]$, if $x_i$ is set to $\true$, then define $c(t_{i,j}) = \beta_j$ and $c(f_{i,j}) = \alpha_j$ for all $j \in J$, and if $x_i$ is set to $\false$, then define $c(t_{i,j}) = \alpha_j$ and $c(f_{i,j}) = \beta_j$ for all $j \in J$. By definition, the coloring $c$ respects the list function $L$, and for all $i \in [n]$ and $j \in J$, it assigns distinct colors to the adjacent vertices $t_{i,j}$ and $f_{i,j}$. Since for each $i \in [n]$, the vertices $t_{i,j}$ with $j \in J$ are either all assigned the corresponding colors $\alpha_j$ or all assigned the corresponding colors $\beta_j$, by Lemma~\ref{lemma:gadget}, we can properly extend $c$ to the vertices added in Item~\ref{itm:gadget} of the construction. It follows that $c$ is a proper list-coloring of $(G,L)$.

We now show that for every clause $(y_1 \vee \cdots \vee y_k)$ in $\varphi$, its associated $r$-tuple $z=(z_1, \ldots, z_r)$ in $\calF$, defined in Item~\ref{itm:F} of the construction, satisfies $c(z) = (c(z_1), \ldots, c(z_r)) \in R$. By construction, for each $j \in [r]$, we have $L(z_j) = D_j$, hence $c(z) \in D_1 \times \cdots \times D_r$. Further, for some $s \in [k]$, the literal $y_s$ is evaluated to $\true$ under the given assignment.
Let $i \in [n]$ be the index of the variable $x_i$ appearing in the literal $y_s$. If $y_s = x_i$, then the value of $x_i$ is $\true$, hence  $c(z_{j_s}) = c(t_{i,j_s}) = \beta_{j_s}$, and if $y_s = \overline{x_i}$, then the value of $x_i$ is $\false$, hence $c(z_{j_s}) = c(f_{i,j_s}) = \beta_{j_s}$. In both cases, $c(z_{j_s}) = \beta_{j_s} \neq \alpha_{j_s}$, hence $c(z) \neq \alpha$. Since the only tuple in $D_1 \times \cdots \times D_r$ that does not belong to $R$ is $\alpha$, it follows that $c(z) \in R$, as required.

For the converse direction, suppose that there exists a proper list-coloring \mbox{$c:V \to [q]$} of $(G,L)$, such that for every $r$-tuple $z=(z_1, \ldots, z_r) \in \calF$, it holds that $c(z) = (c(z_1), \ldots, c(z_r)) \in R$. We define an assignment for the variables $x_1, \ldots, x_n$ of $\varphi$ as follows. Fix some $i \in [n]$. For every $j \in J$, the vertices $t_{i,j}$ and $f_{i,j}$ are adjacent in $G$, and the list function $L$ guarantees that $c(t_{i,j}) \in D_j$ and $c(f_{i,j}) \in D_j$, where $D_j = \{\alpha_j,\beta_j\}$. Moreover, Item~\ref{itm:gadget} of the construction ensures that either $c(t_{i,j}) = \alpha_j$ for all $j \in J$, and thus $c(f_{i,j}) = \beta_j$ for all $j \in J$, or $c(t_{i,j}) = \beta_j$ for all $j \in J$, and thus $c(f_{i,j}) = \alpha_j$ for all $j \in J$. We set the variable $x_i$ to $\false$ in the former case, and to $\true$ in the latter. We claim that this assignment satisfies $\varphi$. To see this, consider a clause $(y_1 \vee \cdots \vee y_k)$ in $\varphi$ and its associated $r$-tuple $z=(z_1, \ldots, z_r)$ in $\calF$, as defined in Item~\ref{itm:F} of the construction. Notice that the list function $L$ ensures that $c(z) = (c(z_1), \ldots, c(z_r)) \in D_1 \times \cdots \times D_r$. Further, by $c(z) \in R$, it follows that $c(z) \neq \alpha$, hence there exists some $j \in J$ such that $c(z_j) = \beta_j$. Let $s \in [k]$ be the index for which $j = j_s$, and let $i \in [n]$ be the index of the variable $x_i$ that appears in the literal $y_s$. If $y_s = x_i$, then the fact that $c(t_{i,j}) = c(z_j) = \beta_j$ implies that $x_i$ is set to $\true$, and if $y_s = \overline{x_i}$, then the fact that $c(f_{i,j}) = c(z_j) = \beta_j$ implies that $x_i$ is set to $\false$. In both cases, $y_s$ is evaluated to $\true$, hence the clause is satisfied.
This completes the proof.
\end{proof}

\subsection{Constrained Coloring}

We now prove Theorem~\ref{thm:Lower-R-CC}, which asserts that if an \OR relation of arity $k \geq 3$ is definable from a permutation-invariant relation $R \subseteq [q]^r$, then the $R$-\CC problem parameterized by the number of vertices $n$ does not admit a compression of size $O(n^{k-\eps})$ for any $\eps >0$ unless $\NP \subseteq \coNPpoly$. The proof relies on the following lemma.

\begin{lemma}\label{lemma:non-listCC}
For positive integers $q$ and $r$, let $R \subseteq [q]^r$ be a permutation-invariant relation of arity $r$ over $[q]$.
Then there exists a linear-parameter transformation from the $R$-\CLC problem to the $R$-\CC problem, both parameterized by the number of vertices.
\end{lemma}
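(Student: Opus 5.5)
The plan is the standard palette-clique reduction, which removes lists by naming the colors with vertices. Given an instance $(G,L,\calF)$ of $R$-\CLC with $G=(V,E)$ and $|V|=n$, I build an instance $(G',\calF')$ of $R$-\CC as follows. Add $q$ new vertices $p_1,\ldots,p_q$ forming a clique. For every $v \in V$ and every color $a \in [q]\setminus L(v)$, add the edge $\{v,p_a\}$. Keep all edges of $E$, and set $\calF'=\calF$. The new graph has $n+q$ vertices, so the parameter grows only by an additive constant and the transformation is linear-parameter. It is clearly computable in polynomial time. (If some list is empty the instance is trivially a $\NO$ instance. The construction handles this automatically, since $v$ is then adjacent to the whole palette and cannot be colored.)

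For correctness, first take a proper list-coloring $c$ of $(G,L)$ that satisfies every tuple in $\calF$. Extend it by $c(p_a)=a$. The palette is then properly colored. Each edge $\{v,p_a\}$ with $a\notin L(v)$ is fine because $c(v)\in L(v)$ forces $c(v)\neq a$. The constraints in $\calF'=\calF$ involve only original vertices, so they remain satisfied.

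For the converse, take a proper coloring $c'$ of $G'$ that satisfies $\calF'$. Since $p_1,\ldots,p_q$ form a clique on $q$ colors, the map $a\mapsto c'(p_a)$ is a permutation $\sigma$ of $[q]$. Define $c=\sigma^{-1}\circ c'$ on $V$.
\begin{itemize}
\item Properness on $E$ is preserved, since composing with a bijection keeps distinct colors distinct.
\item For $v\in V$ and $a\notin L(v)$, the edge $\{v,p_a\}$ gives $c'(v)\neq\sigma(a)$, so $c(v)\neq a$. Hence $c(v)\in L(v)$.
\item For $(x_1,\ldots,x_r)\in\calF$ we have $(c'(x_1),\ldots,c'(x_r))\in R$. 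Applying the permutation $\sigma^{-1}$ entrywise keeps the tuple in $R$ by permutation-invariance (Definition~\ref{def:perm}).
\end{itemize}
So $c$ is a valid solution of the list instance.

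The only delicate step is this converse direction. The palette colors are determined only up to a renaming of the colors, and permutation-invariance of $R$ is exactly what makes that renaming harmless. Everything else is routine. Combining this lemma with Theorem~\ref{thm:Lower-R-CLC} and Lemma~\ref{lemma:composition} then yields Theorem~\ref{thm:Lower-R-CC}.
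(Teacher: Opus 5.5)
Your proposal is correct and follows the paper's proof essentially step for step. The paper likewise adds a $q$-clique $z_1,\ldots,z_q$, joins each $v$ to every $z_i$ with $i \in [q]\setminus L(v)$, and keeps $\calF$ unchanged. In the converse direction it renames colors by a permutation $\pi$ with $\pi(c'(z_i))=i$ (your $\sigma^{-1}$), and uses permutation-invariance of $R$ to carry the $\calF$-constraints over to the renamed coloring.
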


\begin{proof}
Let $R \subseteq [q]^r$ be a permutation-invariant relation.
Consider an instance of $R$-\CLC, consisting of a graph $G=(V,E)$, a list function \mbox{$L:V \to \calP([q])$}, and a collection $\calF \subseteq V^r$. Define a transformation that, given such an instance, returns the pair $(G',\calF)$, where $G'=(V',E')$ is the graph obtained from $G$ by adding a clique on $q$ vertices, denoted by $z_1, \ldots, z_q$, and connecting each vertex $v \in V$ to all vertices $z_i$ with \mbox{$i \in [q] \setminus L(v)$}. The number of vertices in $G'$ is $|V|+q$, hence the transformation is linear-parameter.

For correctness, suppose first that $(G,L,\calF)$ is a $\YES$ instance of $R$-\CLC, and consider a proper list-coloring $c: V \to [q]$ of $(G,L)$, such that for every $r$-tuple $(x_1, \ldots, x_r) \in \calF$, it holds that $(c(x_1),\ldots,c(x_r)) \in R$. Let $c'$ be the coloring of $G'$ that extends $c$ by assigning to the vertex $z_i$ the color $i$ for all $i \in [q]$. The coloring $c'$ clearly assigns distinct colors to the endpoints of every edge of $G$ and of every edge in the clique $\{z_1, \ldots, z_q\}$. Further, since $c$ respects the list function $L$, every vertex $v \in V$ satisfies $c'(v) = c(v) \in L(v)$, and thus $c'(v) \neq c'(z_i)$ whenever $i \in [q] \setminus L(v)$. This implies that $c'$ forms a proper coloring of $G'$. As an extension of $c$, it also satisfies the constraints in $\calF$, so it forms a valid solution for the instance $(G',\calF)$ of $R$-\CC.

Conversely, suppose that $(G',\calF)$ is a $\YES$ instance of $R$-\CC, and consider a proper coloring $c':V' \to [q]$ of $G'$, such that for every $r$-tuple $(x_1, \ldots, x_r) \in \calF$, it holds that $(c'(x_1),\ldots,c'(x_r)) \in R$. Since the vertices $z_1, \ldots, z_q$ are pairwise adjacent in $G'$, they receive distinct colors under $c'$, so there exists a permutation $\pi:[q] \to [q]$ such that $\pi(c'(z_i)) = i$ for all $i \in [q]$. Let $c: V \to [q]$ denote the coloring of $G$ defined by $c(v) = \pi(c'(v))$ for all $v \in V$. For every pair of adjacent vertices $u$ and $v$ in $G$, it holds that $c'(u) \neq c'(v)$, and thus $c(u) \neq c(v)$, so $c$ is a proper coloring of $G$. Furthermore, $c$ respects the list function $L$. Indeed, for every vertex $v$ and every $i \in [q] \setminus L(v)$, the vertices $v$ and $z_i$ are adjacent in $G'$, hence $c'(v) \neq c'(z_i)$, which implies that $c(v) = \pi(c'(v)) \neq \pi(c'(z_i)) = i$, yielding that $c(v) \in L(v)$. Finally, for every $r$-tuple $(x_1, \ldots, x_r) \in \calF$, it holds that $(c'(x_1),\ldots,c'(x_r)) \in R$, and since $R$ is permutation-invariant, it follows that $(c(x_1),\ldots,c(x_r)) \in R$. This shows that $(G,L,\calF)$ is a $\YES$ instance of $R$-\CLC, as required.
\end{proof}

We are ready to derive Theorem~\ref{thm:Lower-R-CC}.

\begin{proof}[ of Theorem~\ref{thm:Lower-R-CC}]
For positive integers $q$, $r$, and $k \geq 3$, let $R \subseteq [q]^r$ be a permutation-invariant relation of arity $r$ over $[q]$ from which an \OR relation of arity $k$ is definable. By Lemma~\ref{lemma:non-listCC}, there exists a linear-parameter transformation from the $R$-\CLC problem to the $R$-\CC problem, both parameterized by the number of vertices $n$. If for some $\eps >0$, the latter admits a compression of size $O(n^{k-\eps})$, then by Lemma~\ref{lemma:composition}, so does the former. This implies, by Theorem~\ref{thm:Lower-R-CLC}, that $\NP \subseteq \coNPpoly$, and we are done.
\end{proof}

We conclude this section by highlighting the role of \OR-definability in the kernel complexity of $R$-\CC problems. Theorem~\ref{thm:Lower-R-CC} implies that if an \OR relation of arity $r$ is definable from a permutation-invariant relation $R$ of arity $r \geq 3$, then the trivial kernel for the $R$-\CC problem parameterized by the number of vertices is nearly optimal under a standard complexity assumption. As mentioned in the introduction, this definability condition is crucial, because if no \OR of arity $r$ is definable from $R$, a smaller kernel is achievable. This follows from a result of Carbonnel~\cite{Carbonnel22}, and we include here his argument for completeness.

\begin{proposition}[\cite{Carbonnel22}]\label{prop:Carbonnel}
For positive integers $q$ and $r \geq 3$, let $R \subseteq [q]^r$ be a relation of arity $r$ over $[q]$, such that no \OR relation of arity $r$ is definable from $R$. Then the $R$-\CC problem parameterized by the number of vertices $n$ admits a kernel with $O(n^{r-\eps})$ constraints for $\eps = 2^{1-r}$.
\end{proposition}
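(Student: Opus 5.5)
The plan is to use Carbonnel's argument: a redundancy-elimination kernel combined with the hypergraph Zarankiewicz bound of Erdős. The non-equality constraints number at most $\binom{n}{2} = O(n^2) \le O(n^{r-\eps})$, since $r \geq 3$ and $\eps < 1$, so we keep them all and reduce only $\calF$. We may also assume the tuples in $\calF$ have pairwise distinct entries. A tuple with repeated vertices is equivalent to a constraint of smaller arity, by identifying coordinates, and there are only $O(n^{r-1})$ such tuples.

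The kernel first removes duplicates. It then repeatedly deletes any $r$-tuple $z \in \calF$ that is implied by the remaining constraints restricted to $z$'s own coordinates, in the following local sense. Fix $z=(z_1,\ldots,z_r)$ and consider the set $S_z \subseteq [q]^r$ of assignments to $z_1,\ldots,z_r$ that satisfy every other constraint of $\calF$ whose tuple is contained in $\{z_1,\ldots,z_r\}$, together with the non-equality edges among them. If $S_z \subseteq R$, then $z$ is implied and can be deleted without changing the answer. This check takes constant time per tuple, since $q$ and $r$ are fixed.

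The core claim is that in a \emph{non-redundant} instance, where no deletion applies, $|\calF| = O(n^{r-2^{1-r}})$. I would argue the contrapositive. View $\calF$ as an $r$-partite $r$-uniform hypergraph: coordinate $j$ gives part $j$, and we take $n$ copies of the vertex set per part. Erdős's theorem on complete $r$-partite hypergraphs states that any $r$-uniform $r$-partite hypergraph with more than $C n^{r-1/2^{r-1}}$ edges contains a copy of $K^{(r)}(2,\ldots,2)$. Hence there are sets $\{a_j,b_j\}$ of size $2$, one per coordinate, such that all $2^r$ tuples in $\prod_j \{a_j,b_j\}$ lie in $\calF$.

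The main obstacle is turning this box of $2^r$ constraints into a definable \OR of arity $r$. The point of non-redundancy is that each tuple $w$ in the box is not implied by the others, so some assignment $c_w$ violates $w$ while satisfying everything else. Here I would use Carbonnel's formulation of redundancy, which is \emph{global} (over all of $\calF$ on all $n$ variables), rather than the local version above. The local kernel is weaker, so to keep the argument sound I would switch to Carbonnel's global criterion.

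To make the argument work, I would combine the witnesses for two antipodal tuples. Suppose $c$ violates $w = (a_1,\ldots,a_r)$ but satisfies every other box tuple. Set $D_j = \{c(a_j), c(b_j)\}$. If $c(a_j) \ne c(b_j)$ for all $j$, then the $2^r$ box tuples realize exactly $D_1\times\cdots\times D_r$, and exactly one of them, $w$, is falsified. This is an \OR of arity $r$ definable from $R$, contradicting the hypothesis. If instead $c(a_j)=c(b_j)$ for some $j$, then the box tuple $w'$ that differs from $w$ only in coordinate $j$ receives the same colors as $w$. Since $w' \in \calF$ is satisfied by $c$, this contradicts $w$ being violated. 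So every coordinate is split, and the contradiction holds.

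Thus the only real work is to guarantee a witness $c$ that violates exactly $w$ among the box. Global non-redundancy gives precisely this: $c$ satisfies all of $\calF$ except $w$. The kernel therefore has to compute a non-redundant subinstance in polynomial time. I expect this to be the delicate point, because deciding global redundancy is itself NP-hard in general. I would resolve it as Carbonnel does, by applying the Erdős bound directly in a local, algorithmic form: whenever a $K(2,\ldots,2)$ box exists, which can be found in polynomial time for fixed $r$, the argument above shows that some box tuple is implied by the others within the box. That is a constant-size local check, so we delete that tuple and repeat until $|\calF| = O(n^{r-2^{1-r}})$.
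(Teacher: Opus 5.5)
Once you drop the abandoned detours (the first local check on $z$'s own coordinates, which cannot see box tuples involving other vertices, and the NP-hard global non-redundancy), your final algorithm is the paper's proof: repeatedly delete a tuple from any box $A_1\times\cdots\times A_r$ of $2$-subsets lying entirely in $\calF$, and bound what remains by Erdős's theorem applied to the $r$-partite hypergraph on $V\times[r]$. Your case analysis, including $c(a_j)=c(b_j)$, shows that every box tuple is implied by the other $2^r-1$, not just some of them, so any tuple can be deleted without a check. This fills in a detail that the paper's one-line correctness claim leaves implicit.
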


\begin{proof}
Let $R \subseteq [q]^r$ be a relation of arity $r \geq 3$ over $[q]$, such that no \OR relation of arity $r$ is definable from $R$. Consider an instance of the $R$-\CC problem, namely, a graph \mbox{$G=(V,E)$} on $n$ vertices and a collection $\calF \subseteq V^r$. We define an algorithm that, given such an input, initiates a collection $\calF'$ as $\calF$ and acts as follows. As long as there exist $r$ sets $A_1, \ldots, A_r \subseteq V$, of size $2$ each, such that all the $2^r$ tuples in the product $A_1 \times \cdots \times A_r$ lie in $\calF'$, it removes an arbitrary one of those tuples from $\calF'$. Finally, the algorithm returns the pair $(G,\calF')$.

It can be seen that the above algorithm can be implemented in polynomial time for a fixed integer $r$. Its correctness follows from the assumption that no \OR relation of arity $r$ is definable from $R$, which guarantees that any coloring of $G$ that satisfies $2^r-1$ of the $r$-tuples in a product $A_1 \times \cdots \times A_r$ of $2$-subsets $A_1, \ldots, A_r$ of $V$ must also satisfy the remaining one. Finally, to bound the number of constraints in $\calF'$, consider the $r$-uniform hypergraph on the vertex set $V \times [r]$, which includes, for each $r$-tuple $(x_1, \ldots, x_r) \in \calF'$, the edge $\{(x_i,i) \mid i \in [r]\}$. The algorithm ensures that this hypergraph contains no copy of the complete $r$-uniform $r$-partite hypergraph with vertex classes of size $2$, so a result of Erd\H{o}s~\cite{Erdos64} implies that the number of its edges is bounded by $O(n^{r-\eps})$ for $\eps = 2^{1-r}$. By $r \geq 3$, the total number of constraints in $(G,\calF')$, including the edges of $G$, is bounded by $O(n^{r-\eps})$, so we are done.
\end{proof}

\section{Lower Bounds for Uniformly Rainbow Free Coloring}\label{sec:URFClower}

In this section, we prove the lower bounds on the compression size of the $(d,\ell,q)$-\URFC problems stated in Theorem~\ref{thm:Full-URFC} (see Definition~\ref{def:URFC}).
We first observe that these problems fit the framework of $R$-\CC for appropriate relations $R$ and analyze the arity of the \OR relations definable from these relations. For $q \geq 3$, the lower bounds are deduced from Theorem~\ref{thm:Lower-R-CC}, and the case $q=2$ is treated separately.

\subsection{The Uniformly Rainbow Freeness Relation}

For positive integers $d$, $\ell$, and $q \geq d$, the $(d,\ell,q)$-\URFC problem can be realized as the $R$-\CC problem, where $R$ is the relation $\NUR_{d,\ell}^q$ (Uniformly Rainbow Freeness), defined as follows.

\begin{definition}\label{def:NUR}
For positive integers $d$, $\ell$, and $q \geq d$, we define the relations $\UR_{d,\ell}^q$ and $\NUR_{d,\ell}^q$ of arity $d \ell$ over $[q]$, whose $d \ell$-tuples are viewed as $d \times \ell$ matrices with entries indexed by $[d] \times [\ell]$.
Let $\UR_{d,\ell}^q \subseteq [q]^{d \times \ell}$ be the collection of all matrices $M \in [q]^{d \times \ell}$ in which each column consists of $d$ distinct elements and all columns contain the same set of elements, that is, the sets $\{M_{i,j} \mid i \in [d]\}$ for $j \in [\ell]$ are all equal and of size $d$. Further, let $\NUR_{d,\ell}^q = [q]^{d \times \ell} \setminus \UR_{d,\ell}^q$.
\end{definition}
\noindent
Note that the constraints representable by the relation $\NUR_{d,\ell}^q$ are precisely the uniformly rainbow freeness constraints allowed in the $(d,\ell,q)$-\URFC problem. This is because the order of vertices in each column does not affect membership in the relation, and because constraints with a repeated vertex in a column are always satisfied and thus can be omitted. Note further that the relation $\NUR_{d,\ell}^q$ is permutation-invariant.

The following lemma addresses the arity of \OR relations definable from $\NUR_{d,\ell}^q$.

\begin{lemma}\label{lemma:NURdefinable}
For all positive integers $d$ and $\ell$, the following statements hold.
\begin{enumerate}
  \item\label{itm:definable_q} If $\ell \geq 2$, then for all integers $q \geq d+2$, an \OR relation of arity $d \ell$ is definable from $\NUR_{d,\ell}^q$.
  \item\label{itm:definable_d+1} For all integers $q \geq d+1$, an \OR relation of arity $d \ell - 1$ is definable from $\NUR_{d,\ell}^{q}$.
  \item\label{itm:definable_d} For all integers $q \geq d$, an \OR relation of arity $(d-1) \ell$ is definable from $\NUR_{d,\ell}^q$.
\end{enumerate}
\end{lemma}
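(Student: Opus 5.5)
The plan is to give explicit witnesses for Definition~\ref{def:OR definable}. We view the $d\ell$ coordinates as the cells $(i,j)\in[d]\times[\ell]$ of a matrix and choose a set $D_{i,j}\subseteq[q]$ for each cell. The required number of these sets have size $2$ and the rest have size $1$. We arrange that exactly one matrix in $\prod_{i,j} D_{i,j}$ lies in $\UR_{d,\ell}^q$, namely the matrix $M^*$ with $M^*_{i,j}=i$ for all $(i,j)$. This is equivalent to saying that all other $2^k-1$ matrices lie in $\NUR_{d,\ell}^q$. In each construction we first check that $M^*$ lies in the product and is uniformly rainbow. We then show that any uniformly rainbow matrix $M$ in the product must equal $M^*$.

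\emph{Item~\ref{itm:definable_d}} ($q\ge d$, arity $(d-1)\ell$). For every column $j$, set $D_{d,j}=\{d\}$ and $D_{i,j}=\{i,i+1\}$ for $i<d$. Exactly $(d-1)\ell$ of these sets have size $2$, and all colors lie in $[d]\subseteq[q]$. Suppose a column of $M$ is rainbow. Then $M_{d,j}=d$. Hence $M_{d-1,j}\ne d$, which forces $M_{d-1,j}=d-1$. Backward induction on $i$ then gives $M_{i,j}=i$ for every $i$. So the only matrix in the product with all columns rainbow is $M^*$, and $M^*$ is uniformly rainbow.

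\emph{Item~\ref{itm:definable_d+1}} ($q\ge d+1$, arity $d\ell-1$). For column $1$, use the staircase from the previous item: $D_{d,1}=\{d\}$ and $D_{i,1}=\{i,i+1\}$ for $i<d$. For every column $j\ge2$ and every row $i$, set $D_{i,j}=\{i,d+1\}$. This gives $(d-1)+d(\ell-1)=d\ell-1$ sets of size $2$. If $M$ is uniformly rainbow, the staircase argument forces column $1$ to equal the identity column, so the common color set is $[d]$. In each column $j\ge2$ the color $d+1$ is therefore excluded, which forces $M_{i,j}=i$. The degenerate case $d=1$ is covered by the same formulas: $D_{1,1}=\{1\}$ and $D_{1,j}=\{1,2\}$ for $j\ge2$.

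\emph{Item~\ref{itm:definable_q}} ($\ell\ge2$, $q\ge d+2$, arity $d\ell$). Use two spare colors. Set $D_{i,1}=\{i,d+1\}$ and $D_{i,j}=\{i,d+2\}$ for all $j\ge2$. All $d\ell$ sets then have size $2$. Let $M$ be uniformly rainbow with common color set $S$. Column $1$ never uses $d+2$, so $d+2\notin S$. Column $2$ exists because $\ell\ge2$, and it never uses $d+1$, so $d+1\notin S$. Hence no cell of $M$ takes a spare color, and therefore $M=M^*$.

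I expect the main difficulty to lie in the uniqueness arguments, specifically in breaking the symmetry among columns and among rows. A naive choice such as $\{i,d+1\}$ in every cell with one singleton cell fails. It admits several uniformly rainbow matrices, obtained by placing $d+1$ in the same row of every column. The staircase column rules this out for items~\ref{itm:definable_d} and~\ref{itm:definable_d+1}, and the two distinct spare colors rule it out for item~\ref{itm:definable_q}. Once the sets are fixed, the remaining verification is routine.
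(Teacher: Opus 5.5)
Your proof is correct and takes essentially the paper's approach. You use explicit witness sets in which the matrix with entries $i$ in row $i$ is the unique uniformly rainbow matrix of the product, and for Item~1 your sets are exactly the paper's. For Items~2 and~3 the paper fixes the first row to color $1$ and offers $\{i,1\}$ in rows $i\ge 2$ (plus $\{1,d+1\}$ in the first row of columns $j\ge 2$ for Item~2), so any deviation immediately repeats color $1$ in its column; your staircase $\{i,i+1\}$ needs a short backward induction instead, and both choices work.
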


\begin{proof}
Fix positive integers $d$, $\ell$, and $q \geq d$. By Definition~\ref{def:OR definable}, to prove that an \OR relation of some arity $k$ is definable from the relation $\NUR_{d,\ell}^q$, we have to produce a set $J \subseteq [d] \times [\ell]$ of size $|J|=k$, a matrix $\alpha \in [q]^{d \times \ell}$, and elements $\beta_{i,j} \in [q]$ with $\beta_{i,j} \neq \alpha_{i,j}$ for all pairs $(i,j) \in J$, such that all matrices $M \in [q]^{d \times \ell}$ satisfying $M_{i,j} \in \{\alpha_{i,j},\beta_{i,j}\}$ for all $(i,j) \in J$ and $M_{i,j} = \alpha_{i,j}$ for all $(i,j) \notin J$ belong to $\NUR_{d,\ell}^q$, except the single matrix $\alpha$. For all three items of the lemma, we choose $\alpha \in [q]^{d \times \ell}$ as the matrix defined by $\alpha_{i,j} = i$ for all $(i,j) \in [d] \times [\ell]$. Note that each column of $\alpha$ consists of the $d$ elements of $[d]$, hence $\alpha \notin \NUR_{d,\ell}^q$. We proceed by handling each item of the lemma separately (see Figure~\ref{fig:alpha_beta} for an illustration).

For Item~\ref{itm:definable_q}, suppose that $\ell \geq 2$ and $q \geq d+2$, set $J = [d] \times [\ell]$, and for each $(i,j) \in J$, define
\[
\beta_{i,j} =
    \begin{cases}
        d+1, & \text{if } j = 1,\\
        d+2, & \text{if } j \ge 2.
    \end{cases}
\]
Note that $\alpha_{i,j} \neq \beta_{i,j}$ for all $(i,j) \in J$.
Let $M \in [q]^{d \times \ell}$ be a matrix such that $M_{i,j} \in \{\alpha_{i,j},\beta_{i,j}\}$ for all $(i,j) \in J$ and $M \neq \alpha$.
It follows that there exists a pair $(i,j) \in [d] \times [\ell]$ for which \mbox{$M_{i,j} = \beta_{i,j} \in \{d+1,d+2\}$}. However, the value $d+1$ is missing from the second column of $M$, and the value $d+2$ is missing from the first column of $M$. Therefore, the sets of values in the columns of $M$ are not all equal, implying that $M \in \NUR_{d,\ell}^q$. Since $|J| = d\ell$, it follows that an \OR relation of arity $d\ell$ is definable from $\NUR_{d,\ell}^q$, as required.

For Item~\ref{itm:definable_d+1}, suppose that $q \geq d+1$, and let $J = ([d] \times [\ell]) \setminus \{(1,1)\}$ be the set of all entries except the entry in the first row and first column. For each $(i,j) \in J$, define
\[
\beta_{i,j} =
    \begin{cases}
        d+1, & \text{if } i = 1,\\
        1, & \text{otherwise.}
    \end{cases}
\]
We clearly have $\alpha_{i,j} \neq \beta_{i,j}$ for all $(i,j) \in J$.
Let $M \in [q]^{d \times \ell}$ be a matrix with \mbox{$M_{i,j} \in \{\alpha_{i,j},\beta_{i,j}\}$} for all $(i,j) \in J$, $M_{1,1} = \alpha_{1,1}=1$, and $M \neq \alpha$. It follows that there exists a pair $(i,j) \in J$ such that $M_{i,j}=\beta_{i,j}$. If for some $j \in [\ell] \setminus \{1\}$, it holds that $M_{1,j} = \beta_{1,j} = d+1$, then $d+1$ appears in the $j$th column of $M$ but is missing from its first column. Otherwise, we have $M_{1,j}=\alpha_{1,j}=1$ for all $j \in [\ell]$, and there exist $i \in [d] \setminus \{1\}$ and $j \in [\ell]$ such that $M_{i,j} = \beta_{i,j} = 1$, so the value $1$ appears at least twice in the $j$th column of $M$. In both cases, we have $M \in \NUR_{d,\ell}^{q}$. Therefore, by $|J| = d\ell-1$, an \OR relation of arity $d\ell-1$ is definable from $\NUR_{d,\ell}^{q}$, as required.

Finally, for Item~\ref{itm:definable_d}, suppose that $q \geq d$, let $J = ([d] \setminus \{1\}) \times [\ell]$ be the set of all entries except those of the first row, and define $\beta_{i,j} = 1$ for all pairs $(i,j) \in J$. It holds that $\alpha_{i,j} \neq \beta_{i,j}$ for all $(i,j) \in J$. Let $M \in [q]^{d \times \ell}$ be a matrix with $M_{i,j} \in \{\alpha_{i,j},\beta_{i,j}\}$ for all $(i,j) \in J$, $M_{i,j} = \alpha_{i,j}$ for all $(i,j) \notin J$, and $M \neq \alpha$. It follows that there exists a pair $(i,j) \in J$ such that $M_{i,j}=\beta_{i,j}=1$. However, we have $M_{1,j} = \alpha_{1,j} = 1$, hence the value $1$ appears at least twice in the $j$th column of $M$, implying that $M \in \NUR_{d,\ell}^q$. By $|J| = (d-1)\ell$, we obtain that an \OR relation of arity $(d-1) \ell$ is definable from $\NUR_{d,\ell}^q$. This completes the proof.
\end{proof}

\begin{figure}[htb]
\[
\begin{array}{c@{\quad}c@{\quad}c@{\quad}c}
\alpha &
\beta\ \text{for } q \ge 5 &
\beta\ \text{for } q = 4 &
\beta\ \text{for } q = 3 \\[2mm]
\begin{pmatrix}
1 & 1 & 1 & 1 \\
2 & 2 & 2 & 2 \\
3 & 3 & 3 & 3
\end{pmatrix} &
\begin{pmatrix}
4 & 5 & 5 & 5 \\
4 & 5 & 5 & 5 \\
4 & 5 & 5 & 5
\end{pmatrix} &
\begin{pmatrix}
- & 4 & 4 & 4 \\
1 & 1 & 1 & 1 \\
1 & 1 & 1 & 1
\end{pmatrix} &
\begin{pmatrix}
- & - & - & - \\
1 & 1 & 1 & 1 \\
1 & 1 & 1 & 1
\end{pmatrix}
\end{array}
\]
\caption{An illustration of the proof of Lemma~\ref{lemma:NURdefinable} for $d=3$ and $\ell=4$. The matrix $\alpha$ and the partial matrices $\beta$ are depicted for all values of $q \geq 3$, with a dash indicating an entry outside $J$. The arity of the \OR relation definable from $\NUR_{d,\ell}^q$ is equal to the number of non-dash entries in $\beta$.}
\label{fig:alpha_beta}
\end{figure}

\subsection{\texorpdfstring{Uniformly Rainbow Free $2$-Coloring}{Uniformly Rainbow Free 2-Coloring}}

We now focus on the $(d,\ell,q)$-\URFC problems with $q=2$. We obtain the following lower bound on their compression size.

\begin{proposition}\label{prop:q=2}
For every integer $k \geq 3$, each of the $(1,k,2)$- and $(2,k-1,2)$-\URFC problems parameterized by the number of vertices $n$ does not admit a compression of size $O(n^{k-1-\eps})$ unless $\NP \subseteq \coNPpoly$.
\end{proposition}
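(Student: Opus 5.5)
We reduce from $k$-\NAE, whose compression lower bound of $O(n^{k-1-\eps})$ is Theorem~\ref{thm:NAE}, and combine it with Lemma~\ref{lemma:composition}. So it suffices to give, for each of the two problems, a linear-parameter transformation from $k$-\NAE parameterized by the number of variables to the problem parameterized by the number of vertices.

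\textbf{The $(1,k,2)$ case.} Here $\calF$ consists of $k$-tuples of singletons. Such a tuple is uniformly rainbow exactly when all $k$ vertices receive the same color, so the constraint says the tuple is not monochromatic. Given a $k$-CNF formula $\varphi$ on $x_1,\ldots,x_n$, we build $G$ with $2n$ vertices $t_i,f_i$ and edges $\{t_i,f_i\}$. With $q=2$, any proper coloring gives $t_i$ and $f_i$ opposite colors, so color $1$ on $t_i$ can be read as $x_i=\true$. For a clause $(y_1\vee\cdots\vee y_k)$ we add the tuple $(\{z_1\},\ldots,\{z_k\})$, where $z_s=t_i$ if $y_s=x_i$ and $z_s=f_i$ if $y_s=\overline{x_i}$. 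A non-monochromatic tuple then corresponds to a clause that contains both a true and a false literal. Both directions of correctness are immediate, and the vertex count is $2n$.

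\textbf{The $(2,k-1,2)$ case.} With $q=2$, a $2$-set is rainbow iff its two vertices get different colors, and every rainbow pair has color set $\{1,2\}$. So an $\ell$-tuple of pairs is uniformly rainbow iff every pair in it is bichromatic, and the constraint says that some pair in the tuple is monochromatic. To encode a $k$-\NAE clause with literals $y_1,\ldots,y_k$, we take the $k-1$ pairs $\{z_1,z_s\}$ for $s=2,\ldots,k$, using the same vertices $z_s$ as above. For this to be a legal tuple of $2$-subsets we need $z_1\neq z_s$; we may assume the clause has distinct variables, so this holds.

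We check that this encodes NAE correctly. Some pair $\{z_1,z_s\}$ is monochromatic iff some literal $y_s$ with $s\ge 2$ has the same value as $y_1$. That is not the same as not-all-equal, which needs some literal equal to $y_1$ \emph{and} some literal differing from it. Failing to be uniformly rainbow only gives the first condition, so this encoding does not express NAE directly.

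I would fix this by first reducing $k$-\NAE to the relation ``not all of $y_2,\ldots,y_k$ differ from $y_1$'', i.e., $y_1\to(y_2\vee\cdots)$ type constraints. With complementation available via the $t_i,f_i$ vertices, the constraint ``some $z_s$ equals $z_1$'' with $z_s$ replaced by complemented literals becomes ``not all of $y_2,\ldots,y_k$ equal $y_1$''. NAE is then exactly ``$y_1\neq y_s$ for some $s$'' $\wedge$ ``$y_1= y_s$ for some $s$''. So each NAE clause is represented by the tuple $\{z_1,\overline{z_s}\}_{s\ge2}$, which expresses that some $y_s$ differs from $y_1$, together with a second tuple $\{z_1,z_s\}_{s\ge2}$, which expresses that some $y_s$ equals $y_1$. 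Here $\overline{z_s}$ denotes the partner vertex ($f_i$ for $t_i$ and vice versa). The conjunction of the two tuples is precisely the NAE condition on the clause. This uses only $2n$ vertices and is a linear-parameter transformation.

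The main obstacle is exactly this mismatch in the $(2,k-1,2)$ case. Resolving it via the pair of tuples per clause, and verifying that the pairs are genuine $2$-subsets (distinct variables in each clause), is the delicate step. The rest follows routinely from Theorem~\ref{thm:NAE} and Lemma~\ref{lemma:composition}.
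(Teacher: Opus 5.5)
Your $(1,k,2)$ reduction is correct and is exactly the paper's. The $(2,k-1,2)$ part, however, rests on a false claim. You assert that NAE on a clause is the conjunction of ``$y_s\neq y_1$ for some $s\ge 2$'' and ``$y_s=y_1$ for some $s\ge 2$''. But NAE only requires that not all literals be equal, and that is precisely the first condition alone; the literal $y_1$ already witnesses its own value.

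For example, $y_1=\true$, $y_2=\cdots=y_k=\false$ NAE-satisfies the clause, yet it makes your second tuple $(\{z_1,z_2\},\ldots,\{z_1,z_k\})$ uniformly rainbow. Your instance therefore imposes strictly stronger constraints than the formula, and the direction ``NAE-satisfiable $\Rightarrow$ YES instance'' fails. Concretely, for $k=3$ your two tuples together force exactly $y_2\neq y_3$. Take the formula $(x_1\vee x_2\vee x_3)\wedge(x_1\vee x_3\vee x_4)\wedge(x_1\vee x_4\vee x_2)$ with $x_1$ listed first in each clause. It is NAE-satisfied by $x_1=\true$, $x_2=x_3=x_4=\false$. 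Your instance, however, demands $x_2,x_3,x_4$ pairwise distinct over two values, which is impossible, so it is a NO instance.

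The fix is to drop the second tuple. With $q=2$, every rainbow pair has color set $\{1,2\}$. So the single tuple $(\{z_1,\overline{z_2}\},\ldots,\{z_1,\overline{z_k}\})$ is uniformly rainbow iff every pair $\{z_1,\overline{z_s}\}$ is bichromatic. That holds iff $c(z_s)=c(z_1)$ for all $s\ge 2$, i.e., iff the clause is all-equal. Hence this one tuple encodes NAE exactly, and it is precisely the paper's construction.

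Your uncomplemented attempt with pairs $\{z_1,z_s\}$ did not fail because NAE needs an extra condition. It failed because it expresses a different condition, ``some $y_s$ equals $y_1$,'' which is the very spurious condition you then added.
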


Proposition~\ref{prop:q=2} stems from the lower bounds on the compression size of the $k$-\NAE problems, stated in Theorem~\ref{thm:NAE}, via the following lemma.

\begin{lemma}\label{lemma:NAE<URFC_2}
For every integer $k \geq 2$, there exist linear-parameter transformations from the $k$-\NAE problem, parameterized by the number of variables, to each of the following problems, parameterized by the number of vertices.
\begin{enumerate}
  \item\label{itm:1,k,2} $(1,k,2)$-\URFC.
  \item\label{itm:2,k-1,2} $(2,k-1,2)$-\URFC.
\end{enumerate}
\end{lemma}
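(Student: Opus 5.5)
We reduce from $k$-\NAE with one vertex per literal. Given a $k$-CNF formula $\varphi$ on variables $x_1,\ldots,x_n$, the graph $G$ gets two vertices $a_i$ (for $x_i$) and $b_i$ (for $\overline{x_i}$), joined by an edge, for each $i \in [n]$. A proper $2$-coloring with colors $\{1,2\}$ then colors $a_i$ and $b_i$ differently, and we read $x_i$ as $\true$ exactly when $c(a_i)=1$. Each literal $y$ of a clause is thus represented by a vertex $w(y)$ whose color is $1$ iff $y$ is $\true$. This gives $2n$ vertices, so both transformations will be linear-parameter, and both run in polynomial time. Clauses in which a variable occurs both positively and negatively are always NAE-satisfied and can be discarded first.

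For Item~\ref{itm:1,k,2}, each clause $(y_1\vee\cdots\vee y_k)$ becomes the $k$-tuple of singletons $(\{w(y_1)\},\ldots,\{w(y_k)\})$. With $d=1$, every singleton is rainbow, so the tuple is uniformly rainbow exactly when all $k$ vertices receive the same color. Hence the constraint says precisely that the clause is not all equal. Colorings and assignments correspond bijectively through the variable pairs, so $\varphi$ is NAE-satisfiable iff the instance is a $\YES$ instance.

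For Item~\ref{itm:2,k-1,2}, take $d=2$ and $q=2$. A $2$-set is rainbow iff its two vertices get different colors, and then its color set is always $\{1,2\}$. So an $\ell$-tuple of $2$-sets is uniformly rainbow iff every set in it is bichromatic. For the clause $(y_1\vee\cdots\vee y_k)$, use the $(k-1)$-tuple of pairs $(\{w(y_1),w(y_2)\},\{w(y_2),w(y_3)\},\ldots,\{w(y_{k-1}),w(y_k)\})$. The constraint forbids the event that every consecutive pair is bichromatic, i.e.\ that the colors alternate along $y_1,\ldots,y_k$. That is not NAE, so this direct encoding fails.

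This is the main obstacle: the second item needs a tuple that is uniformly rainbow exactly when all $k$ literals are equal. Using $\overline{y_{s+1}}$ instead solves it. Let $w'(y)$ be the partner vertex of $w(y)$, namely $b_i$ if $w(y)=a_i$ and vice versa; its color is always the opposite of $w(y)$'s. Use the pairs $F_s=\{w(y_s),w'(y_{s+1})\}$ for $s\in[k-1]$. Then $F_s$ is bichromatic iff $y_s$ and $y_{s+1}$ have equal values. So the tuple is uniformly rainbow iff all literals are equal, and forbidding it is exactly the NAE condition.

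Each $F_s$ must be a genuine $2$-set, meaning its two vertices are distinct. This holds because, after the discarding step, $y_s$ and $y_{s+1}$ involve distinct variables (clauses have distinct variables by definition), so no vertex collision occurs. Correctness then follows exactly as in Item~\ref{itm:1,k,2}, and the transformation is linear-parameter with $2n$ vertices.
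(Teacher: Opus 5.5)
Your proof is correct and follows the paper's: the same perfect matching on the $2n$ literal vertices, and the same singleton tuple for Item~1. For Item~2 you use the same trick of pairing a literal with the \emph{negation} of another, except that you chain consecutive literals via $\{y_s,\overline{y_{s+1}}\}$, while the paper uses the star $\{y_1,\overline{y_j}\}$ for $j\in[k]\setminus\{1\}$; in either case, all pairs being bichromatic forces all literals to be equal. Your preliminary discarding step is harmless but unnecessary, since the paper's $k$-CNF formulas already have $k$ distinct variables in each clause.
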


\begin{proof}
Fix an integer $k \geq 2$. Let $\varphi$ be an instance of the $k$-\NAE problem, that is, a $k$-CNF formula on $n$ variables, denoted by $x_1, \ldots, x_n$. We define a transformation that, given such an instance, produces an instance $(G,\calF)$ as follows. The graph $G$ is a perfect matching on $2n$ vertices, identified with the literals in
\[\{x_i \mid i \in [n]\} \cup \{\overline{x_i} \mid i \in [n]\},\]
where each variable vertex $x_i$ is adjacent to the vertex $\overline{x_i}$ corresponding to its negation. For each clause $(y_{1} \vee \cdots \vee y_{k})$ in $\varphi$, we add to $\calF$ a tuple, whose definition depends on the considered item of the lemma. For Item~\ref{itm:1,k,2}, we add the $k$-tuple
\[(\{y_{1}\}, \{y_{2}\},\ldots, \{y_{k}\}).\]
For Item~\ref{itm:2,k-1,2}, we add the $(k-1)$-tuple
\[(\{y_{1},\overline{y_{2}}\}, \{y_{1},\overline{y_{3}}\}, \ldots, \{y_{1},\overline{y_{k}}\}),\]
where negation swaps a variable vertex with the vertex representing its negation, and vice versa.
This completes the description of the instance $(G,\calF)$, which can be constructed in polynomial time and forms a valid instance of the corresponding \URFC problem. The transformation is linear-parameter, because $G$ has $2n$ vertices.

The correctness of the above transformations relies on viewing every truth assignment for the variables of $\varphi$ as a proper coloring of $G$ with colors $1$ and $2$, assigning color $1$ to literals set to $\true$ and color $2$ to literals set to $\false$. It is straightforward to verify that a truth assignment satisfies a clause in $\varphi$ in the NAE sense if and only if the associated tuple in $\calF$ is not uniformly rainbow with respect to the corresponding coloring of $G$. For Item~\ref{itm:1,k,2}, this follows directly from the definition of NAE-satisfiability. For Item~\ref{itm:2,k-1,2}, notice that a truth assignment satisfies a clause $(y_{1} \vee \cdots \vee y_{k})$ in the NAE sense if and only if $y_{1}$ shares its value with $\overline{y_{j}}$ for some $j \in [k] \setminus \{1\}$.
\end{proof}

Proposition~\ref{prop:q=2} follows by combining Lemma~\ref{lemma:composition}, Theorem~\ref{thm:NAE}, and Lemma~\ref{lemma:NAE<URFC_2}.
We note that it can also be derived from the results of~\cite{ChenJP20} on Boolean-domain constraint satisfaction problems.

\subsection{Proof of Lower Bounds}

We prove now the lower bounds asserted in Theorem~\ref{thm:Full-URFC}, restated below.

\begin{theorem}\label{thm:Lower-URFC}
For positive integers $d$, $\ell$, and $q \geq d$, set $\eta = \eta(d,\ell,q)$ as in Definition~\ref{def:eta}.
For any real $\eps>0$, the $(d,\ell,q)$-\URFC problem parameterized by the number of vertices $n$ does not admit a compression of size $O(n^{\eta-\eps})$ unless \mbox{$\NP \subseteq \coNPpoly$}.
\end{theorem}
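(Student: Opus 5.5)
The plan is a case analysis along the five cases of Definition~\ref{def:eta}. In each case we either invoke Theorem~\ref{thm:Lower-R-CC} for the permutation-invariant relation $\NUR_{d,\ell}^q$, which realizes the $(d,\ell,q)$-\URFC problem, together with the appropriate item of Lemma~\ref{lemma:NURdefinable}, or we fall back on a known lower bound. The case $\eta=0$ is vacuous: a compression of size $O(n^{-\eps})$ would have to be of constant size, and the problem is then polynomial-time solvable, so there is nothing to prove. This case can also be dismissed because, trivially, there is no lower bound to establish when $\eta - \eps < 0$.

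Consider first $q\ge 3$ and $\ell\ge 2$.
\begin{itemize}
\item If $q\ge d+2$, Item~\ref{itm:definable_q} of Lemma~\ref{lemma:NURdefinable} gives an \OR relation of arity $d\ell$. Since $d\ell\ge 2$, we need arity at least $3$ to apply Theorem~\ref{thm:Lower-R-CC}. The only subcase with $d\ell=2$ is $d=1,\ell=2$, which yields $\eta=2$. There I would instead reduce from $q$-\Col, whose $O(n^{2-\eps})$ compression lower bound is the Jansen--Pieterse result cited in the introduction; since $\NUR^q_{1,2}$ is just non-equality, this is immediate (take $\calF=\emptyset$).
\item If $q=d+1$ with $(d,\ell)\neq(1,2)$, Item~\ref{itm:definable_d+1} gives arity $d\ell-1\ge 3$ whenever $d\ell\ge 4$. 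The remaining subcases are $d\ell=3$, namely $(1,3)$ with $q=2$, handled below, and $(3,1)$, which has $\ell=1$ and is excluded. The subcase $(2,2,3)$ gives $d\ell-1=3$, which is fine.
\item If $q=d\ge 2$ and $\ell\ge2$, Item~\ref{itm:definable_d} gives arity $(d-1)\ell$, which is at least $3$ unless $(d-1)\ell=2$, i.e.\ $d=2,\ell=2,q=2$. That subcase again falls under $q=2$.
\end{itemize}

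For $\ell=1$: if $d\ge3$, Item~\ref{itm:definable_d} gives arity $d-1$, which is at least $3$ when $d\ge4$. For $d=3$ the target is $\eta=2$. Here I would note that $q\ge3$ and that a single-set constraint with $d=3$ forbids a triple from being rainbow. A reduction from $q$-\Col then gives the $n^{2-\eps}$ bound, again with $\calF=\emptyset$ and using the hardness of $q$-\Col for $q\ge3$. The same reasoning covers $\ell=1$, $d\le2$, $q\ge3$, where $\eta=2$, and also $d=3,\ell=1,q\ge 3$.

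The case $q=2$ with $\eta>0$ is where Theorem~\ref{thm:Lower-R-CC} is not applicable (the gadget of Lemma~\ref{lemma:gadget} needs $q\ge3$). Here I would use Proposition~\ref{prop:q=2}. With $q=2$ we need $d\le 2$:
\begin{itemize}
\item For $d=1$ and $\ell\ge3$, Definition~\ref{def:eta} gives $\eta=d\ell-1=\ell$ (case $q=d+1$). Proposition~\ref{prop:q=2} with $k=\ell+1$ gives the bound $n^{k-1-\eps}=n^{\ell-\eps}$, as needed.
\item For $d=2$ and $\ell\ge2$, $\eta=(d-1)\ell=\ell$ (case $q=d$). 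The $(2,k-1,2)$ statement with $k=\ell+1$ gives $n^{\ell-\eps}$.
\end{itemize}
The main obstacle is simply the bookkeeping: checking that every triple with $\eta>0$ lands in a case where either some \OR arity is at least $3$ and equals $\eta$, or one of the small exceptional cases is covered by the known bounds for $q$-\Col or $k$-\NAE.
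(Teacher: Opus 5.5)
Your case analysis follows the paper's proof: Lemma~\ref{lemma:NURdefinable} with Theorem~\ref{thm:Lower-R-CC} whenever the \OR arity is at least $3$, the $q$-coloring bound with $\calF=\emptyset$ for the quadratic cases with $q\ge 3$, and Proposition~\ref{prop:q=2} for $q=2$. However, your bullet for $q=2$, $d=1$, $\ell\ge 3$ is wrong as written. That triple falls under $q=d+1$, so $\eta(1,\ell,2)=d\ell-1=\ell-1$, not $\ell$. Also, Proposition~\ref{prop:q=2} with $k=\ell+1$ is a statement about the $(1,\ell+1,2)$-\URFC problem, not the $(1,\ell,2)$ one, so it does not give the bound you state. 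That bound (no compression of size $O(n^{\ell-\eps})$ for $(1,\ell,2)$-\URFC) cannot hold unless $\NP\subseteq\coNPpoly$. Indeed, Theorem~\ref{thm:Upper-URFC} and Remark~\ref{remark:encoding} give a kernel of bit-size $O(n^{\ell-1}\log n)$. The correct choice is $k=\ell$, which is legitimate since $\ell\ge 3$. It rules out compressions of size $O(n^{\ell-1-\eps})=O(n^{\eta-\eps})$, exactly as needed. This matters in particular for $(1,3,2)$, which you deferred from the $q=d+1$ bullet and which needs $k=3$ to get the $n^{2-\eps}$ bound. Your $d=2$ bullet ($\eta=\ell$, $k=\ell+1$) is correct.

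Once this is fixed, the only difference from the paper is how you handle $q=2$. The paper applies Theorem~\ref{thm:Lower-R-CC} at $q=2$ whenever $\eta\ge 3$, namely for $(1,\ell,2)$ with $\ell\ge 4$ and $(2,\ell,2)$ with $\ell\ge 3$. It uses Proposition~\ref{prop:q=2} only for the two triples $(1,3,2)$ and $(2,2,2)$, where $\eta=2$. You instead send every $q=2$ case through the $k$-\NAE bounds. This is equally short and has the advantage you point out: it does not depend on Lemma~\ref{lemma:gadget}, which is only proved for $q\ge 3$, even though Theorem~\ref{thm:Lower-R-CLC} is stated for all $q$. At $q=2$ the consistency gadgets in that proof could be replaced by single edges, so the paper's route is repairable, but the repair is not spelled out there.
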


In the proof, we will use the following immediate consequence of a result from~\cite{JansenP19color} to obtain a near-quadratic lower bound for some exceptional cases.

\begin{proposition}\label{prop:URFC_quad}
For positive integers $d$, $\ell$, and $q \geq 3$, the $(d,\ell,q)$-\URFC problem parameterized by the number of vertices $n$ does not admit a compression of size $O(n^{2-\eps})$ unless $\NP \subseteq \coNPpoly$.
\end{proposition}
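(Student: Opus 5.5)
The plan is to give a linear-parameter transformation from $q$-\Col, parameterized by the number of vertices, to the $(d,\ell,q)$-\URFC problem, parameterized by the number of vertices. Then we invoke the result of Jansen and Pieterse~\cite{JansenP19color}: for every $q \geq 3$ and any $\eps>0$, $q$-\Col parameterized by the number of vertices $n$ admits no compression of size $O(n^{2-\eps})$ unless $\NP \subseteq \coNPpoly$. Lemma~\ref{lemma:composition} then transfers this bound to the target problem, and this is the only place where $q \geq 3$ is used.

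The transformation is essentially trivial. Given a graph $G$ on $n$ vertices, we output the instance $(G,\emptyset)$ of $(d,\ell,q)$-\URFC, with the same graph and an empty collection of tuples. A proper $q$-coloring of $G$ satisfies all (zero) uniformly rainbow freeness constraints. Conversely, any solution of $(G,\emptyset)$ is in particular a proper $q$-coloring of $G$. So the instances are equivalent, the vertex count is unchanged, and the transformation is linear-parameter (indeed identity-parameter) and polynomial-time.

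The only point needing care is the exact form of the cited result. We must make sure it is stated for compressions into arbitrary problems rather than only for kernels, and for the number-of-vertices parameterization. The introduction quotes it in exactly this form, so the main potential obstacle is just this bookkeeping, not any combinatorial difficulty. If one prefers not to rely on the empty-collection formality, one can instead add a single harmless tuple, for example a tuple using repeated vertices in a column, which is always satisfied; this does not change the argument.
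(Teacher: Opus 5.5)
Your proposal is correct and is exactly the paper's proof: map $G \mapsto (G,\emptyset)$ as a linear-parameter transformation from $q$-\Col, then transfer the bound of~\cite{JansenP19color} via Lemma~\ref{lemma:composition}. Drop the fallback of adding a tuple with a repeated vertex, since tuples in a \URFC instance consist of genuine $d$-subsets and so cannot repeat a vertex within a set; the empty collection is already a valid instance.
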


\begin{proof}
Fix positive integers $d$, $\ell$, and $q \geq 3$. The $q$-\Col problem admits a linear-parameter transformation into the $(d,\ell,q)$-\URFC problem, when both are parameterized by the number of vertices, which simply maps a given graph $G$ to the pair $(G,\emptyset)$. A result of~\cite{JansenP19color} implies that for every integer $q \geq 3$, the $q$-\Col problem parameterized by the number of vertices $n$ does not admit a compression of size $O(n^{2-\eps})$ unless $\NP \subseteq \coNPpoly$. By Lemma~\ref{lemma:composition}, this completes the proof.
\end{proof}

We proceed to prove Theorem~\ref{thm:Lower-URFC}.

\begin{proof}[ of Theorem~\ref{thm:Lower-URFC}]
Fix positive integers $d$, $\ell$, and $q \geq d$, and set $\eta = \eta(d,\ell,q)$. Recall that the $(d,\ell,q)$-\URFC problem can be expressed as the $\NUR_{d,\ell}^q$-\CC problem and that the relation $\NUR_{d,\ell}^q$ is permutation-invariant. To prove the theorem, we examine each case in the definition of $\eta(d,\ell,q)$.
\begin{itemize}
  \item Suppose that $\ell \geq 2$ and $q \geq d+2$, which implies that $\eta = d\ell \geq 2$. By Item~\ref{itm:definable_q} of Lemma~\ref{lemma:NURdefinable}, an \OR relation of arity $\eta$ is definable from $\NUR_{d,\ell}^q$. Therefore, if $\eta \geq 3$, the result follows from Theorem~\ref{thm:Lower-R-CC}, and if $\eta = 2$, since $q \geq 3$, the result follows from Proposition~\ref{prop:URFC_quad}.
  \item Suppose that $\ell \geq 2$, $q = d+1$, and $(d,\ell) \neq (1,2)$, which implies that $\eta = d\ell-1 \geq 2$. By Item~\ref{itm:definable_d+1} of Lemma~\ref{lemma:NURdefinable}, an \OR relation of arity $\eta$ is definable from $\NUR_{d,\ell}^q$. Therefore, if $\eta \geq 3$, the result follows from Theorem~\ref{thm:Lower-R-CC}, and if $\eta = 2$, then $(d,\ell,q)=(1,3,2)$, so the result follows from Proposition~\ref{prop:q=2}.
  \item Suppose that either $\ell \geq 2$ and $q=d \geq 2$, or $\ell=1$ and $d \geq 3$, which implies that \mbox{$\eta = (d-1)\ell \geq 2$}. By Item~\ref{itm:definable_d} of Lemma~\ref{lemma:NURdefinable}, an \OR relation of arity $\eta$ is definable from $\NUR_{d,\ell}^q$. Therefore, if $\eta \geq 3$, the result follows from Theorem~\ref{thm:Lower-R-CC}, and if $\eta = 2$, the triple $(d,\ell,q)$ is either $(2,2,2)$ or $(3,1,q)$ for some $q \geq 3$. In the former case, the result follows from Proposition~\ref{prop:q=2}, and in the latter, from Proposition~\ref{prop:URFC_quad}.
  \item Suppose that $\ell =1$, $d \leq 2$, and $q \geq 3$, which implies that $\eta = 2$. The result again follows from Proposition~\ref{prop:URFC_quad}.
  \item In the final case, when either $q=2$ and $d\ell \leq 2$, or $q=1$, we have $\eta = 0$, so there is nothing to prove.
\end{itemize}
The proof is complete.
\end{proof}

\section{Upper Bounds for Uniformly Rainbow Free Coloring}\label{sec:URFCupper}

In this section, we prove the upper bounds on the kernel size of the $(d,\ell,q)$-\URFC problems, stated in Theorem~\ref{thm:Full-URFC}. For the non-trivial cases, where the degree of the polynomial bounding the kernel size is smaller than $d\ell$, we employ the approach of~\cite{JansenP17}, which relies on constructions of suitable low-degree polynomials. We first describe the method and the specific type of polynomials required for our purposes, and then construct the desired polynomials and derive the resulting bounds. We also extend the bounds to a generalized version of the $(d,\ell,q)$-\URFC problems that permits combining constraints with different values of $d$ and $\ell$.

\subsection{Uniformly Rainbow Property via Polynomials}

Our upper bounds rely on constructions of polynomials that capture the uniformly rainbow property, as defined below.
\begin{definition}\label{def:URproperty}
For positive integers $m$ and $q$, and for a field $\Fset$, let $C \subseteq \Fset^m$ be a set of size $|C|=q$. For a positive integer $n$, a matrix $M = (a_1, \ldots, a_n) \in \Fset^{m \times n}$ is called {\em $C$-colored} if each of its columns belongs to $C$, that is, $a_i \in C$ for all $i \in [n]$. For positive integers $d$ and $\ell$, let $p: \Fset^{m \times (d\ell)} \to \Fset$ be a polynomial in $m d \ell$ variables, which are represented as an $m \times (d \ell)$ matrix over $\Fset$. We say that the pair $(C,p)$ {\em captures the $(d,\ell,q)$-uniformly rainbow property} if for every $C$-colored matrix $M = (a_1, \ldots, a_{d \ell}) \in \Fset^{m \times (d \ell)}$, it holds that $p(M) \neq 0$ if and only if the $\ell$ disjoint blocks of $d$ consecutive columns in $M$ coincide as sets and have size $d$, equivalently, the $\ell$ sets
\[F_j = \{a_i \mid (j-1)d < i \leq jd\},~~~j \in [\ell],\]
are all equal and of size $d$.
\end{definition}

The following theorem shows that a set of vectors and a low-degree polynomial capturing the $(d,\ell,q)$-uniformly rainbow property yield an economical kernel for the $(d,\ell,q)$-\URFC problem. In what follows, the {\em solution set} of an instance $(G,\calF)$ of the problem refers to the set of all proper colorings of $G$ with $q$ colors for which no tuple in $\calF$ is uniformly rainbow. Further, a field $\Fset$ is said to be {\em efficient} if field operations and Gaussian elimination over $\Fset$ can be performed in polynomial time in the size of a reasonable input encoding. For example, all finite fields and the real field $\R$ restricted to rationals are efficient.

\begin{theorem}\label{thm:JP-Upper}
Let $d, \ell, m, q \geq 1$ and $r \geq 2$ be integers, and let $\Fset$ be an efficient field.
Suppose that there exist a set $C \subseteq \Fset^m$ with $|C|=q$ and a polynomial $p: \Fset^{m \times (d \ell)} \to \Fset$ of degree at most $r$, such that the pair $(C,p)$ captures the $(d,\ell,q)$-uniformly rainbow property. Then the $(d,\ell,q)$-\URFC problem parameterized by the number of vertices $n$ admits a kernel with $O(n^r)$ constraints.
Moreover, given an instance $(G,\calF)$, the kernel returns an instance $(G,\calF')$ with $\calF' \subseteq \calF$, such that $(G,\calF)$ and $(G,\calF')$ have the same solution set.
\end{theorem}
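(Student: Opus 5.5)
Following the linear-algebraic method of Jansen and Pieterse, the plan is to linearize $p$ and discard every constraint whose coefficient vector is a linear combination of the vectors of constraints kept so far. Fix an instance $(G,\calF)$ with $G=(V,E)$, $|V|=n$, and identify $V$ with $[n]$. For a coloring $c:V\to[q]$, fix a bijection $\phi:[q]\to C$. Introduce the variable matrix $X=(x_1,\ldots,x_n)\in\Fset^{m\times n}$, where column $x_v$ stands for the vector $\phi(c(v))\in C$.

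For each tuple $T=(F_1,\ldots,F_\ell)\in\calF$, list the vertices of each $F_j$ in a fixed order, say increasing. Concatenate these lists to get a sequence $(v_1,\ldots,v_{d\ell})$, and define the polynomial $p_T(X)=p(x_{v_1},\ldots,x_{v_{d\ell}})$ in the $mn$ variables of $X$. Then $p_T$ has degree at most $r$. By Definition~\ref{def:URproperty}, for a $C$-colored $X$ arising from a coloring $c$, we have $p_T(X)\neq 0$ if and only if $T$ is uniformly rainbow under $c$. This holds because the sets $F_j$ and their colour sets are exactly the blocks of the substituted matrix.

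Next, expand each $p_T$ in the monomial basis. It is a linear combination of monomials of degree at most $r$ in $mn$ variables, and there are $N=\sum_{i\le r}\binom{mn+i-1}{i}=O(n^r)$ of them, since $m$ and $r$ are constants. Let $\mathbf{v}_T\in\Fset^N$ be the coefficient vector of $p_T$. Computing $\mathbf{v}_T$ takes polynomial time, because $p$ is a fixed polynomial with a constant number of terms. Using Gaussian elimination over the efficient field $\Fset$, select $\calF'\subseteq\calF$ such that $\{\mathbf{v}_T\}_{T\in\calF'}$ is a basis of $\mathrm{span}\{\mathbf{v}_T\}_{T\in\calF}$. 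Then $|\calF'|\le N=O(n^r)$. The graph contributes $|E|=O(n^2)=O(n^r)$ constraints since $r\geq2$, so $(G,\calF')$ has $O(n^r)$ constraints in total.

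The key step is to show that the solution sets coincide. One inclusion is trivial since $\calF'\subseteq\calF$. For the other, let $c$ be a proper coloring under which no $T\in\calF'$ is uniformly rainbow, and let $X$ be the matrix associated with $c$. Then $p_{T'}(X)=0$ for all $T'\in\calF'$. Every $T\in\calF$ satisfies $\mathbf{v}_T=\sum_{T'\in\calF'}\lambda_{T'}\mathbf{v}_{T'}$, so $p_T=\sum\lambda_{T'}p_{T'}$ as formal polynomials. Evaluating at $X$ gives $p_T(X)=0$, hence $T$ is not uniformly rainbow under $c$. Thus the solution sets are equal, which yields both the kernel bound and the ``moreover'' clause.

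The only delicate point is keeping this correspondence exact. Coloring $c$ must be translated into a $C$-colored evaluation point, and the order of vertices inside each $F_j$ must be shown irrelevant to the uniformly rainbow property, which it is since that property depends only on the sets. The rest of the argument is routine dimension counting.
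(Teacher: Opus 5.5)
Your proposal is correct and follows essentially the same route as the paper. You substitute per-vertex variable vectors into $p$ and keep a subfamily $\calF'$ whose polynomials form a basis of the span, which has dimension at most $\binom{mn+r}{r}=O(n^r)$. You absorb the $O(n^2)$ edges using $r\ge 2$, and you show that the solution sets agree by evaluating the linear dependencies at the $C$-colored point $x_v=\phi(c(v))$. The only difference is cosmetic: you work explicitly with coefficient vectors in the monomial basis, whereas the paper works directly with the polynomial subspace.
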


\begin{proof}
Let $d,\ell,m,q \geq 1$ and $r \geq 2$ be integers, let $\Fset$ be an efficient field, and consider a set \mbox{$C = \{a_1, \ldots, a_q\} \subseteq \Fset^m$} and a polynomial $p: \Fset^{m \times (d \ell)} \to \Fset$ of degree at most $r$, such that $(C,p)$ captures the $(d,\ell,q)$-uniformly rainbow property. An instance of the $(d,\ell,q)$-\URFC problem consists of a graph $G=(V,E)$ on $n$ vertices and a collection $\calF$ of $\ell$-tuples of $d$-subsets of $V$. We define a kernelization algorithm that, given such an input, returns the same graph $G$ along with a collection $\calF' \subseteq \calF$ produced by the following steps.
\begin{enumerate}
  \item Associate with each vertex $v \in V$ an $m$-dimensional symbolic variable vector $x_v$ over the field $\Fset$. Note that the total number of variables is $m \cdot n$.
  \item For each $\ell$-tuple $F = (F_1, \ldots, F_\ell) \in \calF$, let $p_F: \Fset^{m \times (d \ell)} \to \Fset$ denote the polynomial obtained from $p$ by substituting a matrix whose columns are the variable vectors associated with the vertices of the sets $F_1, \ldots, F_\ell$, in this order, where the vectors corresponding to each set are ordered arbitrarily.
  \item Compute a collection $\calF' \subseteq \calF$, such that the set $\{p_F \mid F \in \calF'\}$ forms a basis for the subspace of polynomials spanned by $\{p_F \mid F \in \calF\}$.
\end{enumerate}

We note that the polynomials $p_F$ with $F \in \calF$ all lie in the space of polynomials of degree at most $r$ in $m \cdot n$ variables over $\Fset$, which has dimension $\binom{m \cdot n+r}{r} = O(n^r)$. Since the set $\{p_F \mid F \in \calF'\}$ forms a basis of a subspace of this space, we conclude that $|\calF'| \leq O(n^r)$. By $r \geq 2$, it follows that the total number of constraints in the produced instance, including the edges in $G$, is bounded by $O(n^r)$. Moreover, the collection $\calF'$ can be computed in polynomial time by performing Gaussian elimination over the efficient field $\Fset$ on a system with $O(n^r)$ variables.

We finally show that $(G,\calF)$ and $(G,\calF')$ have the same solution set, which, in particular, yields the correctness of the kernelization algorithm.
Clearly, by $\calF' \subseteq \calF$, any solution for $(G,\calF)$ is also a solution for $(G,\calF')$.
Conversely, consider a solution for $(G,\calF')$, i.e., a proper coloring $c:V \to [q]$ of $G$, such that no $\ell$-tuple of $\calF'$ is uniformly rainbow.
We define an assignment of elements from $\Fset$ to the variables of $(x_v)_{v \in V}$ by setting each vector $x_v$ with $v \in V$ to the vector $a_{c(v)} \in C$ corresponding to its color. Since $(C,p)$ captures the $(d,\ell,q)$-uniformly rainbow property and no $\ell$-tuple of $\calF'$ is uniformly rainbow, all polynomials $p_F$ with $F \in \calF'$ evaluate to zero on this assignment, and since they form a basis of the subspace spanned by $\{p_F \mid F \in \calF\}$, we conclude that all polynomials $p_F$ with $F \in \calF$ vanish on this assignment as well. Since $(C,p)$ captures the $(d,\ell,q)$-uniformly rainbow property, this implies that no $\ell$-tuple in $\calF$ is uniformly rainbow with respect to $c$, hence $c$ is a solution for $(G,\calF)$, as required.
\end{proof}

The following lemma supplies the vector sets and low-degree polynomials that capture the $(d,\ell,q)$-uniformly rainbow property and yield the non-trivial upper bounds in Theorem~\ref{thm:Full-URFC}. It adapts constructions from the prior works~\cite{JansenP19color,HavivR24,HavivR25}.

\begin{lemma}\label{lemma:C,p}
For all positive integers $d$, $\ell$, $m$, $q$, and $r$ satisfying one of the following conditions, and for every field $\Fset$ with $|\Fset| \geq q$, there exist a set $C \subseteq \Fset^m$ with $|C|=q$ and a polynomial $p: \Fset^{m \times (d \ell)} \to \Fset$ of degree at most $r$, such that the pair $(C,p)$ captures the $(d,\ell,q)$-uniformly rainbow property.
\begin{enumerate}
  \item\label{itm:pol:upper_l=1} $\ell =1$, $q \geq d = m$, and $r = d-1$.
  \item\label{itm:pol:upper_q=d} $m = q = d$ and $r = (d-1) \ell$.
  \item\label{itm:pol:upper_q=d+1} $m = q = d+1$ and $r = d \ell-1$.
\end{enumerate}
\end{lemma}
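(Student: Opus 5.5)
The plan is to use moment (Vandermonde) vectors as the color set throughout, and to build $p$ from determinants in which the first row, which is constantly $1$ on $C$, is replaced by the constant $1$. Since $|\Fset| \geq q$, fix distinct field elements $c_1, \ldots, c_q \in \Fset$. Let $C = \{a_1, \ldots, a_q\}$ with $a_i = (1, c_i, c_i^2, \ldots, c_i^{m-1}) \in \Fset^m$.

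For a square matrix $X$ of variables, write $\det^*(X)$ for the polynomial obtained from $\det(X)$ by replacing every entry of the first row with the constant $1$. Two facts about $\det^*$ will be used:
\begin{itemize}
\item If $X$ has size $s \times s$, then $\det^*(X)$ has degree at most $s-1$, by expanding along the constant row.
\item On $C$-colored inputs, $\det^*(X) = \det(X)$.
\end{itemize}
By the Vandermonde determinant, any $m$ columns taken from $C$ are linearly independent if and only if they are pairwise distinct. Consequently, on $C$-colored inputs, $\det^*$ of an $m \times m$ matrix is nonzero exactly when its columns are pairwise distinct.

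\emph{Item~1} ($\ell = 1$, $m = d \leq q$). Take $p(M) = \det^*(M)$ for $M \in \Fset^{d \times d}$. This has degree at most $d-1$, and on $C$-colored matrices it is nonzero if and only if the $d$ columns are distinct. That is exactly the statement that the single block is rainbow, which is the $(d,1,q)$-uniformly rainbow property.

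\emph{Item~2} ($m = q = d$). Let $B_1, \ldots, B_\ell$ be the $d \times d$ blocks of consecutive columns, and take $p = \prod_{j=1}^{\ell} \det^*(B_j)$. Its degree is at most $(d-1)\ell$. On $C$-colored inputs, $p \neq 0$ if and only if every block is rainbow. Because $q = d$, a rainbow block uses all $q$ colors. Hence all rainbow blocks automatically have the same color set $C$, and $p$ captures the property.

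\emph{Item~3} ($m = q = d+1$). This is the step I expect to be the main obstacle. The difficulty is that a product of per-block rainbow tests no longer suffices: each rainbow block now misses one color, and all blocks must miss the \emph{same} one. Moreover, this must be enforced while saving a degree. My approach is to recover the missing color of the first block linearly. Set $S = \sum_{i=1}^{q} a_i$ and let
\[
  y = S - \sum_{i=1}^{d} x_i ,
\]
where $x_1, \ldots, x_d$ are the columns of $B_1$. This $y$ is a vector of affine-linear polynomials, and whenever $B_1$ is rainbow with missing color $e$, we have $y = a_e$. Now define
\[
  p = \det^*(B_1) \cdot \prod_{j=2}^{\ell} \det^*\bigl(B_j \mid y\bigr),
\]
where $(B_j \mid y)$ denotes the $(d+1) \times (d+1)$ matrix obtained by appending the column $y$ to $B_j$.

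The degree count is as follows. The first factor has degree at most $d-1$. For each $j \geq 2$, every entry of $(B_j \mid y)$ outside the constant first row has degree at most $1$, so $\det^*(B_j \mid y)$ has degree at most $d$. The total is at most $(d-1) + d(\ell-1) = d\ell - 1$.

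For correctness on $C$-colored inputs, observe that $p \neq 0$ requires $B_1$ to be rainbow; if it is not, the first factor vanishes. When $B_1$ is rainbow, we have $y = a_{e}$, where $e$ is the missing color of $B_1$. Each remaining factor is then nonzero if and only if the $d+1$ vectors consisting of the columns of $B_j$ together with $a_e$ are distinct. This says precisely that $B_j$ is rainbow and avoids $e$, i.e., that $B_j$ uses exactly the color set $[q] \setminus \{e\}$, which is the color set of $B_1$. Therefore $p \neq 0$ exactly when all blocks are rainbow with equal color sets.

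The point to double-check is that replacing the first row by the constant $1$ is legitimate for the appended column $y$. This holds because on $C$-colored inputs the first coordinate of $y$ equals $q - d = 1$.
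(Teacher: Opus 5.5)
Your construction is the paper's: moment (Vandermonde) vectors for $C$, determinants with the first row replaced by ones, a product of per-block tests for Item~2, and, for Item~3, the missing color of the first block recovered linearly as $y=S-\sum_i x_i$ and appended to every later block. The degree count $(d-1)+d(\ell-1)$ is also the same.

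One step needs repair. In Item~3 the block $B_1$ is a $(d+1)\times d$ matrix, so $\det^*(B_1)$ is not defined: your definition of $\det^*$ applies only to square matrices. What you need, and what your degree bound $d-1$ implicitly assumes, is $\det^*$ of the top $d\times d$ submatrix of $B_1$, as in the paper's $p_{q,d}$. Showing that this factor is nonzero on a rainbow $C$-colored block requires an additional fact. The first $d$ coordinates of $d$ distinct moment vectors form a nonsingular $d\times d$ Vandermonde matrix. This is slightly more than the statement about $m$ columns that you recorded. With it, both directions go through, including the case $\ell=1$ of Item~3, where $p$ is just this factor.

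Finally, there is a degenerate case that you share with the paper. In Item~1 with $d=m=1$ and $q\geq 2$, all moment vectors equal $(1)$, so $|C|=1\neq q$. There $p=\det^*(M)\equiv 1$ anyway, so you can simply take $C=\{(c_1),\ldots,(c_q)\}$.
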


\begin{proof}
Let $d$, $\ell$, $m$, and $q$ be positive integers satisfying the assumption of the lemma, let $\Fset$ be a field with $|\Fset| \geq q$, and let $\alpha_1, \ldots, \alpha_q$ be distinct field elements. We start with some notations. Let $C_{m,q} \subseteq \Fset^m$ denote the set of vectors $(1,\alpha_i,\alpha_i^2,\ldots, \alpha_i^{m-1})^t$ for $i \in [q]$, and note that $|C_{m,q}|=q$. For every positive integer $t \leq m$, the vectors of any $t$-subset of $C_{m,q}$ restricted to their first $t$ entries are distinct columns of a $t \times t$ Vandermonde matrix and are therefore linearly independent over $\Fset$. For such an integer $t$, let $p_{m,t}:\Fset^{m \times t} \to \Fset$ denote the polynomial that maps a matrix $M \in \Fset^{m \times t}$ to the determinant of the $t \times t$ submatrix induced by the first $t$ rows of $M$, after replacing the entries of the first row by ones. Since the determinant polynomial on $t \times t$ matrices is a linear combination of products of $t$ variables, one from each row, it follows that the degree of $p_{m,t}$ is $t-1$. For every $C_{m,q}$-colored matrix $M \in \Fset^{m \times t}$, all entries in the first row are ones, so it holds that $p_{m,t}(M) \neq 0$ if and only if the $t$ columns of $M$ are pairwise distinct. We now handle each item of the lemma separately.

For Item~\ref{itm:pol:upper_l=1}, suppose that $\ell =1$ and $q \geq d = m$, and consider the set $C_{d,q} \subseteq \Fset^{d}$ and the polynomial $p_{d,d}:\Fset^{d \times d} \to \Fset$, whose degree is $d-1$. For every $C_{d,q}$-colored matrix $M \in \Fset^{d \times d}$, we have $p_{d,d}(M) \neq 0$ if and only if the columns of $M$ are pairwise distinct, hence the pair $(C_{d,q},p_{d,d})$ captures the $(d,1,q)$-uniformly rainbow property, as required.

For Item~\ref{itm:pol:upper_q=d}, suppose that $m = q = d$, and consider the set $C_{d,d} \subseteq \Fset^{d}$ and the polynomial \mbox{$p_{d,d}:\Fset^{d \times d} \to \Fset$}, whose degree is $d-1$. Let $p:\Fset^{d \times (d\ell)} \to \Fset$ denote the polynomial that maps a given matrix \mbox{$M=(M_1, \ldots, M_\ell) \in \Fset^{d \times (d\ell)}$}, where $M_i \in \Fset^{d \times d}$ for each $i \in [\ell]$, to the product $\prod_{i \in [\ell]}{p_{d,d}(M_i)}$. The degree of $p$ is clearly $(d-1) \ell$. We claim that the pair $(C_{d,d},p)$ captures the $(d,\ell,q)$-uniformly rainbow property. To see this, let $M=(M_1, \ldots, M_\ell) \in \Fset^{d \times (d\ell)}$ be a $C_{d,d}$-colored matrix. It follows that $p(M) \neq 0$ precisely when $p_{d,d}(M_i) \neq 0$ for all $i \in [\ell]$, which occurs if and only if the $d$ columns of each $M_i$ are all the $d$ vectors of $C_{d,d}$, as required.

For Item~\ref{itm:pol:upper_q=d+1}, suppose that $m = q = d+1$, and consider the set $C_{q,q} \subseteq \Fset^{q}$ and the polynomials $p_{q,d}:\Fset^{q \times d} \to \Fset$ and $p_{q,q}:\Fset^{q \times q} \to \Fset$. Further, let $a \in \Fset^q$ denote the sum of the vectors in $C_{q,q}$.
We define a polynomial $p:\Fset^{q \times (d\ell)} \to \Fset$ as follows. Consider a matrix $M=(M_1, \ldots, M_\ell) \in \Fset^{q \times (d \ell)}$, where $M_i \in \Fset^{q \times d}$ for each $i \in [\ell]$, and let $y$ denote the sum of the columns of $M_1$. The polynomial $p$ maps such a matrix $M$ to \[p_{q,d}(M_1) \cdot \prod_{i=2}^{\ell}{p_{q,q}(M_i,a-y)}.\] Substitutions of linear terms in a polynomial do not increase its degree, hence the degree of $p$ is at most $(d-1) + (q-1)(\ell-1) = (d-1) + d(\ell-1) = d \ell -1$.

We now show that the pair $(C_{q,q},p)$ captures the $(d,\ell,q)$-uniformly rainbow property.
For a $C_{q,q}$-colored matrix \mbox{$M=(M_1, \ldots, M_\ell) \in \Fset^{q \times (d\ell)}$}, it holds that $p(M) \neq 0$ if and only if $p_{q,d}(M_1) \neq 0$ and $p_{q,q}(M_i,a-y) \neq 0$ for all $i \in [\ell] \setminus \{1\}$. This occurs precisely when the columns of $M_1$ are distinct and the vector $a-y$, the unique vector of $C_{q,q}$ missing from $M_1$, completes the column set of each $M_i$ with $i \in [\ell] \setminus \{1\}$ to the entire set $C_{q,q}$. Therefore, $p(M) \neq 0$ if and only if the column sets of the matrices $M_1, \ldots, M_\ell$ are equal and of size $d$, so we are done.
\end{proof}

\subsection{Proof of Upper Bounds}

The following theorem confirms the upper bounds asserted in Theorem~\ref{thm:Full-URFC}.
\begin{theorem}\label{thm:Upper-URFC}
For positive integers $d$, $\ell$, and $q \geq d$, set $\eta = \eta(d,\ell,q)$ as in Definition~\ref{def:eta}.
The $(d,\ell,q)$-\URFC problem parameterized by the number of vertices $n$ admits a kernel with $O(n^{\eta})$ constraints.
Moreover, when $\eta \geq 2$, given an instance $(G,\calF)$, the kernel returns an instance $(G,\calF')$ with $\calF' \subseteq \calF$, such that $(G,\calF)$ and $(G,\calF')$ have the same solution set.
\end{theorem}

\begin{proof}
Fix positive integers $d$, $\ell$, and $q \geq d$, and set $\eta = \eta(d,\ell,q)$. To prove the theorem, we examine each case in the definition of $\eta(d,\ell,q)$.
\begin{itemize}
  \item Suppose that $\ell \geq 2$ and $q \geq d+2$, which implies that $\eta = d\ell \geq 2$. In this case, the trivial kernel that removes repeated constraints gives an instance with $O(n^{\eta})$ constraints.
  \item Suppose that $\ell \geq 2$, $q = d+1$, and $(d,\ell) \neq (1,2)$, which implies that $\eta = d\ell-1 \geq 2$. Let $\Fset$ be a fixed finite field with $|\Fset| \geq q$. By Item~\ref{itm:pol:upper_q=d+1} of Lemma~\ref{lemma:C,p}, there exist a set $C \subseteq \Fset^q$ with $|C|=q$ and a polynomial $p: \Fset^{q \times (d \ell)} \to \Fset$ of degree at most $\eta$, such that the pair $(C,p)$ captures the $(d,\ell,q)$-uniformly rainbow property. Since $\Fset$ is finite and thus efficient, the result follows from Theorem~\ref{thm:JP-Upper}.
  \item Suppose that either $\ell \geq 2$ and $q=d \geq 2$, or $\ell=1$ and $d \geq 3$, which implies that \mbox{$\eta = (d-1)\ell \geq 2$}. Let $\Fset$ be a fixed finite field with $|\Fset| \geq q$. By Items~\ref{itm:pol:upper_l=1} and~\ref{itm:pol:upper_q=d} of Lemma~\ref{lemma:C,p}, there exist a set $C \subseteq \Fset^d$ with $|C|=q$ and a polynomial $p: \Fset^{d \times (d \ell)} \to \Fset$ of degree at most $\eta$, such that the pair $(C,p)$ captures the $(d,\ell,q)$-uniformly rainbow property. As before, since $\Fset$ is finite and thus efficient, the result follows from Theorem~\ref{thm:JP-Upper}.
  \item Suppose that $\ell =1$, $d \leq 2$, and $q \geq 3$, which implies that $\eta = 2$. Here, the trivial kernel that removes repeated constraints gives an instance with $O(n^{\eta})$ constraints.
  \item In the final case, we have either $q=2$ and $d\ell \leq 2$, or $q=1$, and thus $\eta = 0$. When $q=1$, the problem is equivalent to testing whether the input graph is edgeless and, additionally, when $d=1$, whether the collection of constraints is empty. When $q=2$ and $d\ell \leq 2$, the problem reduces to the standard $2$-\Col problem. Indeed, for $(d,\ell)=(1,2)$, a constraint $(\{x\},\{y\})$ can be represented by an edge connecting $x$ and $y$. For $(d,\ell)=(2,1)$, a constraint $(\{x,y\})$ can be handled by merging the vertices $x$ and $y$ into a single vertex, and for $(d,\ell)=(1,1)$, no uniformly rainbow freeness constraint can be satisfied. In all these cases, the problem is solvable in polynomial time and thus admits a kernel of constant size.
\end{itemize}

We finally note that the condition $\eta \geq 2$ holds in the first four cases above. In all of them, the kernel is obtained either by Theorem~\ref{thm:JP-Upper} or by removing duplicate constraints. Therefore, given an instance $(G,\calF)$, the kernel returns an instance $(G,\calF')$ with $\calF' \subseteq \calF$, such that $(G,\calF)$ and $(G,\calF')$ have the same solution set, as required.
\end{proof}

\begin{remark}\label{remark:encoding}
Theorem~\ref{thm:Upper-URFC} provides an upper bound of $O(n^{\eta})$ on the number of constraints in the kernelized instances of the $(d,\ell,q)$-\URFC problem parameterized by the number of vertices $n$. This also implies an upper bound on the encoding size of these instances. To see this, let us focus on the cases with $\eta \geq 2$, and consider a kernelized instance $(G,\calF')$. The adjacencies in $G$ can be encoded in $O(n^2)$ bits. Further, when $\eta = d \ell$, the constraints in $\calF'$ can be encoded in $O(n^{\eta})$ bits, one per each potential constraint, so the total encoding size is $O(n^\eta)$. When $\eta < d \ell$, each constraint in $\calF'$ can be encoded in $O(\log n)$ bits, representing the vertices of the corresponding tuple, so the total encoding size is $O(n^\eta \cdot \log n)$.
\end{remark}

\subsection{Generalized Uniformly Rainbow Free Coloring}

We finally turn to an extension of the $(d,\ell,q)$-\URFC problems, in which the constraints can involve various values of $d$ and $\ell$.

\begin{definition}\label{def:GURFC}
For positive integers $m$ and $q$, let $I = \{ (d_i,\ell_i) \mid i \in [m]\}$ be a set of $m$ pairs of positive integers with $q \geq \max_{i \in [m]}{d_i}$.
The input of the $(I,q)$-\GURFC problem consists of a graph $G=(V,E)$ and a collection $\calF = \cup_{i \in [m]}{\calF_i}$, where for each $i \in [m]$, $\calF_i$ is a collection of $\ell_i$-tuples of $d_i$-subsets of $V$.
The goal is to decide whether there exists a proper coloring $c:V \to [q]$ of $G$, such that no tuple in $\calF$ is uniformly rainbow with respect to $c$.
\end{definition}

We prove the following result.

\begin{theorem}\label{thm:GURFC}
For positive integers $m$ and $q$, let $I = \{ (d_i,\ell_i) \mid i \in [m]\}$ be a set of $m$ pairs of positive integers with $q \geq \max_{i \in [m]}{d_i}$. For each $i \in [m]$, let $\eta_i = \eta(d_i,\ell_i,q)$ be as in Definition~\ref{def:eta}, and suppose that $\eta_i \geq 2$ for all $i \in [m]$. Set $\eta = \max_{i \in [m]} {\eta_i}$. Then the $(I,q)$-\GURFC problem parameterized by the number of vertices $n$ admits a kernel with $O(n^{\eta})$ constraints. Moreover, given an instance $(G,\calF)$, the kernel returns an instance $(G,\calF')$ with $\calF' \subseteq \calF$, such that $(G,\calF)$ and $(G,\calF')$ have the same solution set.
\end{theorem}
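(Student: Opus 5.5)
The plan is to kernelize each constraint class $\calF_i$ separately, using the subset-preserving kernels of Theorem~\ref{thm:Upper-URFC}, and then take the union of the results.

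Given an instance $(G,\calF)$ with $\calF=\cup_{i\in[m]}\calF_i$, for each $i\in[m]$ I will view $(G,\calF_i)$ as an instance of the $(d_i,\ell_i,q)$-\URFC problem. Since $\eta_i \geq 2$, the ``moreover'' part of Theorem~\ref{thm:Upper-URFC} applies. It yields in polynomial time a collection $\calF_i' \subseteq \calF_i$ with $|\calF_i'| = O(n^{\eta_i})$ such that $(G,\calF_i)$ and $(G,\calF_i')$ have the same solution set. The kernel outputs $(G,\calF')$ with $\calF'=\cup_{i\in[m]}\calF_i'$. Since $m$ and $I$ are fixed, the total running time is polynomial.

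For the size bound, $|\calF'| \leq \sum_{i} O(n^{\eta_i}) = O(n^{\eta})$ because $m$ is a constant. Together with $|E| \leq n^2 \leq n^{\eta}$ (using $\eta \geq 2$), the total number of constraints is $O(n^\eta)$.

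For correctness, note that the solution set of $(G,\calF)$ is exactly the intersection over $i$ of the solution sets of the instances $(G,\calF_i)$. Each of these equals the solution set of $(G,\calF_i')$, and their intersection is exactly the solution set of $(G,\calF')$. So the two instances have the same solution set, and in particular the same answer.

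The one point to check carefully is that in all cases with $\eta_i\ge 2$, Theorem~\ref{thm:Upper-URFC} produces a subcollection with \emph{the same solution set}, rather than merely an equivalent instance. This matters because the intersection argument needs equality of solution sets, not just equality of answers. The statement of Theorem~\ref{thm:Upper-URFC} guarantees this property explicitly whenever $\eta\ge2$: the kernel is either duplicate removal or the basis selection of Theorem~\ref{thm:JP-Upper}. This is exactly why the hypothesis $\eta_i\ge 2$ is needed.

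The case $\eta_i=0$ would break the argument, since the kernel there works by reducing to $2$-\Col rather than by taking a subcollection. The hypothesis excludes this case, so no further obstacle remains.
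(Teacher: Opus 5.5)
Your proposal is correct and follows the paper's proof essentially verbatim. You apply the subset-preserving kernel of Theorem~\ref{thm:Upper-URFC} to each $(G,\calF_i)$ separately, where $\eta_i \geq 2$ guarantees identical solution sets. You output $\calF' = \cup_i \calF'_i$, bound $|E|+|\calF'|$ by $O(n^2) + m \cdot O(n^{\eta}) = O(n^{\eta})$, and conclude by intersecting solution sets; your point that equality of solution sets, not mere equivalence, is what makes the intersection step work is exactly why the paper assumes $\eta_i \geq 2$.
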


\begin{proof}
For positive integers $m$ and $q$, let $I = \{ (d_i,\ell_i) \mid i \in [m]\}$, and let $\eta_i$ with $i \in [m]$ and $\eta$ be as in the statement of the theorem. The input of the $(I,q)$-\GURFC problem consists of a graph $G=(V,E)$ on $n$ vertices and a collection \mbox{$\calF = \cup_{i \in [m]}{\calF_i}$}, where for each $i \in [m]$, $\calF_i$ is a collection of $\ell_i$-tuples of $d_i$-subsets of $V$.
Consider the algorithm that, given such an input, for each $i \in [m]$, applies the algorithm from Theorem~\ref{thm:Upper-URFC} for the $(d_i,\ell_i,q)$-\URFC problem to the instance $(G,\calF_i)$. By $\eta_i \geq 2$, this gives a collection $\calF'_i \subseteq \calF_i$ with $|\calF'_i| \leq O(n^{\eta_i})$, such that $(G,\calF_i)$ and $(G,\calF'_i)$ have identical solution sets. Our algorithm then returns the graph $G$ along with the collection $\calF' = \cup_{i \in [m]}{\calF'_i} \subseteq \calF$. Since each algorithm from Theorem~\ref{thm:Upper-URFC} runs in polynomial time, the overall algorithm does as well.
The total number of constraints in the produced instance $(G,\calF')$ is
\[|E|+|\calF'| = |E|+\sum_{i \in [m]}{|\calF'_i|} \leq |E|+\sum_{i \in [m]}{O(n^{\eta_i})} \leq O(n^2)+m \cdot O(n^{\eta}) = O(n^\eta).\]

For correctness, notice that the solution set of $(G,\calF)$ is the intersection of all solution sets of the $(d_i,\ell_i,q)$-\URFC instances $(G,\calF_i)$ with $i \in [m]$, which by the guarantee of Theorem~\ref{thm:Upper-URFC}, is equal to the intersection of all solution sets of the $(d_i,\ell_i,q)$-\URFC instances $(G,\calF'_i)$ with $i \in [m]$. The latter is precisely the solution set of $(G,\calF')$. It follows that $(G,\calF)$ and $(G,\calF')$ have the same solution set, implying the correctness of the kernel.
\end{proof}

\section{Applications}\label{sec:applications}

In this section, we present two applications of our results on the uniformly rainbow free coloring problems. The first application addresses uniform hypergraph coloring problems parameterized by the number of vertices, and the second concerns graph coloring problems parameterized by the vertex-deletion distance to a disjoint union of cliques.

\subsection{Uniform Hypergraph Coloring}\label{sec:hyper}

For positive integers $\ell$ and $q$, the $q$-\Col problem on $\ell$-uniform hypergraphs asks whether a given $\ell$-uniform hypergraph admits a proper coloring with $q$ colors. We prove here Theorem~\ref{thm:Hyper}, which asserts that for all integers $\ell \geq 2$ and $q \geq 3$ and for any real $\eps >0$, the problem does not admit a compression of size $O(n^{\ell-\eps})$ when parameterized by the number of vertices $n$, unless \mbox{$\NP \subseteq \coNPpoly$}. This shows that the trivial kernel of this problem, whose size is $O(n^\ell)$, is likely to be near-optimal. Previously, the result was known to hold for $\ell \in \{2,3\}$, as shown in~\cite{JansenP19color,Beukers21}.

We start with the following lemma.

\begin{lemma}\label{lemma:URFC<qCol}
For all integers $\ell \geq 2$ and $q \geq 2$, there exists a linear-parameter transformation from the $(1,\ell,q)$-\URFC problem to the $q$-\Col problem on $\ell$-uniform hypergraphs, both parameterized by the number of vertices.
\end{lemma}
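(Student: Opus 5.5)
The plan is to encode both kinds of constraints of a $(1,\ell,q)$-\URFC instance as hyperedges of size exactly $\ell$, using one auxiliary ``palette'' set of constant size. Let $(G,\calF)$ be the input, with $G=(V,E)$ on $n$ vertices. A tuple $(\{x_1\},\ldots,\{x_\ell\}) \in \calF$ is uniformly rainbow exactly when $x_1,\ldots,x_\ell$ all receive the same colour. So it is already a ``not monochromatic'' constraint on the set $\{x_1,\ldots,x_\ell\}$, except that this set may have fewer than $\ell$ elements because of repeated vertices. The graph edges are the real difficulty, since they are constraints of size $2$. For $\ell=2$ there is nothing to do: a tuple $(\{x\},\{y\})$ with $x \neq y$ is just an edge. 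I therefore assume $\ell \geq 3$.

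\textbf{Palette construction.} I add a set $Z$ of $q(\ell-1)$ fresh vertices and include every $\ell$-subset of $Z$ as a hyperedge. This has only $O(1)$ size for fixed $q$ and $\ell$. In any proper colouring of the hypergraph, every colour class inside $Z$ has at most $\ell-1$ vertices. Since $|Z| = q(\ell-1)$, each of the $q$ colours must occur on exactly $\ell-1$ vertices of $Z$. Conversely, any colouring of $Z$ with each colour used exactly $\ell-1$ times is proper on these hyperedges.

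\textbf{Simulating the constraints.} For each edge $\{u,v\} \in E$, I add all hyperedges $\{u,v\} \cup S$ with $S \subseteq Z$ and $|S|=\ell-2$. There are $O(1)$ such hyperedges per edge. For each tuple in $\calF$, let $X$ be its set of distinct vertices and $s=|X|$.
\begin{itemize}
\item If $s=1$, the tuple is necessarily uniformly rainbow, so the transformation outputs a fixed $\NO$ instance (for example, a single hyperedge on $\ell$ vertices together with all hyperedges obtained by replacing... or any small non-colourable $\ell$-uniform hypergraph of constant size, such as the complete $\ell$-uniform hypergraph on $q(\ell-1)+1$ vertices).
\item If $s \geq 2$, I add all hyperedges $X \cup S$ with $S \subseteq Z$ and $|S|=\ell-s$; when $s=\ell$ this is just the hyperedge $X$ itself.
\end{itemize}
The resulting hypergraph has $n+q(\ell-1)$ vertices, so the transformation is linear-parameter, and it is clearly computable in polynomial time.

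\textbf{Correctness.} Suppose $c$ is a valid colouring for $(G,\calF)$. Extend it to $Z$ by using each colour exactly $\ell-1$ times. A hyperedge $\{u,v\} \cup S$ coming from an edge of $G$ contains two differently coloured vertices $u$ and $v$, so it is not monochromatic. A hyperedge $X \cup S$ coming from a tuple in $\calF$ is not monochromatic because $X$ itself is not monochromatic.

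Conversely, take a proper colouring of the hypergraph and restrict it to $V$. If $c(u)=c(v)=i$ for some edge $\{u,v\}$, then colour $i$ appears $\ell-1 \geq \ell-2$ times in $Z$. Choosing $S$ inside that colour class gives a monochromatic hyperedge $\{u,v\} \cup S$, a contradiction. In the same way, if some $X$ were monochromatic of colour $i$, then $\ell-s \leq \ell-2 \leq \ell-1$ vertices of colour $i$ can be found in $Z$ to complete $X$ to a monochromatic hyperedge. Hence the restriction is a valid colouring for $(G,\calF)$.

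\textbf{Main obstacle.} The step I expect to be delicate is enforcing that every colour is available inside $Z$ in sufficient multiplicity using only non-monochromatic constraints. This is handled by the cardinality argument above: at most $\ell-1$ vertices per colour, together with $|Z| = q(\ell-1)$, forces exactly $\ell-1$ vertices of every colour. Degenerate tuples with repeated vertices are the only other point requiring care, and the padding with subsets of $Z$ and the trivial $\NO$ output handle them.
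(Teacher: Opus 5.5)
Your proposal is correct and is essentially the paper's proof. It uses the same palette $Z$ of $(\ell-1)q$ vertices carrying the complete $\ell$-uniform hypergraph, pads every constraint of size less than $\ell$ (graph edges and tuple supports) with all subsets of $Z$ of the missing size, and relies on the same counting argument that forces each colour to occur exactly $\ell-1$ times in $Z$. The only differences are cosmetic: setting aside $\ell=2$ is unnecessary, and a tuple with a single distinct vertex $v$ needs no separate fixed $\NO$ output, because padding $\{v\}$ with $\ell-1$ palette vertices already makes the hypergraph non-colourable, which is how the paper handles this case uniformly.
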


\begin{proof}
Fix integers $\ell \geq 2$ and $q \geq 2$, and consider an instance of the $(1,\ell,q)$-\URFC problem, consisting of a graph $G=(V,E)$ on $n$ vertices and a collection $\calF$ of $\ell$-tuples of singletons of vertices from $V$. We describe the desired transformation in two steps. In the first, we transform the instance $(G,\calF)$ into an instance $H$ of the $q$-\Col problem on hypergraphs with edges of size at most $\ell$, and in the second, we adapt it to a hypergraph $H'$ whose edges have size precisely $\ell$.

Let $H=(V,E')$ denote the hypergraph defined by
\[E' = E \cup \{ \cup_{j \in [\ell]}{F_j} \mid (F_1, \ldots, F_\ell) \in \calF\}.\]
Namely, the edge set of $H$ consists of the edges of $G$ as well as the sets of vertices that lie in the singletons of each $\ell$-tuple of $\calF$. Observe that a tuple $(F_1, \ldots, F_\ell) \in \calF$ is uniformly rainbow with respect to a given coloring of $V$ if and only if $\cup_{j \in [\ell]}{F_j}$ is monochromatic. Therefore, the mapping that assigns an input $(G,\calF)$ to its associated hypergraph $H$, which can clearly be implemented in polynomial time, forms a transformation from the $(1,\ell,q)$-\URFC problem into the $q$-\Col problem on hypergraphs with edges of size at most $\ell$.

Next, let $H'$ be the hypergraph obtained from $H$ as follows. Add to $H$ a copy of the complete $\ell$-uniform hypergraph on a set $Z$ of $(\ell-1)q$ vertices, and for each edge $e$ in $H$ of size $|e| < \ell$, remove $e$ from the edge set and replace it with all edges of the form $e \cup S$ for subsets $S \subseteq Z$ of size $|S|=\ell-|e|$. By definition, the hypergraph $H'$ is $\ell$-uniform, and it can clearly be constructed in polynomial time. We claim that $H$ admits a proper coloring with $q$ colors if and only if $H'$ does. Indeed, suppose that there exists a proper coloring $c:V \to [q]$ of $H$, and extend it to $H'$ by assigning each of the $q$ colors to $\ell-1$ vertices in $Z$. Such a coloring of $H'$ is proper, because no $\ell$ vertices of $Z$ receive the same color, and because every edge of $H'$ that includes a vertex of $V$ contains an edge of $H$ and is thus not monochromatic. For the other direction, suppose that there exists a proper coloring of $H'$ with $q$ colors. This coloring must assign each of the $q$ colors to precisely $\ell-1$ vertices among the $(\ell-1)q$ vertices of $Z$, as otherwise, some edge in the complete $\ell$-uniform hypergraph on $Z$ that was added to $H$ would be monochromatic. We claim that the restriction of the given coloring of $H'$ to the vertices of $V$ is a proper coloring of $H$. Indeed, suppose for the sake of contradiction that some edge $e$ of $H$ is monochromatic. Since $\ell-1$ vertices in $Z$ are assigned the color of the vertices of $e$, there exists a (possibly empty) set $S \subseteq Z$ for which the edge $e \cup S$ in $H'$ is monochromatic, yielding a contradiction. This completes the proof.
\end{proof}

We are ready to derive Theorem~\ref{thm:Hyper}.

\begin{proof}[ of Theorem~\ref{thm:Hyper}]
Fix integers $\ell \geq 2$ and $q \geq 3$ and a real $\eps >0$. By Definition~\ref{def:eta}, we have $\eta(1,\ell,q)= \ell$. By Theorem~\ref{thm:Lower-URFC}, applied with $d=1$, the $(1,\ell,q)$-\URFC problem parameterized by the number of vertices $n$ does not admit a compression of size $O(n^{\ell-\eps})$ unless $\NP \subseteq \coNPpoly$. By Lemma~\ref{lemma:URFC<qCol}, there is a linear-parameter transformation from this problem to the $q$-\Col problem on $\ell$-uniform hypergraphs, parameterized by the number of vertices $n$. Therefore, by Lemma~\ref{lemma:composition}, if the latter admits a compression of size $O(n^{\ell-\eps})$, then so does $(1,\ell,q)$-\URFC, implying that $\NP \subseteq \coNPpoly$.
\end{proof}

\subsection{Coloring Graphs Close to Disjoint Union of Cliques}

The next application concerns graph coloring problems parameterized by the vertex-deletion distance to disjoint union of cliques. For positive integers $q$ and $t$, consider the parameterized \mbox{$q$-\Col} problem on $t \lClique+k\mathrm{v}$ graphs. Recall that its input consists of a graph $G=(V,E)$ and a set $X \subseteq V$ of size $k$ such that $G \setminus X \in t \lClique$, and the task is to decide whether $G$ admits a proper coloring with $q$ colors, where $k$ serves as the parameter. Note that given an instance $(G,X)$ of the problem, the partition of $G \setminus X$ into disjoint cliques can be found in polynomial time. We assume that $q \geq 3$, as otherwise the $q$-\Col problem is solvable in polynomial time, and that $t \leq q$, since a clique of size larger than $q$ cannot be properly colored with $q$ colors.

The following lemma, inspired by~\cite{Schalken20}, provides a useful connection between coloring disjoint unions of cliques and uniformly rainbow tuples.

\begin{lemma}\label{lemma:cliques_vs_F}
For positive integers $q$ and $t \leq q$, let $G=(V,E)$ be a graph, and let $X \subseteq V$ be a set such that $G \setminus X \in t\lClique$.
Let $\calF$ be the collection of all tuples $(F_1, \ldots, F_{\ell})$ of $(q-\ell+1)$-subsets of $X$ for an integer $\ell \in [t]$ such that there exists a (not necessarily maximal) clique $\{v_1, \ldots, v_{\ell}\}$ in $G \setminus X$ satisfying $F_i \subseteq N_G(v_i)$ for all $i \in [\ell]$.
Then, for every proper coloring $c:X \to [q]$ of $G[X]$, $c$ can be extended to a proper coloring of $G$ with $q$ colors if and only if no tuple in $\calF$ is uniformly rainbow with respect to $c$.
\end{lemma}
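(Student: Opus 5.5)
The proof splits into the two directions. Since $G \setminus X$ is a disjoint union of cliques with no edges between distinct cliques, extending a proper coloring $c$ of $G[X]$ to $G$ decomposes into independent tasks, one per clique $K$ of $G \setminus X$. For each vertex $v \in K$, let $A(v) = [q] \setminus c(N_G(v) \cap X)$ be the set of colors still available to $v$. Then $c$ extends if and only if, for every clique $K$, there is a system of distinct representatives for $\{A(v)\}_{v \in K}$, i.e., an injective choice $v \mapsto$ color in $A(v)$. By Hall's theorem, this holds if and only if every subclique $S \subseteq K$ satisfies $|\bigcup_{v \in S} A(v)| \geq |S|$.

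\textbf{Necessity.} Suppose some tuple $(F_1,\ldots,F_\ell) \in \calF$, with witnessing clique $\{v_1,\ldots,v_\ell\}$, is uniformly rainbow. Then all $F_i$ receive the same set $T$ of $q-\ell+1$ colors. Since $F_i \subseteq N_G(v_i)$, each $A(v_i) \subseteq [q]\setminus T$, a set of size $\ell-1$. The $\ell$ pairwise adjacent vertices $v_1,\ldots,v_\ell$ would therefore need distinct colors from a set of size $\ell-1$, which is impossible, so $c$ does not extend.

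\textbf{Sufficiency.} Suppose $c$ does not extend. Then some clique $K$ contains a Hall violator $S=\{v_1,\ldots,v_\ell\}$, with $\ell \leq |K| \leq t$, such that $U = \bigcup_i A(v_i)$ has $|U| \leq \ell-1$. Let $T$ be any set with $[q]\setminus U \subseteq T$ and $|T| = q-\ell+1$; this is possible because $|[q]\setminus U| \geq q-\ell+1$ and $\ell \leq t \leq q$ makes $q-\ell+1 \geq 1$. For each $i$, every color outside $A(v_i)$, and in particular every color of $T \subseteq [q]\setminus A(v_i)$, appears on some neighbor of $v_i$ in $X$. So we can pick $F_i \subseteq N_G(v_i)\cap X$ containing exactly one vertex of each color in $T$. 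Each $F_i$ is then a $(q-\ell+1)$-subset of $X$ that is rainbow with color set $T$. The tuple $(F_1,\ldots,F_\ell)$ lies in $\calF$ via the clique $S$, and it is uniformly rainbow, which is a contradiction.

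The main care point is the sufficiency step. The violator may have $|U|$ strictly smaller than $\ell-1$, so $T$ must be padded to size exactly $q-\ell+1$, and this is the reason the definition of $\calF$ allows non-maximal cliques. The padding works because a color outside $U$ is blocked at every $v_i$, hence is present among the $X$-neighbors of each $v_i$, which is exactly what is needed to choose rainbow sets $F_i$ with a common color set.
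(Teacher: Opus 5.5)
Your proof is correct and is essentially the paper's argument: Hall's theorem is applied to each clique of $G \setminus X$, and a violating set $S=\{v_1,\ldots,v_\ell\}$ leaves at least $q-\ell+1$ colors blocked at every $v_i$, so choosing $q-\ell+1$ of them gives a uniformly rainbow tuple in $\calF$ witnessed by the subclique $S$. One slip to fix: the inclusion defining $T$ is reversed---it should be $T \subseteq [q]\setminus U$ with $|T| = q-\ell+1$, which is what your size justification and the later step $T \subseteq [q]\setminus A(v_i)$ actually use (the superset you wrote does not exist when $|U|<\ell-1$), so $T$ is obtained by discarding colors rather than padding, and non-maximal cliques are needed in $\calF$ because $S$ may be a proper subset of $K$, not because of the size of $U$.
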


\begin{proof}
Consider a graph $G$, a set $X$, and a collection $\calF$ as in the lemma. Let $c:X \to [q]$ be a proper coloring of $G[X]$.
Suppose first that $c$ can be properly extended to $G$. To show that no tuple in $\calF$ is uniformly rainbow with respect to $c$, consider an integer $\ell \in [t]$, a clique $\{v_1, \ldots, v_{\ell}\}$ in $G \setminus X$, and a tuple $(F_1, \ldots, F_{\ell}) \in \calF$ of $(q-\ell+1)$-subsets of $X$ satisfying $F_i \subseteq N_G(v_i)$ for all $i \in [\ell]$. Suppose, towards a contradiction, that $c$ assigns to all sets in the tuple the same set of $q-\ell+1$ distinct colors, which forces any proper extension of $c$ to $G$ to assign to all the vertices of the clique $\{v_1, \ldots, v_{\ell}\}$ colors from the remaining $\ell-1$ colors. Since this clique has $\ell$ vertices with fewer than $\ell$ available colors, $c$ cannot be properly extended to $G$, a contradiction.

Next, suppose that no tuple in $\calF$ is uniformly rainbow with respect to $c$. We extend $c$ to a proper coloring of $G$ as follows. Recall that the graph $G \setminus X$ is a disjoint union of cliques, each of size at most $t$. We show that $c$ can be properly extended to each of them. Consider a maximal clique $C$ in $G \setminus X$, and note that $|C| \leq t$. Let $H$ denote the bipartite graph that has the vertices of $C$ on one side and the color set $[q]$ on the other. We connect each vertex $v \in C$ to the vertices representing its available colors, i.e., those of $[q] \setminus \{c(u) \mid u \in N_G(v) \cap X\}$. Our goal is to show that $H$ admits a matching that saturates the vertices of $C$, allowing us to assign a distinct available color to each vertex in $C$. By Hall's theorem, it suffices to verify that for every set $S \subseteq C$, the total number of neighbors in $H$ of the vertices of $S$ is at least $|S|$. Suppose for contradiction that there exists a set $S = \{v_1, \ldots, v_{\ell}\} \subseteq C$ for some $\ell \in [t]$ that does not satisfy this property, meaning that the total number of neighbors in $H$ of the vertices of $S$ is at most $\ell-1$. It follows that there are $q-\ell+1$ colors that are not available to each of the vertices in $S$, and thus, for each $i \in [\ell]$, there exists a $(q-\ell+1)$-subset $F_i$ of $X$ with $F_i \subseteq N_G(v_i)$ whose vertices are assigned those $q-\ell+1$ colors by $c$. This implies that the tuple $(F_1, \ldots, F_{\ell})$ lies in $\calF$ and is uniformly rainbow with respect to $c$, contradicting our assumption and completing the proof.
\end{proof}

The next lemma relates the $q$-\Col problem on $t \lClique+k\mathrm{v}$ graphs to the $(I,q)$-\GURFC problem for a suitable set $I$ (see Definition~\ref{def:GURFC}).

\begin{lemma}\label{lemma:CliqesVsGURFC}
For positive integers $q$ and $t \leq q$, let $I = \{(q-\ell+1,\ell) \mid \ell \in [t]\}$.
Then there exists a linear-parameter transformation from the $(I,q)$-\GURFC problem parameterized by the number of vertices into the $q$-\Col problem on $t \lClique+k\mathrm{v}$ graphs.
\end{lemma}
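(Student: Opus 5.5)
Given an instance $(G,\calF)$ of $(I,q)$-\GURFC with $G=(V,E)$ on $n$ vertices, I will build a graph $G'$ whose vertex set contains $X=V$. I will keep all edges of $G$ inside $X$. For every constraint I will attach a small private clique outside $X$, and then rely on Lemma~\ref{lemma:cliques_vs_F} to identify extendability with the absence of uniformly rainbow tuples. The parameter is $|X|=n$, so the transformation is linear-parameter, even though $G'$ may have polynomially many vertices outside $X$.

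Concretely, for each tuple $F=(F_1,\ldots,F_\ell)\in\calF_i$ with $(d_i,\ell_i)=(q-\ell+1,\ell)$, I add a fresh clique $\{v^F_1,\ldots,v^F_\ell\}$ of size $\ell\le t$. Each $v^F_j$ is joined to exactly the vertices of $F_j$. Distinct tuples get vertex-disjoint cliques, with no edges between cliques. Hence $G'\setminus X\in t\lClique$, and the construction is clearly polynomial-time.

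For correctness, I will show that a proper coloring $c:X\to[q]$ of $G=G'[X]$ extends to $G'$ if and only if no tuple of $\calF$ is uniformly rainbow. If some $F\in\calF$ is uniformly rainbow, the $\ell$ clique vertices $v^F_j$ all see the same $q-\ell+1$ colors, so only $\ell-1$ colors remain for an $\ell$-clique and no extension exists. Conversely, suppose no tuple of $\calF$ is uniformly rainbow. I apply Lemma~\ref{lemma:cliques_vs_F} to $(G',X)$; it suffices to show that every tuple in the collection $\calF^*$ defined there is not uniformly rainbow. A tuple of $\calF^*$ of length $\ell'$ comes from a subclique $\{v^F_{j_1},\ldots,v^F_{j_{\ell'}}\}$ of some gadget clique for $F$ of length $\ell\ge\ell'$, together with $(q-\ell'+1)$-subsets $F'_s\subseteq N(v^F_{j_s})\cap X=F_{j_s}$.

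This subclique case is the main obstacle. Since $|F_{j_s}|=q-\ell+1\le q-\ell'+1=|F'_s|$, containment forces $\ell'=\ell$ and $F'_s=F_{j_s}$. The tuple is therefore a reordering of $F$ itself, and uniform rainbowness does not depend on the order of the sets, so it is not uniformly rainbow. The lemma then gives the extension.

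Combining both directions, $G'$ is $q$-colorable if and only if $(G,\calF)$ is a $\YES$ instance, which proves Lemma~\ref{lemma:CliqesVsGURFC}.
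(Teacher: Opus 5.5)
Your proof is correct and uses the paper's construction: one private clique of size $\ell$ per constraint, with $v_j$ joined to exactly $F_j$, $X=V$, and correctness via Lemma~\ref{lemma:cliques_vs_F}. Your explicit check that proper subcliques of a gadget contribute no tuples (since $q-\ell'+1 \le q-\ell+1$ forces $\ell'=\ell$), and that the remaining tuples are only reorderings of $F$, fills in detail the paper leaves to its remark that the two collections ``precisely coincide.''
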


\begin{proof}
For positive integers $q$ and $t$ with $t \leq q$, consider an instance of $(I,q)$-\GURFC, namely, a graph $G=(V,E)$ and a collection $\calF$ of $\ell$-tuples of $(q-\ell+1)$-subsets of $V$ for integers $\ell \in [t]$. We define an algorithm that, given such an input, constructs a graph $G'$ obtained from $G$ by adding, for each $\ell$-tuple $(F_1, \ldots, F_{\ell}) \in \calF$ of $(q-\ell+1)$-subsets of $V$, a clique $\{v_1, \ldots, v_{\ell}\}$ of size $\ell$ and connecting each $v_i$ to all vertices in $F_i$. The algorithm returns the graph $G'$ along with the set $X=V$.

The above algorithm can clearly be implemented in polynomial time. The output $(G',X)$ is valid, because $G' \setminus X$ is a disjoint union of cliques of size at most $t$, and the transformation is linear-parameter, because the number of vertices in $X$ is $|V|$. For correctness, we shall show that $(G,\calF)$ is a $\YES$ instance of $(I,q)$-\GURFC if and only if $G'$ is a $\YES$ instance of $q$-\Col. To this end, we apply Lemma~\ref{lemma:cliques_vs_F} to the graph $G'$ and the set $X$. Observe that the given collection $\calF$ precisely coincides with the one from the lemma, implying that any proper coloring $c$ of $G$ can be properly extended to $G'$ if and only if no tuple in $\calF$ is uniformly rainbow with respect to $c$. Thus, there exists a proper coloring of $G$ with $q$ colors for which no tuple in $\calF$ is uniformly rainbow if and only if there exists a proper coloring of $G'$ with $q$ colors, as required.
\end{proof}

Equipped with Lemma~\ref{lemma:CliqesVsGURFC}, we derive near-optimal upper and lower bounds on the kernel complexity of the $q$-\Col problem on $t \lClique+k\mathrm{v}$ graphs for all admissible values of $q$ and $t$, confirming Theorem~\ref{thm:t-Cliques} (see Remark~\ref{remark:encoding_disj}).

\begin{theorem}\label{thm:LowerDisjClique_ell}
For positive integers $q \geq 3$ and $t \leq q$, set
\[
r = r(q,t) =
\begin{cases}
q - 1, & t = 1, \\[1mm]
2q - 3, & t = 2 \text{~~~or~~~}t=q=3, \\[1mm]
\displaystyle (q - t + 1)t, & 3 \le t < \frac{q+1}{2}, \\[1mm]
\displaystyle \left\lfloor (q + 1)^2/4 \right\rfloor, & \text{otherwise.}
\end{cases}
\]
The $q$-\Col problem on $t\lClique+k\mathrm{v}$ graphs admits a kernel with $O(k^r)$ vertices, encodable in $O(k^r \cdot \log k)$ bits, while for any real $\eps >0$, it does not admit a compression of size $O(k^{r-\eps})$ unless $\NP \subseteq \coNPpoly$.
\end{theorem}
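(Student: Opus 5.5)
The plan splits into three parts: identify $r(q,t)$ with $\max_{\ell\in[t]}\eta(q-\ell+1,\ell,q)$, prove the upper bound through Theorem~\ref{thm:GURFC}, and prove the lower bound through Lemma~\ref{lemma:CliqesVsGURFC}.

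\textbf{Computing $\eta_\ell$.} Set $\eta_\ell=\eta(q-\ell+1,\ell,q)$ for $\ell\in[t]$ and evaluate it with Definition~\ref{def:eta}, writing $d=q-\ell+1$.
\begin{itemize}
\item For $\ell=1$ we have $d=q\ge 3$, so $\eta_1=(q-1)\cdot 1=q-1$.
\item For $\ell=2$ we have $q=d+1$ and $(d,\ell)\neq(1,2)$ because $q\ge3$, so $\eta_2=2(q-1)-1=2q-3$.
\item For $\ell\ge3$ we have $q=d+\ell-1\ge d+2$, so $\eta_\ell=(q-\ell+1)\ell$.
\end{itemize}
Taking the maximum over $\ell\in[t]$:
\begin{itemize}
\item For $t=1$ this gives $q-1$.
\item For $t=2$ it gives $2q-3\ge q-1$.
\item For $t\ge3$, the function $\ell\mapsto(q-\ell+1)\ell$ is a concave parabola peaking at $\ell=(q+1)/2$. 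So the maximum over $3\le\ell\le t$ is $(q-t+1)t$ when $t<(q+1)/2$, and $\lfloor (q+1)^2/4\rfloor$ otherwise.
\item It remains to compare this with $2q-3$. We have $3(q-2)\ge 2q-3$ iff $q\ge3$, and for $t=q=3$ the value $\eta_3=3$ equals $2q-3$.
\end{itemize}
This case check recovers the stated formula for $r(q,t)$ exactly. It is pure bookkeeping, but it is where the special case $t=q=3$ must be handled carefully.

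\textbf{Lower bound.} Lemma~\ref{lemma:CliqesVsGURFC} gives a linear-parameter transformation from $(I,q)$-\GURFC with $I=\{(q-\ell+1,\ell)\mid \ell\in[t]\}$ into $q$-\Col on $t\lClique+k\mathrm{v}$ graphs. Take $\ell^*$ attaining the maximum. The $(q-\ell^*+1,\ell^*,q)$-\URFC problem is the special case of $(I,q)$-\GURFC in which only one type of constraint appears. Theorem~\ref{thm:Lower-URFC} rules out compressions of size $O(n^{r-\eps})$ for it. Composing via Lemma~\ref{lemma:composition} transfers this to $q$-\Col on $t\lClique+k\mathrm{v}$ graphs.

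\textbf{Upper bound.} This is the main step. Given $(G,X)$ with $|X|=k$, form the collection $\calF$ from Lemma~\ref{lemma:cliques_vs_F}, restricted to tuples of $(q-\ell+1)$-subsets of $X$. Since $t$ and $q$ are fixed, $\calF$ has polynomial size. Apply Theorem~\ref{thm:GURFC}, which is valid because every $\eta_\ell\ge2$, to the instance $(G[X],\calF)$. This yields $\calF'\subseteq\calF$ with $|\calF'|=O(k^r)$ and the same solution set.

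Next, build $G'$ from $G[X]$ the way Lemma~\ref{lemma:CliqesVsGURFC} does: attach to each tuple in $\calF'$ a fresh clique of size $\ell$, with $v_i$ adjacent to $F_i$. To argue equivalence, apply Lemma~\ref{lemma:cliques_vs_F} in both directions. A proper coloring of $G[X]$ extends to $G$ iff no tuple of $\calF$ is uniformly rainbow, which holds iff none of $\calF'$ is, which holds iff the coloring extends to $G'$.

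One subtlety is that Lemma~\ref{lemma:cliques_vs_F} applied to $G'$ produces the collection generated by sub-cliques of the new cliques. These tuples are sub-tuples of tuples in $\calF'$ with larger sets. I expect the cleanest fix is not to invoke that lemma for $G'$ at all. Instead, check directly with Hall's theorem that a uniformly rainbow sub-tuple in a new clique would already be forced by the coloring of $X$, and that such a sub-tuple lies in $\calF$. Since the solution sets of $\calF$ and $\calF'$ agree, it is therefore excluded.

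The resulting graph $G'$ has $k+O(k^r)$ vertices. Each new vertex has $O(1)$ neighbors, named with $O(\log k)$ bits, which gives the stated $O(k^r\log k)$ encoding.
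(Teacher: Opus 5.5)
Your proposal is correct and is essentially the paper's proof: you identify $r$ with $\max_{\ell\in[t]}\eta(q-\ell+1,\ell,q)$, obtain the lower bound through Lemma~\ref{lemma:CliqesVsGURFC} restricted to a single pair $(q-\ell^*+1,\ell^*)$, and obtain the upper bound by kernelizing $(G[X],\calF)$ with Theorem~\ref{thm:GURFC} and re-attaching one clique per tuple of $\calF'$. The one point of difference is that the subtlety you raise is vacuous: a sub-clique of size $s<\ell$ of a new clique would need $(q-s+1)$-subsets inside neighborhoods of size only $q-\ell+1$, so it generates no tuples, and the paper therefore applies Lemma~\ref{lemma:cliques_vs_F} to $G'$ directly, whose associated collection is exactly $\calF'$ up to reordering within tuples; your Hall-based workaround is valid but unnecessary.
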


\begin{proof}
For positive integers $q \geq 3$ and $t \leq q$, define $I=\{(q-\ell+1,\ell) \mid \ell \in [t]\}$ and set \mbox{$r = \max_{\ell \in [t]} {\eta(q-\ell+1,\ell,q)}$}, with $\eta$ as in Definition~\ref{def:eta}. One can verify that this definition of $r$ coincides with the one stated in the theorem. Since $q \geq 3$, we have $\eta(q-\ell+1,\ell,q) \geq 2$ for all $\ell \in [t]$, and thus $r \geq 2$.

We begin with the upper bound.
Consider the algorithm that given an instance of the $q$-\Col problem on $t\lClique+k\mathrm{v}$ graphs, i.e., a graph $G=(V,E)$ and a set $X \subseteq V$ of size $k$ such that $G \setminus X \in t \lClique$, performs the following steps.
\begin{enumerate}
  \item\label{itm:1_alg_cl} Construct the collection $\calF$ associated with $G$ and $X$ in Lemma~\ref{lemma:cliques_vs_F}, namely, the collection of all tuples $(F_1, \ldots, F_{\ell})$ of $(q-\ell+1)$-subsets of $X$ with $\ell \in [t]$ such that there exists a clique $\{v_1, \ldots, v_{\ell}\}$ in $G \setminus X$ satisfying $F_i \subseteq N_G(v_i)$ for all $i \in [\ell]$.
  \item\label{itm:2_alg_cl} Apply to $(G[X],\calF)$ the kernel of the $(I,q)$-\GURFC problem parameterized by the number of vertices $k$, given in Theorem~\ref{thm:GURFC}. This yields an equivalent instance $(G[X],\calF')$ with $|\calF'| \leq O(k^r)$.
  \item\label{itm:3_alg_cl} Construct the graph $G'$ obtained from $G[X]$ by adding, for each $\ell$-tuple $(F_1, \ldots, F_{\ell}) \in \calF'$ of $(q-\ell+1)$-subsets of $X$ with $\ell \in [t]$, a clique $\{v_1, \ldots, v_{\ell}\}$ of size $\ell$ and connecting each vertex $v_i$ to all vertices of $F_i$.
  \item Return the pair $(G',X)$.
\end{enumerate}

We turn to the analysis of the algorithm. Consider an input of the $q$-\Col problem on $t\lClique+k\mathrm{v}$ graphs, i.e., a graph $G=(V,E)$ and a set $X \subseteq V$ with \mbox{$G \setminus X \in t \lClique$} and $|X|=k$. The graph $G[X]$ has $k$ vertices, and the graph $G'$ is obtained from $G[X]$ by adding $|\calF'|$ cliques, each of size at most $t$, whose vertices have a constant number of neighbors in $X$.
It follows that $G' \setminus X \in t \lClique$, hence the pair $(G',X)$ is a valid output of the kernel. It further follows that the number of vertices in $G'$ is $k+O(k^r) = O(k^r)$, as desired. Furthermore, each clique in $G' \setminus X$ can be represented by the lists of the neighbors in $X$ of its vertices, which can all be encoded in $O(\log k)$ bits. Thus, to represent the graph $G'$, it suffices to use $O(k^2)$ bits for the adjacencies in $G'[X]$ and additional $O(k^r \cdot \log k)$ bits for the cliques in $G' \setminus X$. Since $r \geq 2$, this implies that the kernel is encodable in $O(k^r \cdot \log k)$ bits, as claimed.

For correctness, we shall prove that $G$ admits a proper coloring with $q$ colors if and only if $G'$ does. By Lemma~\ref{lemma:cliques_vs_F}, a proper coloring of $G[X]$ with $q$ colors can be properly extended to $G$ if and only if it has no uniformly rainbow tuple in the collection $\calF$ from Item~\ref{itm:1_alg_cl} of the algorithm. Therefore, $G$ admits a proper coloring with $q$ colors if and only if $(G[X],\calF)$ is a $\YES$ instance of the $(I,q)$-\GURFC problem, which by Theorem~\ref{thm:GURFC}, is equivalent to the instance $(G[X],\calF')$, produced in Item~\ref{itm:2_alg_cl} of the algorithm. Now, apply Lemma~\ref{lemma:cliques_vs_F} to the graph $G'$ from Item~\ref{itm:3_alg_cl} of the algorithm and the set $X$, and observe that the collection of tuples defined in the lemma is precisely $\calF'$. This shows that $(G[X],\calF')$ is a $\YES$ instance of the $(I,q)$-\GURFC problem if and only if $G'$ admits a proper coloring with $q$ colors, and we are done.

We turn to the lower bound. By the definition of $r$, there exists an integer $\ell \in [t]$ such that $r = \eta(q-\ell+1,\ell,q)$. By Lemma~\ref{lemma:CliqesVsGURFC}, there exists a linear-parameter transformation from the $(I,q)$-\GURFC problem parameterized by the number of vertices to the $q$-\Col problem on $t \lClique+k\mathrm{v}$ graphs. Therefore, if for some $\eps >0$, the $q$-\Col problem on $t\lClique+k\mathrm{v}$ graphs admits a compression of size $O(k^{r-\eps})$, then by Lemma~\ref{lemma:composition}, the $(I,q)$-\GURFC problem parameterized by the number of vertices $n$ admits a compression of size $O(n^{r-\eps})$. Since $(q-\ell+1,\ell) \in I$, this also yields a compression of the same size for the $(q-\ell+1,\ell,q)$-\URFC problem parameterized by the number of vertices. By Theorem~\ref{thm:Lower-URFC}, using $r = \eta(q-\ell+1,\ell,q)$, this implies that $\NP \subseteq \coNPpoly$, so the proof is complete.
\end{proof}

\begin{remark}\label{remark:encoding_disj}
We note that the logarithmic factor in the encoding size stated in Theorem~\ref{thm:LowerDisjClique_ell} can be avoided whenever $q \geq 4$ and $t \geq 3$. Indeed, under these conditions, it holds that $r = \max_{\ell \in [t]}{(q-\ell+1)\ell}$. This implies that the number of potential cliques in the graph $G'$ constructed in Item~\ref{itm:3_alg_cl} of the algorithm from Theorem~\ref{thm:LowerDisjClique_ell} is bounded by $O(k^r)$. To represent them, it suffices to use a single bit per each potential clique, indicating whether it lies in $G'$ or not.
\end{remark}

Our final corollary addresses the $q$-\Col problem parameterized by the vertex-deletion distance of the input graph to a disjoint union of cliques, with no specific bound on their size. It resolves a question posed in~\cite{Schalken20}.

\begin{corollary}\label{cor:q-Col-Disj}
For an integer $q \geq 3$, set $r$ to be $3$ if $q=3$, and $\lfloor (q + 1)^2/4 \rfloor$ otherwise.
The $q$-\Col problem on $\Clique +k\mathrm{v}$ graphs admits a kernel with $O(k^r)$ vertices, encodable in $O(k^r \cdot \log k)$ bits if $q=3$ and in $O(k^r)$ bits otherwise, while for any real $\eps >0$, it does not admit a compression of size $O(k^{r-\eps})$ unless $\NP \subseteq \coNPpoly$.
\end{corollary}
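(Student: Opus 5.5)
The plan is to reduce the corollary to Theorem~\ref{thm:LowerDisjClique_ell} with $t=q$, after observing that $q$-\Col on $\Clique+k\mathrm{v}$ graphs is essentially the same problem as $q$-\Col on $q\lClique+k\mathrm{v}$ graphs.

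\textbf{Upper bound.} Given an instance $(G,X)$ with $G \setminus X \in \Clique$, I would first compute the partition of $G\setminus X$ into cliques in polynomial time. If some clique has more than $q$ vertices, $G$ is not $q$-colorable, and the kernel outputs a constant-size $\NO$ instance, say a $(q+1)$-clique with $X=\emptyset$. Otherwise $G\setminus X \in q\lClique$, and I run the kernel of Theorem~\ref{thm:LowerDisjClique_ell} with $t=q$. Its output $(G',X)$ has $G'\setminus X \in q\lClique \subseteq \Clique$, so it is a valid instance of the problem here.

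\textbf{Lower bound.} Every instance of $q$-\Col on $q\lClique+k\mathrm{v}$ graphs is already an instance of $q$-\Col on $\Clique+k\mathrm{v}$ graphs. The identity map is therefore a linear-parameter transformation, and by Lemma~\ref{lemma:composition} the lower bound of Theorem~\ref{thm:LowerDisjClique_ell} for $t=q$ carries over.

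\textbf{Evaluating $r(q,q)$.}
\begin{itemize}
\item For $q=3$: the case $t=q=3$ of the formula gives $2q-3=3$.
\item For $q\geq 4$: the condition $t=q<(q+1)/2$ fails and $t\neq 2$, so the ``otherwise'' case applies and gives $\lfloor (q+1)^2/4\rfloor$.
\end{itemize}
As a sanity check on the second case, $\max_{\ell}(q-\ell+1)\ell$ is attained at $\ell\approx (q+1)/2 \leq q$. The $\ell$ attaining it lies in the regime where $\eta(q-\ell+1,\ell,q)=d\ell$ (here $d=q-\ell+1$, so $q\geq d+2$ means $\ell\geq 3$, which holds for $q\geq 5$). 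For $q=4$, the maximum $\lfloor 25/4\rfloor=6$ is attained at $\ell=2$ and $\ell=3$, and $\ell=3$ gives $d=2=q-2$, so it also falls in the $d\ell$ case.

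\textbf{Encoding size.} For $q=3$ the bound $O(k^r\log k)$ comes directly from Theorem~\ref{thm:LowerDisjClique_ell}. For $q\geq 4$ with $t=q\geq 3$, Remark~\ref{remark:encoding_disj} removes the logarithmic factor and gives $O(k^r)$ bits.

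The main obstacle is checking that the case analysis in the formula for $r(q,t)$ really gives these two values at $t=q$. This relies on the stated formula from Theorem~\ref{thm:LowerDisjClique_ell}. If needed, I would verify it directly from $\max_{\ell\in[q]}\eta(q-\ell+1,\ell,q)$: the value at $\ell=1$ is $q-1$, the boundary cases $\ell=q$ (where $d=1$, $q=d+\ldots$) and $\ell=q-1$ (where $q=d+1$... i.e.\ $d=2$, $q=d+(q-2)$) contribute at most the quadratic maximum, and the bulk $\ell$ with $q\geq d+2$ contribute $(q-\ell+1)\ell$.
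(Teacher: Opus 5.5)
Your proposal is correct and follows the paper's proof: output a fixed $\NO$ instance if $G\setminus X$ has a clique with more than $q$ vertices, otherwise run the kernel of Theorem~\ref{thm:LowerDisjClique_ell} with $t=q$ (using Remark~\ref{remark:encoding_disj} for the $O(k^r)$-bit encoding when $q\geq 4$), and obtain the lower bound from $q\lClique\subseteq\Clique$. In your optional closing verification you name the wrong exceptional values of $\ell$: $\eta(q-\ell+1,\ell,q)$ differs from $(q-\ell+1)\ell$ only at $\ell=1$ (where $q=d$, giving $q-1$) and $\ell=2$ (where $q=d+1$, giving $2q-3$), not at $\ell=q-1$ or $\ell=q$; this does not affect the argument, since you rely on the theorem's explicit formula for $r(q,t)$, as the paper does.
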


\begin{proof}
Fix an integer $q \geq 3$, and notice that $r$ fits the definition of $r(q,q)$ from Theorem~\ref{thm:LowerDisjClique_ell}.
For the upper bound, consider the algorithm that, given a graph $G=(V,E)$ and a set $X \subseteq V$ such that \mbox{$G \setminus X \in \Clique$}, if $G \setminus X$ contains a clique of size larger than $q$, concludes that $G$ has no proper coloring with $q$ colors and returns an arbitrary $\NO$ instance of fixed size, and otherwise, applies the algorithm from Theorem~\ref{thm:LowerDisjClique_ell} for $t=q$. This gives a kernel with $O(k^r)$ vertices, that can be encoded, by Remark~\ref{remark:encoding_disj}, in $O(k^r \cdot \log k)$ bits if $q=3$ and in $O(k^r)$ bits otherwise. The lower bound follows from Theorem~\ref{thm:LowerDisjClique_ell} applied with $t=q$, using the obvious fact that any graph in $q\lClique$ also belongs to $\Clique$.
\end{proof}

\section{Concluding Remarks}\label{sec:conclude}

This paper studies the kernel complexity of $R$-\CC problems, a class of constraint satisfaction problems that combine the non-inequality relation over a finite domain together with an additional relation $R$. We provide a general conditional lower bound on the compression size of such problems when parameterized by the number of variables and demonstrate its applicability to several concrete parameterized coloring problems. These include the $(d,\ell,q)$-\URFC problems and the $q$-\Col problems on $\ell$-uniform hypergraphs, both parameterized by the number of vertices, as well as the $q$-\Col problems on graphs, when parameterized by the vertex-deletion distance to a disjoint union of cliques of size at most $t$. For all of these problems, and for all admissible values of their relevant constants, we obtain nearly matching upper and lower bounds on the kernel complexity under the assumption $\NP \nsubseteq \coNPpoly$.

It will be interesting to find additional applications of the bounds presented here for determining the kernel complexity of other coloring problems. Natural candidates are the families of $H$-\Col and \ListHCol problems, which concern the existence of a homomorphism from a given graph to a fixed target graph $H$, when parameterized by the number of vertices, by the vertex cover number, or by other structural parameters of interest (see, e.g.,~\cite{JansenP19color,ChenJOPR23,BerkmanH25,PPRW25}). A more ambitious objective would be to characterize, under a plausible complexity assumption, the behavior of the kernel complexity of the $R$-\CC problem for an arbitrary relation $R$.

\bibliographystyle{abbrv}
\bibliography{rainbow}

\end{document}